\documentclass{article}
\usepackage{graphicx}
\usepackage{amsfonts}
\usepackage{amsmath}
\usepackage{amssymb}
\usepackage{url}

\usepackage{dsfont}
\usepackage{indentfirst}
\usepackage{enumerate}
\usepackage{booktabs}
\usepackage{multirow}
\usepackage{float}
\usepackage[section]{placeins}

 \usepackage[colorlinks=true,citecolor=blue,hyperfootnotes=false]{hyperref}
\usepackage{amsthm}
\usepackage{color}
\usepackage{tikz}

\definecolor{DarkGreen}{rgb}{0.2,0.6,0.2}

\definecolor{purple}{rgb}{0.6,0.3,0.8}

\usepackage{natbib}

\def\d{\mathrm{d}}
\def\laweq{\buildrel \mathrm{d} \over =}

\newcommand{\var}{\mathrm{Var}}

\newcommand{\E}{\mathbb{E}} 
\newcommand{\p}{\mathbb{P}} 
\newcommand{\R}{\mathbb{R}}
\newcommand{\N}{\mathbb{N}}

\theoremstyle{plain}
\newtheorem{theorem}{Theorem}

\usepackage{aliascnt}
\newtheorem{lemma}{Lemma}
\newtheorem{proposition}{Proposition}
\theoremstyle{definition}

\newtheorem{example}{Example}

\usepackage[misc]{ifsym}

\theoremstyle{remark}
\newtheorem{remark}{Remark}

\newcommand{\id}{\mathds{1}}

\newcommand{\esssup}{\mathrm{ess\mbox{-}sup}}
\newcommand{\essinf}{\mathrm{ess\mbox{-}inf}}

\usepackage[onehalfspacing]{setspace}

\newcommand{\VaR}{\mathrm{VaR}}
\newcommand{\MMD}{\mathrm{MMD}}

\newcommand{\ES}{\mathrm{ES}}

\newcommand{\bz}{\mathbf{z}}
\newcommand{\bx}{\mathbf{x}}
\newcommand{\AES}{\mathrm{AES}}

\usepackage{footmisc}
\title{Submodular Risk Measures}
\author{Ruodu Wang\thanks{Department of Statistics and Actuarial Science, University of Waterloo,  Canada. \Letter~{\scriptsize\url{wang@uwaterloo.ca}}}
\and Jingcheng Yu\thanks{Department of Statistics and Actuarial Science, University of Waterloo,  Canada. \Letter~{\scriptsize\url{j563yu@uwaterloo.ca}}}
}
\date{\today}

\begin{document}
	\maketitle
	\begin{abstract}
    We study submodularity for law-invariant functionals, with particular attention to convex risk measures. Expected losses are modular, and certainty equivalents are submodular exactly when the loss function is convex. Law-invariant coherent risk measures are submodular exactly when they are coherent distortion risk measures, including Expected Shortfall (ES), and several deviation measures are also submodular. Beyond positive homogeneity, submodularity is restrictive for convex risk measures. We give a complete characterization for shortfall risk measures via the Arrow--Pratt measure of risk aversion, show that optimized certainty equivalents are always submodular, and prove that adjusted Expected Shortfall (AES) is submodular only when it reduces to ES. An empirical illustration for daily US equity returns finds no ES submodularity violations, many Value-at-Risk (VaR) violations, and relatively few AES violations.

    \medskip \par\noindent\textbf{Keywords}: Law-invariance, complementarity, submodularity, shortfall risk measures, optimized certainty equivalent, distortion risk measures.
    \end{abstract}

\section{Introduction}

Submodularity is an important property in mathematics, optimization, economics, and data science, with particular applications in modeling transport costs, dependence, cooperative games, and social production functions.
Without being exhaustive, we refer the reader to \citet{MM04} for submodularity (as well as its counterpart, supermodularity) in decision theory, to \citet{T78} in the context of cost optimization, to \citet{R13} in the context of quantitative risk management, and to \citet{B22} in the context of machine learning and artificial intelligence.

For a lattice $\mathcal{L}$ equipped with the maximum operator $\vee$ and the minimum operator $\wedge$, submodularity of a function $f:\mathcal{L}\to\R$ means
$$
f(X\vee Y)+f(X\wedge Y)\le f(X)+f(Y) \mbox{~~~for all $X,Y\in \mathcal{L}$.}
$$
A common appearance of submodularity is in the context of capacities. A capacity on a $\sigma$-algebra $\mathcal F$ is an increasing function $v:\mathcal F\to\R$ with $v(\varnothing)=0$, and
it is submodular if $v(A\cup B) +v(A\cap B) \le v(A) +v(B)$ for all $A,B\in \mathcal F$, which corresponds to $(\mathcal{L},\vee,\wedge)=(\mathcal F, \cup,\cap)$.

As standard in the literature of risk measures, we study functions on $L^\infty$ with the lattice structure given by the pointwise maximum and minimum.\footnote{This is different from submodularity formulated on sets of portfolio indices, studied by \citet{GG17}; see Appendix \ref{sec:6} for a comparison.} Although submodularity is popular in many fields, its relevance for risk measures has received limited attention. There is, however, a well-known connection: a comonotonic-additive and coherent risk measure must be a Choquet integral with respect to a submodular capacity; see Theorem 4.94 of \citet{FS16}.

A major advance was made by \citet{CC18}, who obtained a remarkable characterization of submodular functions under positive homogeneity and cash invariance (also known as constant additivity).
Translating this result into the language of risk measures, it says that submodular, positively homogeneous, and monetary risk measures must be coherent and comonotonic-additive.
This includes the Expected Shortfall (ES), the standard risk measure for market risk in the banking industry; see, for example, \citet{MFE15}.

The main purpose of this paper is a systematic study of submodularity in the context of risk measures, going beyond the framework of \citet{CC18}.
In Section \ref{sec:3}, we discuss the economic interpretation of submodularity for risk measures; we argue that it represents a notion of reward for ``flattening the risk", which is quite different from diversification.  
We show in Theorem \ref{th:contrast} that submodularity and diversification (represented by quasi-convexity) enjoy two distinct forms of invariance under increasing transformations.

To characterize submodularity, we start in Section \ref{sec:4}  with the expected losses and their certainty equivalents.  It turns out that expected losses are the only law-invariant functionals that are modular. 
Certainty equivalents are submodular if and only if the corresponding loss function is convex. 
In Section \ref{sec:4prime}, we consider distortion risk measures and two general classes of deviation measures. Directly following from the results of \citet{CC18}, a law-invariant coherent risk measure is submodular if and only if it is a coherent distortion risk measure.
This class includes the ES but not the Value-at-Risk (VaR). 
We further find that many classical deviation measures, including the variance, the mean-median deviation, and the Gini deviation, are submodular, whereas the standard deviation is not; see, for example, \citet{RUZ06} and \citet{BW22}. 

\citet{MM08} showed that for norm-continuous cash-invariant  risk measures, submodularity implies convexity.
Since commonly used risk measures are cash invariant, we will mostly focus on risk measures that are convex.
In Section \ref{sec:5}, 
we study submodularity in four classes of convex monetary risk measures that have explicit formulas: the shortfall risk measures of \citet{FS02}, the optimized certainty equivalents (OCE) of \citet{BT07}, adjusted Expected Shortfall (AES) of \citet{BMW22}, and the monotone mean-deviation measures of \citet{HWW26}.
It turns out that, except for the case of OCE, which is simpler, the other three classes all require quite sophisticated analysis to characterize submodularity; thus Section \ref{sec:5} contains the most technical results in this paper.
For shortfall risk measures, we obtain a complete characterization: assuming twice differentiability, submodularity can be characterized via the Arrow--Pratt measure of risk aversion.
AES can only be submodular when it equals a standard ES. For monotone mean-deviation risk measures generated by a distortion risk measure, submodularity can hold only when it is a coherent distortion risk measure.

Section \ref{sec:7} illustrates how rolling historical estimators of VaR, ES, and a two-level AES specification behave under the submodularity test in US equity data.
Section \ref{sec:8} concludes the paper.

\section{Choquet integrals and risk measures}
\label{sec:2}
We work with an atomless probability space $(\Omega,\mathcal F,\p)$, and $L^\infty$ is the set of all bounded random variables on $(\Omega,\mathcal F,\p)$, with $L^\infty$-norm $\|X\|_\infty=\inf\{x\in \R: \p(X>x)=0\}$. 
We treat almost surely equal random variables as identical. 
Elements of $L^\infty$ are interpreted as random financial losses.
Two random variables $X$ and $Y$ are called \emph{comonotonic} if $(X(\omega)-X(\omega'))(Y(\omega)-Y(\omega'))\ge 0$ for all $\omega,\omega'\in \Omega$ ($\p\times\p$ almost surely). We write $X\laweq Y$ if $X$ and $Y$ are equally distributed   under $\p$.

A risk measure is a mapping  $\rho:L^\infty\to\R$, and it may satisfy the following commonly studied properties.
\begin{enumerate}[(a)]
\item Monotonicity: $\rho(X)\le \rho(Y)$ for $X,Y\in L^\infty$ with $X\le Y$.
\item Cash invariance: $\rho(X+c)=\rho(X)+c $ for $X\in L^\infty$ and $c\in \R$.
\item Convexity: $\rho(\lambda X+ (1-\lambda) Y )\le\lambda \rho(X)+ (1-\lambda) \rho(Y)$ for $X,Y\in L^\infty$ and $\lambda \in [0,1]$. 
\item Positive homogeneity: $\rho(\lambda X)=\lambda \rho(X)$ for $X\in L^\infty$ and $\lambda>0$.
\item Subadditivity: $\rho(X+Y)\le \rho(X)+\rho(Y)$ for $X,Y\in L^\infty$. 
\item Comonotonic additivity: $\rho(X+Y) = \rho(X)+\rho(Y)$ for comonotonic $X,Y\in L^\infty$.  
\item Law invariance: $\rho(X)=\rho(Y)$ when $X\laweq Y$.
\item Quasi-convexity: $\rho(\lambda X+ (1-\lambda) Y )\le\max\{\rho(X), \rho(Y)\}$ for $X,Y\in L^\infty$ and $\lambda \in [0,1]$. 
\end{enumerate}
Following the standard terminology in the literature,  a \emph{monetary risk measure} is a mapping $\rho:L^\infty\to\R$ satisfying (a) and (b);
a \emph{convex risk measure} is a mapping $\rho:L^\infty\to\R$ satisfying (a)--(c);
a \emph{coherent risk measure} is a mapping $\rho:L^\infty\to\R$ satisfying (a)--(d).
Note that convexity (c) and subadditivity (e) are equivalent under positive homogeneity (d).
Any monetary risk measure is $1$-Lipschitz with respect to $L^\infty$-norm, and hence continuous. 
Diversification in the sense of \citet{D89} is represented by quasi-convexity (h); see \citet{CMMM11} for a formal analysis. 
For more background and many properties of risk measures, see \citet{FS16}. 

As explained in the introduction, our main point of interest is the \emph{submodularity} of $\rho$, that is,
 \begin{equation}
 \rho(X\vee Y)+\rho(X\wedge Y)\le \rho(X)+\rho(Y) \mbox{~~~for all $X,Y\in L^\infty$}.   \label{eq:main}
 \end{equation} 
Here, $x\vee y=\max\{x,y\}$ and $x\wedge y=\min\{x,y\}$ for $x,y\in \R$, and they are applied pointwise to random variables. 
Moreover, we say that $\rho$ is \emph{supermodular} if $-\rho$ is submodular. That is, the inequality    in \eqref{eq:main} is reversed.

It is straightforward to verify that submodular functions are closed under some simple operations.
For instance, 
let $\rho,\rho':L^\infty\to\R$ be submodular functions. The mappings $\lambda \rho$ for $\lambda \ge 0$, $\rho+c$ for $c\in \R$, and $\rho+\rho'$ are submodular. Similarly, any convex combination of 
a class  $\{\rho_\theta:\theta \in \Theta\}$ of submodular risk measures is submodular.

We next  review the results of \citet{CC18} that submodular risk measures are closely related to Choquet integrals.
 Let $V$ be the set of mappings $v:\mathcal F\to\R$ with  bounded variation and $v(\varnothing)=0$. Here, by the  variation of $v$ we mean 
 the norm
 $$
 \|v\|=\sup\left\{\sum_{k=1}^n | v(A_k)-v(A_{k-1})|:  n\in \N;~ \varnothing =A_0\subseteq A_1\subseteq\dots\subseteq A_n =\Omega  \right\},
 $$
  which is always finite if $v$ is increasing.  
For $X\in L^\infty$ and  $v\in V$, the \emph{Choquet integral} $ \int X \d v$ is defined as 
$$
\int X \d v= \int_{0}^\infty v (X> x) \d x  +  \int_{-\infty}^0  \left( v(X> x) - v(\Omega) \right)\d x.
$$
Using the terminology of \citet{CC18}, we say that a mapping $\rho:L^\infty\to\R$ is \emph{Choquet} if it  can be written as \begin{equation}\label{eq:Choquet}
\rho(X)=\int X\d v,~~~X\in L^\infty
\end{equation} 
for some $v \in V$. 
When the function $v$ in \eqref{eq:Choquet} is a capacity with $v(\Omega)=1$, $\rho$ is called a \emph{Choquet risk measure} in the terminology of \citet{EMWW21}, which is monetary. 
If a risk measure is law-invariant and Choquet, then it can be written as 
\begin{align}
    \label{eq:disRM}
\rho_\phi(X) = \int X\d (\phi\circ \p)= \int_{0}^\infty  \phi ( \p (X> x)) \d x  +  \int_{-\infty}^0  \left( \phi (\p(X> x)) -\phi(1)\right)\d x
\end{align} for some function $\phi:[0,1]\to \R$ with bounded variation and $\phi(0)=0$. 
When $\phi$ is  further increasing and $\phi(1)=1$, then $\rho_\phi$ is called a \emph{distortion risk measure}, which satisfies the properties (a), (b), (d), (f), and (g) above; any of (c), (e) and (h) holds if and only if $\phi$ is concave; see \citet[Theorem 3]{WWW20}.
As the two most important classes of distortion risk measures in finance and insurance, the VaR of $X$ at level $p\in (0,1)$ is given by $$\VaR_p(X) =\inf\{x\in\R:\p(X\le x)\ge p\},\qquad X\in L^\infty,$$ which is also the left quantile of $X$ at $p$; the ES at level $p\in(0,1)$ is
$$
\ES_p(X)=\frac{1}{1-p}\int_p^1 \VaR_q(X)\,\d q,\qquad X\in L^\infty.
$$ 
Both $\VaR_p$ and $\ES_p$ belong to the class  \eqref{eq:disRM}, with $\phi $ given by $\phi(t)=\id_{\{t>1-p\}}$ and $\phi(t)=\min\{t/(1-p),1\}$, respectively. 
It is well known that any Choquet mapping is comonotonic-additive and norm-continuous.
Moreover, comonotonic additivity and monotonicity characterize monotone Choquet  functionals  (\citealp{S86}).
When $\phi$ is not monotone and it is nonnegative with $\phi(1)=0$, we get 
another special case of \eqref{eq:disRM}, $
    \rho_\phi(X) = \int_{ \R}  \phi ( \p (X> x)) \d x, 
 $   which includes many deviation measures (see Section \ref{sec:4prime}).

  \citet{CC18} showed that   a risk measure $\rho:L^\infty\to\R$  is submodular, positively homogeneous, and monetary if and only if  $\rho$ is a Choquet coherent risk measure. 
 Therefore, for  monetary risk measures that are positively homogeneous, submodularity is fully characterized by   comonotonic additivity and subadditivity.  This result is somewhat surprising, as neither subadditivity nor comonotonic additivity follows from submodularity alone (which will be illustrated by many examples in this paper), and the other   properties also do not imply subadditivity nor comonotonic additivity.
 
 This observation inspired us to wonder what happens when we assume submodularity but not positive homogeneity. 
 \citet{MM08} proved that under a   condition (the hyper-Archimedean property)  on the underlying space, submodularity and cash invariance imply convexity. The required condition  
 is not satisfied by $L^\infty$, but the implication holds when we further impose  $L^\infty$-continuity.
 This observation   is an important point but not explicitly stated in the literature, and hence
we document it here. 
\begin{theorem} 
\label{th:MM08}
   Every submodular monetary risk measure is convex.
\end{theorem}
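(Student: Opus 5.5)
The plan is to reduce the statement to finite dimensions, where submodularity and cash invariance can be read off the Hessian of a smoothed version of $\rho$.

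\emph{Reduction to $\R^n$.} Fix $X,Y\in L^\infty$ and $\lambda\in[0,1]$. Any monetary risk measure is $1$-Lipschitz in $\|\cdot\|_\infty$. Hence it suffices to prove
$$
\rho(\lambda X+(1-\lambda)Y)\le \lambda\rho(X)+(1-\lambda)\rho(Y)
$$
for simple $X,Y$ and then pass to the limit, since simple random variables are dense in $L^\infty$. For simple $X,Y$, choose a finite measurable partition $\{A_1,\dots,A_n\}$ of $\Omega$, with $\p(A_i)>0$, on whose cells both are constant. Define $f:\R^n\to\R$ by $f(x)=\rho\big(\sum_{i} x_i \id_{A_i}\big)$.

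The pointwise operations $\vee$ and $\wedge$ on such random variables coincide with coordinatewise max and min on $\R^n$. Consequently $f$ inherits three properties from $\rho$: it is submodular on the lattice $\R^n$, it is $1$-Lipschitz for the sup-norm, and it satisfies $f(x+c\mathbf{1})=f(x)+c$. It therefore remains to show that any such $f$ is convex.

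\emph{Smoothing.} Let $\eta_\varepsilon$ be a standard mollifier and set $f_\varepsilon=f*\eta_\varepsilon$. Translation commutes with the lattice operations, since $(x+z)\vee(y+z)=(x\vee y)+z$ and likewise for $\wedge$. So each translate $f(\cdot-z)$ is submodular and cash invariant, and averaging over $z$ shows that $f_\varepsilon$ is a $C^\infty$ function with both properties. Moreover $f_\varepsilon\to f$ uniformly because $f$ is Lipschitz, and pointwise limits of convex functions are convex. It thus suffices to prove convexity for smooth $f$.

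\emph{Hessian argument.} For smooth $f$, submodularity is equivalent to $\partial_i\partial_j f\le 0$ for $i\ne j$. This follows by applying \eqref{eq:main} to $x+s e_i$ and $x+t e_j$ with $s,t>0$, which gives
$$
f(x+se_i+te_j)-f(x+se_i)-f(x+te_j)+f(x)\le 0,
$$
and then dividing by $st$ and letting $s,t\downarrow 0$. Differentiating cash invariance in $c$ gives $\sum_j\partial_j f\equiv 1$, and differentiating once more in $x_i$ gives $\sum_j \partial_i\partial_j f=0$ for every $i$. Hence the Hessian $H$ has nonpositive off-diagonal entries and zero row sums; in other words, it is a weighted graph Laplacian. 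For any $u\in\R^n$,
$$
u^\top H u=-\tfrac12\sum_{i\ne j}\partial_i\partial_j f\,(u_i-u_j)^2\ge 0 ,
$$
so $H$ is positive semidefinite and $f$ is convex.

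\emph{Main obstacle.} The delicate step is justifying that submodularity survives mollification and that the finite-dimensional restriction faithfully encodes the lattice structure; both rest on the translation-commutation identity above. The Hessian computation itself is routine once smoothness is available.
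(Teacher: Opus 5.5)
Your proof is correct. Its outer layer, which approximates $X,Y$ uniformly by simple random variables and uses the $1$-Lipschitz property of monetary risk measures, is exactly the paper's Lemma~\ref{lem:bounded-extension}. The two arguments therefore differ only in the finite-dimensional core. The paper does not prove that step itself. After normalizing $\rho(0)=0$, it notes that each simple subspace is hyper-Archimedean and invokes Theorem~15 of \citet{MM08}, which says that a supermodular, translation-invariant function on such a lattice is concave; it applies this to $-\rho$. You instead prove the step directly on $\R^n$. Because translations commute with $\vee$ and $\wedge$, mollification preserves submodularity and cash invariance. For a smooth $f$, these two properties force the Hessian to have nonpositive off-diagonal entries and zero row sums, i.e.\ to be a weighted graph Laplacian, which is positive semidefinite. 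Your route is self-contained and elementary, and it exposes the mechanism behind the implication. It is also in the same spirit as the cross-partial computations the paper uses for shortfall risk measures in Theorem~\ref{thm:linear-iff}. The price is that it relies on the Euclidean structure of the simple subspace and on continuity in order to mollify and pass to the limit. The Marinacci--Montrucchio theorem, by contrast, is a purely order-theoretic statement about general hyper-Archimedean lattices that needs no smoothing.
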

Theorem \ref{th:MM08} follows essentially from Theorem 15 of \citet{MM08} through a simple approximation technique in Lemma \ref{lem:bounded-extension} below, which will be useful in our other results later.
A sub-$\sigma$-algebra $\mathcal F'$ of $\mathcal F$ is called \emph{simple} if it is generated by finitely many disjoint sets. The corresponding space $L^\infty(\Omega,\mathcal F',\mathbb P)$ is called a \emph{simple subspace}.

\begin{lemma}\label{lem:bounded-extension}
Let $\rho:L^\infty\to\R$ be norm-continuous.
If $\rho$ is submodular (resp.~convex) on every simple subspace, then $\rho$ is submodular (resp.~convex) on all of $L^\infty$.
\end{lemma}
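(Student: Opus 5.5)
The plan is to approximate arbitrary $X,Y\in L^\infty$ by random variables that are measurable with respect to a common simple sub-$\sigma$-algebra. The approximations should be compatible with the lattice operations and with convex combinations, so that norm-continuity transfers the inequality.

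Fix $X,Y\in L^\infty$ and $n\in\N$. Define the discretization map $D_n(x)=\lfloor nx\rfloor/n$ and set $X_n=D_n(X)$ and $Y_n=D_n(Y)$. Since $X$ and $Y$ are bounded, $X_n$ and $Y_n$ take finitely many values. Let $\mathcal F_n$ be the $\sigma$-algebra generated by the finitely many disjoint sets $\{X_n=a,\,Y_n=b\}$. These sets partition $\Omega$, up to null sets, as $(a,b)$ ranges over the finite value grid. Thus $\mathcal F_n$ is simple, and $X_n,Y_n\in L^\infty(\Omega,\mathcal F_n,\p)$. We have $\|X_n-X\|_\infty\le 1/n$ and $\|Y_n-Y\|_\infty\le 1/n$.

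\textbf{Submodular case.} The key observation is that $D_n$ is increasing, so it commutes with pointwise max and min:
$$D_n(X\vee Y)=X_n\vee Y_n,\qquad D_n(X\wedge Y)=X_n\wedge Y_n.$$
Alternatively, one can use the elementary inequalities $|a\vee b-c\vee d|\le|a-c|\vee|b-d|$ and the analogous one for $\wedge$. Either way, $\|X_n\vee Y_n-X\vee Y\|_\infty\le 1/n$, and likewise for $\wedge$. Moreover, $X_n\vee Y_n$ and $X_n\wedge Y_n$ are $\mathcal F_n$-measurable. Submodularity of $\rho$ on the simple subspace gives
$$\rho(X_n\vee Y_n)+\rho(X_n\wedge Y_n)\le\rho(X_n)+\rho(Y_n).$$
Letting $n\to\infty$ and using norm-continuity of $\rho$ yields \eqref{eq:main}.

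\textbf{Convex case.} Fix $\lambda\in[0,1]$. Then $\lambda X_n+(1-\lambda)Y_n$ is $\mathcal F_n$-measurable and converges in norm to $\lambda X+(1-\lambda)Y$. Applying convexity on the simple subspace and passing to the limit gives the convexity inequality on all of $L^\infty$.

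There is no genuine obstacle here. The only point requiring care is that all three objects, $X_n$, $Y_n$, and their max/min or convex combination, must lie in one common simple subspace. This is why $\mathcal F_n$ is generated jointly by $X_n$ and $Y_n$ rather than separately. The statement's hypothesis of submodularity or convexity on every simple subspace is exactly what is needed for this step.
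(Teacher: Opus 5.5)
Your proof is correct and follows essentially the same argument as the paper. The paper uses the dyadic grid $2^{-n}\lfloor 2^n X\rfloor$, takes the common simple $\sigma$-algebra $\sigma(X^{(n)},Y^{(n)})$, and passes to the limit by norm-continuity; your observation that the discretization commutes with $\vee$ and $\wedge$ just makes explicit the convergence $X^{(n)}\vee Y^{(n)}\to X\vee Y$ that the paper asserts without comment.
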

\begin{proof}[Proof]
Fix $X,Y\in L^\infty$. For $n\in\mathbb N$, define
 $
X^{(n)}=2^{-n}\lfloor 2^n X\rfloor$ 
and $
Y^{(n)}=2^{-n}\lfloor 2^n Y\rfloor.
 $
Then
$
\|X^{(n)}-X\|_\infty\le 2^{-n}$ and $
\|Y^{(n)}-Y\|_\infty\le 2^{-n}.
$
As  $X^{(n)}$ and $Y^{(n)}$ are both measurable with respect to the finite $\sigma$-algebra
 $
\mathcal F_n=\sigma(X^{(n)},Y^{(n)}),
 $
  they lie in the same simple subspace $L^\infty(\Omega,\mathcal F_n,\mathbb P)$.
If $\rho$ is submodular on the simple subspace, then 
$
\rho(X^{(n)}\vee Y^{(n)})+\rho(X^{(n)}\wedge Y^{(n)})
\le \rho(X^{(n)})+\rho(Y^{(n)}).
$
If $\rho$ is convex on the simple subspace, then 
$
\rho(\lambda X^{(n)}+ (1-\lambda) Y^{(n)}) 
\le \lambda \rho(X^{(n)})+(1-\lambda)\rho(Y^{(n)})
$ for any $\lambda\in [0,1]$.
Note that 
$
X^{(n)}\vee Y^{(n)}\to X\vee Y $ and $
X^{(n)}\wedge Y^{(n)}\to X\wedge Y$ in  $L^\infty$. 
With $\rho$ being norm-continuous, 
passing to the limit yields
the desired inequality  
$
\rho(X\vee Y)+\rho(X\wedge Y)\le \rho(X)+\rho(Y) $
or $
\rho(\lambda X+(1-\lambda) Y) \le \lambda \rho(X)+(1-\lambda)\rho(Y) $.
\end{proof} 

\begin{proof}[Proof of Theorem \ref{th:MM08}]
Let 
$\rho$ be a monetary risk measure that is submodular. We further assume $\rho(0)=0$ without loss of generality, and this allows us to use the main results of \citet{MM08}.  
 Since \(\rho\) is monetary, it is norm-continuous. By Lemma \ref{lem:bounded-extension}, it suffices to prove convexity on all simple subspaces. 
 For any simple subspace $L^\infty(\Omega,\mathcal F',\mathbb P)$, it is 
hyper-Archimedean as explained in the examples following Theorem 8 in \citet{MM08}.
Therefore, we can apply Theorem 15 of \citet{MM08} to get that $-\rho$ is concave, because it is both supermodular and translation invariant (in the sense that $\rho(X+c)=\rho(X)+c\rho(1)$ for $c\in \R$). This shows that $\rho$ is convex on the simple subspace. Finally, Lemma \ref{lem:bounded-extension} yields convexity on $L^\infty$.
\end{proof}

Note that cash invariance is important for the conclusion in Theorem \ref{th:MM08}; without it we can find many examples of submodular mappings that are not convex (see e.g., Theorem \ref{th:modular-char}).  
Before analyzing specific classes of submodular risk measures, we first discuss the conceptual differences between submodularity and convexity.

\section{Contrast between submodularity and diversification}
\label{sec:3}
Risk measures are often studied with properties such as convexity and subadditivity, which reflect notions of diversification. 
In particular, for law-invariant monetary risk measures, diversification (that is, quasi-convexity) is equivalent to convexity; see \citet{CMMM11}.
Submodularity on its own should be distinguished from diversification, although the two properties coincide for some specific classes of risk measures (see Proposition \ref{prop:CE} and Theorem \ref{th:distortion} below).  For monetary risk measures, Theorem \ref{th:MM08} shows that submodularity is stronger than convexity, but in general there is no implication in either direction.
Indeed,  a functional of the form $X\mapsto \E[\ell(X)]$ is submodular for any function $\ell$ (Theorem \ref{th:modular-char} below), but the preferences represented by this functional do not necessarily exhibit diversification (and it is not necessarily quasi-convex).  

\begin{figure}[t]
\centering
\resizebox{0.97\textwidth}{!}{\includegraphics[trim=0 38 0 0, clip]{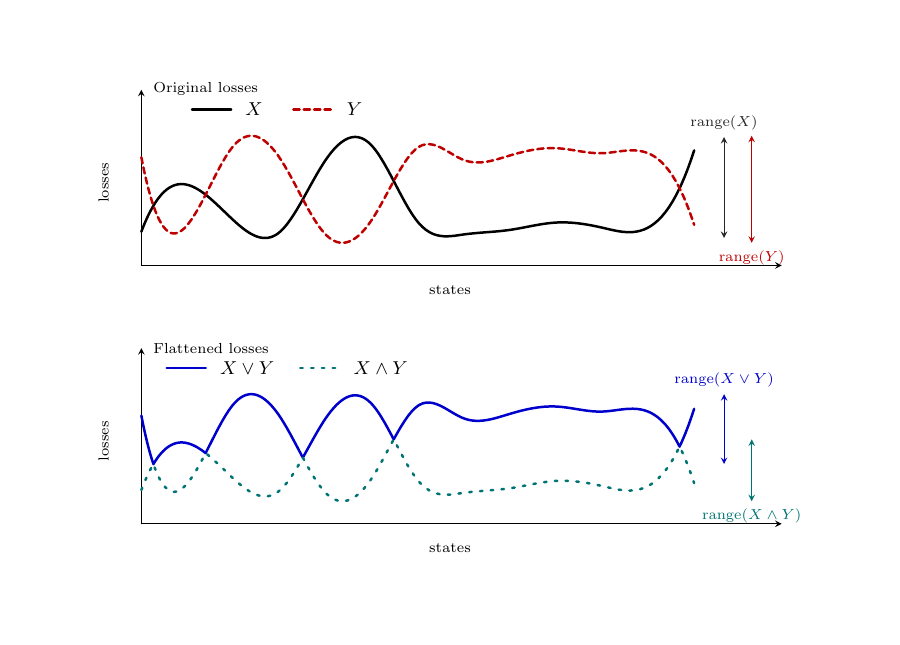}} 
\caption{A stylized example for the flattening intuition behind submodularity. Here $X\vee Y$ and $X\wedge Y$ (bottom panel) fluctuate less than $X$ and $Y$ (top panel) and have smaller ranges.}
\label{fig:flattening}
\end{figure}

To interpret submodularity for risk measures, we should compare $X$ and $Y$ with $X\wedge Y$ and $X\vee Y$.
Figure \ref{fig:flattening} gives a stylized example in which  we can see that   $X\wedge Y$ and $X\vee Y$  are ``flatter" than the original positions $X$ and $Y$ in the sense that their ranges are smaller. {Indeed, writing $a,b$ for the ranges of $X$ and $Y$ with $a\ge b$
and $c,d$ for the ranges of $X\wedge Y$ and  $X\vee Y$  with $c\ge d$,  we always have $a\ge c$ and $b\ge d$.} Hence, we may call $X\wedge Y$ and $X\vee Y$ the flattened positions.  
Note that submodularity means that the total risk 
of the flattened positions 
is smaller than the original positions.  
Therefore, submodularity reflects the following intuitive principle:
\begin{center}
\emph{It is rewarding to flatten the risks in a specific way.}
\end{center}

The simple (but extreme) case where   $X\wedge Y$ and 
$X\vee Y$ are both constants is   illustrative. 
For instance, we can think of $X=\id_A$ and $Y=\id_{A^c}$ for some event $A$. 
Note that $X$ and $Y$ are random, but $z:=X\wedge Y$ and $w:=X\vee Y$ are deterministic.
Submodularity, reading as
\begin{align}
\label{eq:econ-int}\rho(z)+\rho(w)\le \rho(X)+\rho(Y),
\end{align}
means that the total risk of two constant losses 
is smaller than the total risk of two random losses, while $z+w= X+Y$.
 Note that \eqref{eq:econ-int} is quite natural for a risk measure. For instance, it is satisfied by any monetary risk measure $\rho$  with $\rho(0)=0$ and $\rho \ge \E$ (this includes all law-invariant coherent risk measures), because
$$
\rho(z)+\rho(w)=z+w=\E[X]+\E[Y]\le \rho(X) + \rho(Y). 
$$
Here, the interpretation for rewarding the flattened positions is quite clear. 


To further illustrate the sharp contrast between submodularity and diversification, we state the following simple result that follows essentially from  the definition. 
For a measurable function $f:\R\to \R$, 
we define $\rho \circ f$
as the mapping $X\mapsto \rho(f(X))$, which is a minor abuse of notation. 

\begin{theorem}
\label{th:contrast}
Let $f:\R\to \R$ be an increasing function and $\rho:L^\infty\to \R$.
    If $\rho$ is quasi-convex, then so is $f\circ \rho$, but not necessarily $\rho\circ f$.
    If $\rho$ is submodular, then so is $\rho\circ f$,  but not necessarily $f\circ \rho$.
\end{theorem}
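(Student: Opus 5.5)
The two positive statements follow directly from the definitions, and the two negative ones each need a single counterexample.

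\emph{Quasi-convexity of $f\circ\rho$.} Let $\rho$ be quasi-convex and $f$ increasing. For $X,Y\in L^\infty$ and $\lambda\in[0,1]$, quasi-convexity gives $\rho(\lambda X+(1-\lambda)Y)\le \max\{\rho(X),\rho(Y)\}$. Since $f$ is increasing, it preserves this inequality and commutes with the maximum, so $f(\rho(\lambda X+(1-\lambda)Y))\le \max\{f(\rho(X)),f(\rho(Y))\}$.

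\emph{Submodularity of $\rho\circ f$.} Let $\rho$ be submodular. Since $f$ is increasing, $f(x\vee y)=f(x)\vee f(y)$ and $f(x\wedge y)=f(x)\wedge f(y)$ for real $x,y$, hence pointwise $f(X\vee Y)=f(X)\vee f(Y)$ and $f(X\wedge Y)=f(X)\wedge f(Y)$. One should check that $f(X)\in L^\infty$. Monotone functions are Borel measurable, and $f$ is bounded on the bounded range of $X$, so this holds. Applying \eqref{eq:main} to $f(X),f(Y)$ then gives the claim.

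\emph{Counterexample for $\rho\circ f$ and quasi-convexity.} Take $\rho=\E$, which is linear and hence quasi-convex, and $f=\id_{(0,\infty)}$ (or the smooth increasing $f(x)=x^3$). Let $A$ be an event with $\p(A)=1/2$, and set $X=\id_A-\id_{A^c}$ and $Y=-X$. Then $\tfrac12X+\tfrac12Y=0$, so $\E[f(0)]=0$, while $\E[f(X)]=\E[f(Y)]=0$ with $f=x^3$. That gives equality, not a violation, so a better choice is needed.

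Use $f(x)=x\vee 0$ composed with a concave-type pattern, or more simply $f=\id_{[1,\infty)}$ with $X=2\id_A$ and $Y=2\id_{A^c}$. The midpoint is $1$, so $\E[f(1)]=1$, while $\E[f(X)]=\E[f(Y)]=1/2$. This violates quasi-convexity.

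\emph{Counterexample for $f\circ\rho$ and submodularity.} Take $\rho=\E$, which is modular, and $f(x)=x^2$ on $[0,\infty)$ extended increasingly, e.g.\ $f(x)=x|x|$. Let $X=\id_A$ and $Y=\id_{A^c}$ with $\p(A)=1/2$. Then $X\vee Y=1$ and $X\wedge Y=0$, so the left-hand side of \eqref{eq:main} is $f(1)+f(0)=1$, while the right-hand side is $2f(1/2)=1/2$. Submodularity fails.

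The only step needing any care is choosing these counterexamples so that the inequality is strict. The remaining details are routine, including the measurability and boundedness of $f(X)$.
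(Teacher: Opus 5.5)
Your proof is correct and follows essentially the paper's route. The positive claims come directly from the definitions, because an increasing $f$ commutes with $\max$, $\vee$ and $\wedge$. Both counterexamples use $\rho=\E$, and your explicit choices play the role of the paper's generic strictly concave and strictly convex $f$: $\id_{[1,\infty)}$ with $X=2\id_A$, $Y=2\id_{A^c}$ for quasi-convexity, and $f(x)=x|x|$ with $X=\id_A$, $Y=\id_{A^c}$ for submodularity. Your step function is not concave, but it behaves concavely on the values $0,1,2$, which is all the argument uses. In a final write-up, drop the abandoned attempt with $x^3$ and $\id_{(0,\infty)}$.
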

\begin{proof}
    For the first statement, the quasi-convexity of $f\circ \rho$ is straightforward from the definition. To see that $\rho\circ f$ may fail to be quasi-convex, 
let $\rho=\E$ (which is convex)
   and $f$ be a strictly concave function on $\R$. For two a.s.~distinct random variables $X$ and $Y$ with $\E[f(X)]=\E[f(Y)]$, we have $\E[f(X/2+Y/2)]>\E[f(X)]/2 + \E[f(Y)]/2 = \E[f(X)]\vee \E[f(Y)]$, violating quasi-convexity.

  For the second statement, we first show the submodularity of $\rho\circ f$. By the submodularity of $\rho$ and the monotonicity of $f$, we have
   $$
      \rho(f(X\vee Y))+
      \rho(f(X\wedge Y))
      =
   \rho(f(X)\vee f(Y))+
      \rho(f(X)\wedge f(Y))
   \le   \rho(f(X) ) + \rho( f(Y)),
   $$
   and therefore $\rho\circ f$ is submodular. 
   To see that $f\circ \rho$ is not necessarily submodular,   
   let $\rho=\E$ (which is submodular)
   and $f$ be a strictly convex function on $\R$. For two a.s.~distinct random variables $X$ and $Y$ with $\E[X]=\E[Y]$, 
   we have 
   $$
   f(\E[X\vee Y]) + f(\E[X\wedge Y])
   >  2 f\left(\frac{\E[X\vee Y] + \E[X\wedge Y]}{2}\right)
   =   f(\E[X]) + f(\E[Y]),
   $$
   violating submodularity. 
\end{proof}

Theorem \ref{th:contrast} 
yields a simple invariance property of submodularity. 
It also
shows that 
diversification (represented by quasi-convexity) and submodularity are  
fundamentally different, from an operational perspective. 

\begin{remark}
 Theorem \ref{th:contrast} implies that for any given  strictly increasing function $f:\R\to\R$, $\rho$ is quasi-convex if and only if $f\circ \rho$ is quasi-convex; 
    $\rho$ is submodular if and only if $  \rho\circ f$ is submodular. 
\end{remark}

\begin{remark}
A classic interpretation for 
  submodularity in economics
  is complementarity in the terminology of \citet{T78}, which may be less transparent in the context of risk measures. 
 Complementarity concerns how the marginal increase of risk from a higher loss in one scenario changes when another scenario is already worse.
To illustrate this, consider a finite sample  space $\Omega=\{\omega_1,\dots,\omega_n\}$, and represent $X$
by its scenario values $(X(\omega_1),\dots,X(\omega_n))$.
For $\epsilon,\delta\ge 0$, submodularity implies
\begin{align}
\rho\!\begin{pmatrix}
X(\omega_1)+\epsilon\\
X(\omega_2)+\delta\\
X(\omega_3)\\
\vdots\\
X(\omega_n)
\end{pmatrix}
-\rho\!\begin{pmatrix}
X(\omega_1)\\
X(\omega_2)+\delta\\
X(\omega_3)\\
\vdots\\
X(\omega_n)
\end{pmatrix}
\le \rho\!\begin{pmatrix}
X(\omega_1)+\epsilon\\
X(\omega_2)\\
X(\omega_3)\\
\vdots\\
X(\omega_n)
\end{pmatrix}
-\rho\!\begin{pmatrix}
X(\omega_1)\\
X(\omega_2)\\
X(\omega_3)\\
\vdots\\
X(\omega_n)
\end{pmatrix}.\label{eq:explain}
\end{align}
Moreover, we can see that  repeatedly applying \eqref{eq:explain}  with permutations yields submodularity. 
Interpreting \eqref{eq:explain}, 
we see that the marginal risk increase of loss $\epsilon$ added in scenario $\omega_1$ becomes larger when more loss occurs in scenario $\omega_2$.

\end{remark}

\section{Expected losses and certainty equivalents}
\label{sec:4}

\subsection{Expected losses}
An expected loss is the mapping  
\begin{align}
\label{eq:EL}
E_\ell(X)=\E[\ell(X)], ~~~~X\in L^\infty,
\end{align}
 where $\ell:\R\to\R$ is a measurable function, which we call a loss function. 
This is analogous to the expected utility for a utility function $u:\R\to\R$ in decision theory, with the difference that our random variables in $L^\infty$ represent losses.
Note that $\ell(X)$ is not necessarily in the same unit as $X$.
It is straightforward to check from the  definition that $E_\ell$ is modular (both submodular and supermodular). 
The following result gives a simple characterization of all law-invariant functionals that are modular. 
\begin{theorem}\label{th:modular-char}
Let $(\Omega,\mathcal F,\p)$ be an atomless probability space and $\rho:L^\infty\to\R$ be law-invariant and
$\|\cdot\|_\infty$-continuous.
Then the following are equivalent:
\begin{enumerate}[(i)]
\item $\rho$ is both submodular and supermodular.
\item There exists a continuous function $\ell:\R\to\R$  such that
$\rho=E_\ell$.
\end{enumerate}
\end{theorem}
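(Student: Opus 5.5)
The direction (ii)$\Rightarrow$(i) is immediate. Since $\ell(X\vee Y)+\ell(X\wedge Y)=\ell(X)+\ell(Y)$ holds pointwise, taking expectations gives modularity. Continuity of $\ell$ makes $E_\ell$ norm-continuous on $L^\infty$, because $\ell$ is uniformly continuous on bounded sets. The real work is (i)$\Rightarrow$(ii). The plan is to first obtain modularity in the form of additivity over disjoint supports, and then use law invariance to identify a measure-type structure.

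\textbf{Step 1 (candidate $\ell$ and normalization).} Replace $\rho$ by $\rho-\rho(0)$, and set $\ell(x)=\rho(x)$ for constants $x$; this is continuous by norm-continuity. Modularity applied to $X$ and $Y$ with disjoint "supports relative to $0$" gives additivity. Concretely, if $X=x\id_A$ and $Y=y\id_B$ with $A\cap B=\varnothing$ and $x,y\ge 0$, then $X\vee Y=X+Y$ and $X\wedge Y=0$. Hence $\rho(X+Y)=\rho(X)+\rho(Y)$. Analogous identities hold for mixed signs, by comparing against a common baseline.

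\textbf{Step 2 (simple random variables).} Define $\mu_x(A)=\rho(x\id_A)$ for an event $A$. By law invariance, $\mu_x(A)=g_x(\p(A))$ for some function $g_x$ on $[0,1]$. Additivity over disjoint sets gives $g_x(s+t)=g_x(s)+g_x(t)$ whenever $s+t\le 1$; atomlessness lets us realize any such $s,t$. Norm-continuity alone does not control $g_x$ in $t$. However, $g_x$ is additive on $[0,1]$ and bounded there: monetary-type boundedness follows from continuity at $0$ plus the fact that $\|x\id_A\|_\infty\le|x|$, since $\rho$ is continuous and hence bounded on a neighbourhood. Boundedness on some interval is what we need, so I would argue it locally, and then Cauchy's equation forces $g_x(t)=t\,g_x(1)=t\,\ell(x)$.

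For a simple $X=\sum_i x_i\id_{A_i}$ with the $A_i$ partitioning $\Omega$, I then show $\rho(X)=\sum_i \p(A_i)\ell(x_i)$. I would do this by induction on the number of values, using modularity with a two-valued "truncation" decomposition. Order the values $x_1<\dots<x_n$ and let $Y=x_1\vee (X\wedge x_2)$-type constructions peel off one level at a time. For example, set $Z=X\vee x_2$ and $W=X\wedge x_2$; then $Z\vee W$ and $Z\wedge W$ reduce to $X$ and the constant $x_2$. It is cleaner, though, to use the identity $\rho(X)+\rho(x_k)=\rho(X\vee x_k)+\rho(X\wedge x_k)$, which holds because $X\vee x_k$ and $X\wedge x_k$ are the lattice operations applied to the pair $(X,x_k)$. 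Each side has fewer distinct values, so induction works once the two-valued case $\rho(a\id_{A^c}+b\id_A)$ is settled. That case follows from Step 1 after shifting by the constant $a$ via the same identity.

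\textbf{Step 3 (extension and the main obstacle).} Norm-continuity plus the approximation $X^{(n)}=2^{-n}\lfloor 2^nX\rfloor$ from Lemma \ref{lem:bounded-extension} extends $\rho=E_\ell$ to all of $L^\infty$, since $E_\ell$ is continuous. The main obstacle is the Cauchy-equation step: showing that $t\mapsto\rho(x\id_A)$ with $\p(A)=t$ is linear, i.e.\ obtaining boundedness or monotonicity of $g_x$. Additivity does give $g_x(1/n)=\ell(x)/n$, and hence linearity on rationals. For irrational $t$, I would approximate: choose $A_n\subseteq A$ with rational probabilities increasing to $t$. The difference $x\id_A-x\id_{A_n}$ is not small in norm, so instead I would use additivity once more, $g_x(t)=g_x(q)+g_x(t-q)$. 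This reduces the problem to boundedness of $g_x$ near $0$, which comes from $\sup_{\p(A)\le 1}|\rho(x\id_A)|<\infty$. That supremum I would extract from continuity of $\rho$ on the compact-in-norm-free but norm-bounded family, arguing by contradiction with a sequence built from disjoint sets and additivity.
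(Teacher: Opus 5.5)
Your architecture matches the paper's: a Cauchy equation in the probability variable (from law invariance and atomlessness), boundedness to force linearity, induction on the number of values via modularity against a constant, and dyadic approximation. The paper works directly with two-valued $a\id_A+b\id_{A^c}$, which is immaterial.

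The genuine gap is the step you flag yourself, boundedness of $g_x$ on $[0,1]$, and neither of your justifications works.
\begin{itemize}
\item In Step 2, continuity at $0$ bounds $\rho$ only on a small ball $\{\|Z\|_\infty<\delta\}$, so it covers only $|x|<\delta$. Continuity does not give boundedness on norm-bounded sets, and $\{x\id_A:A\in\mathcal F\}$ is norm-bounded but uniformly discrete: distinct members are at distance $|x|$.
\item Step 3 fails for the same reason. For disjoint $A_k$, the variables $x\id_{A_1\cup\cdots\cup A_n}$ never converge in $\|\cdot\|_\infty$, so moving the \emph{set} never produces a norm-small perturbation.
\end{itemize}
Concretely, work on simple random variables. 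Let $h$ be a discontinuous additive function on $\R$, and let $\phi$ be continuous with $\phi=0$ on $(-\infty,1]$, $\phi$ linear on $[1,3/2]$, and $\phi=1$ on $[3/2,\infty)$. Define $X\mapsto\sum_i\phi(x_i)\,h(\p(X=x_i))$, summing over the distinct values of $X$. This functional is law-invariant and modular, and it is locally constant around $0$ and around every $2\id_A$. Yet $\rho(2\id_A)=h(\p(A))$ is unbounded. What it violates is continuity under level perturbations, e.g.\ at the constant $5/4$ along $5/4+\varepsilon\id_B$. The paper closes this step by citing a result of Tradacete and Villanueva: a norm-continuous valuation on a Banach lattice is bounded on order-bounded sets.

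Within your framework, the missing idea is to perturb the \emph{level}, and the identity you already use for the two-valued case does exactly this. For $0\le y<x$ and $\p(A)=t$, modularity on the pair $(y,\,x\id_A)$ gives
\[
\rho(x\id_A+y\id_{A^c})=\ell(y)+g_x(t)-g_y(t),
\]
and the same identity holds for $x<y\le 0$. The argument on the left lies within $|x-y|$ of both constants $x$ and $y$. Choose $\delta(z)>0$ so that $\|Z-z\|_\infty<\delta(z)$ implies $|\rho(Z)-\ell(z)|<1$. Then $\sup_{t\in[0,1]}|g_w(t)-g_z(t)|\le 1+|\ell(w)-\ell(z)|$ whenever $wz\ge 0$ and $|w-z|<\delta(z)$. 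Hence the set of $x\ge 0$ with $g_x$ bounded is relatively open and closed in $[0,\infty)$ and contains $0$, so it is all of $[0,\infty)$; likewise for $(-\infty,0]$. Equivalently, telescope $g_x=\sum_k(g_{c_{k+1}}-g_{c_k})$ over a partition of $[0,x]$ made fine enough by compactness. With this, your Cauchy argument gives $g_x(t)=t\,\ell(x)$ for every $x$, and the remaining steps go through as you describe. You also get a self-contained replacement for the external boundedness result the paper cites.
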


\begin{proof}
(ii)$\Rightarrow$(i): 
The pointwise identity $\ell(a)+\ell(b)=\ell(a\vee b)+\ell(a \wedge b)$ gives
$E_\ell(X)+E_\ell(Y)=E_\ell(X\vee Y)+E_\ell(X\wedge Y)$.

(i)$\Rightarrow$(ii):
Since $\rho$ is both submodular and supermodular, it is called a \emph{valuation}:
\begin{equation}\label{eq:valuation}
\rho(X)+\rho(Y)=\rho(X\vee Y)+\rho(X\wedge Y),\qquad \forall\,X,Y\in L^\infty.
\end{equation}

Define $\ell(x)=\rho(x\id_\Omega)$. Without loss of generality, assume  $\rho(0)=\ell(0)=0$.  Note that $\ell$ is continuous since $\rho$ is $\|\cdot\|_\infty$-continuous and $\|x\id_\Omega - y\id_\Omega\|_\infty = |x-y|$.

Fix $a>b$. For $t\in[0,1]$, choose $A\in\mathcal F$ with $\p(A)=t$ and set
$$
\psi(t)=\rho(a\id_A+b\id_{A^c}),
$$
which is well-defined by law-invariance. For $s,t\ge 0$ with $s+t\le 1$, choose disjoint $A,B$ with $\p(A)=t$, $\p(B)=s$. Setting $X=a\id_A+b\id_{A^c}$ and $Y=a\id_B+b\id_{B^c}$, we have $X\wedge Y=b$ and $X\vee Y=a\id_{A\cup B}+b\id_{(A\cup B)^c}$, so \eqref{eq:valuation} gives
\begin{equation}\label{eq:cauchy}
\varphi(t+s)=\varphi(t)+\varphi(s), \qquad s,t\ge 0,\; s+t\le 1,
\end{equation}
where $\varphi(t)=\psi(t)-\psi(0)$. The family $\{a\id_A+b\id_{A^c}:A\in\mathcal F\}$ is order-bounded in $L^\infty$; indeed, with $m=|a|\vee |b|$,
$$
-m\id_\Omega\le a\id_A+b\id_{A^c}\le m\id_\Omega,\qquad A\in\mathcal F,
$$
and therefore
$$
\|a\id_A+b\id_{A^c}\|_\infty\le |a|\vee |b|,\qquad A\in\mathcal F.
$$
Since $\rho$ is a $\|\cdot\|_\infty$-continuous valuation on the Banach lattice $L^\infty$, it is bounded on order-bounded sets \cite[Proposition~3.2]{TV20}. Hence $\psi$ is bounded on $[0,1]$, and so is $\varphi$. Let $M=\sup_{u\in[0,1]}|\varphi(u)|<\infty$. For any $u\in[0,1/n]$, \eqref{eq:cauchy} gives $\varphi(nu)=n\varphi(u)$ and thus $|\varphi(u)|\le M/n$, so $\varphi$ is continuous at $0$. By \eqref{eq:cauchy}, continuity at $0$ implies continuity on $[0,1]$, and $\varphi(q)=q\varphi(1)$ for rational $q\in[0,1]$. Therefore, by density of rationals, $\varphi(t)=t\varphi(1)$ for all $t\in[0,1]$. Equivalently, $\varphi$ is additive and bounded on an interval, and hence linear on $[0,1]$. Therefore
\begin{equation}\label{eq:twopoint}
\rho(a\id_A+b\id_{A^c})=(1-t)\ell(b)+t\ell(a)=E_\ell(a\id_A+b\id_{A^c}).
\end{equation}

Now let $X=\sum_{i=1}^n x_i\id_{A_i}$ be simple with $x_1>\cdots>x_n$ and $t_i=\p(A_i)$. We show by induction on $n$ that
\begin{equation}\label{eq:simple}
\rho(X)=E_\ell(X).
\end{equation}
The case $n=1$ is the definition of $\ell$, and $n=2$ is \eqref{eq:twopoint}. For the inductive step, define $Y=x_2\id_{A_1}+x_1\id_{A_1^c}$ (a two-valued random variable). Then $X\vee Y=x_1$ and $X\wedge Y=x_2\id_{A_1\cup A_2}+\sum_{i=3}^n x_i\id_{A_i}$, which takes $n-1$ distinct values. Applying \eqref{eq:valuation},
$$
\rho(X)=\ell(x_1)+\rho(X\wedge Y)-\rho(Y).
$$
By the induction hypothesis and \eqref{eq:twopoint},
$$
\rho(X\wedge Y)=(t_1+t_2)\ell(x_2)+\sum_{i=3}^n t_i\ell(x_i),\qquad
\rho(Y)=(1-t_1)\ell(x_1)+t_1\ell(x_2),
$$
and substituting yields \eqref{eq:simple}.

Finally, for general $X\in L^\infty$, let $X_k=2^{-k}\lfloor 2^k X\rfloor$. Then $\|X_k-X\|_\infty\le 2^{-k}$, so $\rho(X_k)\to\rho(X)$ by $\|\cdot\|_\infty$-continuity. By \eqref{eq:simple}, $\rho(X_k)=E_\ell(X_k)$. Since $\ell(X_k)\to\ell(X)$ almost surely and is uniformly bounded, dominated convergence gives $E_\ell(X_k)\to E_\ell(X)$, hence $\rho(X)=E_\ell(X)$.
\end{proof}

\subsection{Certainty equivalents with respect to expected losses}
\label{sec:CE}
The certainty equivalents are a popular object in decision theory that play a similar role to risk measures. They are not monetary risk measures as they do not satisfy cash invariance in general, but they satisfy the property $\rho(c)=c$ for all $c\in \R$, so that the input and output are both the monetary scale.
 
Let $\ell:\R\to\R$ be a strictly increasing function, and define its generalized inverse by
$$
\ell^{-1}(y)=\inf\{x\in\R:\ell(x)\ge y\},\qquad y\in\R.
$$
The certainty equivalent (CE) with respect to the   loss function $\ell$ is given by the mapping
\begin{align}
\label{eq:CE-loss}
\mathrm{CE}_\ell (X)=
\ell^{-1}(\E[\ell(X)]) ,~~~~X\in L^\infty.
\end{align}

\begin{proposition}\label{prop:CE}
The certainty equivalent functional $\mathrm{CE}_\ell$ in \eqref{eq:CE-loss} is submodular if and only if $\ell$ is convex. 
\end{proposition}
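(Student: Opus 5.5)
Write $u=\E[\ell(X)]$, $v=\E[\ell(Y)]$, $p=\E[\ell(X\vee Y)]$, $q=\E[\ell(X\wedge Y)]$. By the modularity of $E_\ell$ (Theorem \ref{th:modular-char}, direction (ii)$\Rightarrow$(i), which is just the pointwise identity), we have $p+q=u+v$. Pointwise monotonicity gives $q\le u\wedge v\le u\vee v\le p$. So the pair $(p,q)$ is a ``spread'' of $(u,v)$ with the same sum. Submodularity of $\mathrm{CE}_\ell$ then reads $\ell^{-1}(p)+\ell^{-1}(q)\le \ell^{-1}(u)+\ell^{-1}(v)$. Everything reduces to a one-dimensional statement about $g=\ell^{-1}$ on the range of $\ell$ (more precisely, on the convex hull of values of $\E[\ell(\cdot)]$).

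\emph{Sufficiency.} If $\ell$ is convex and strictly increasing, it is continuous on $\R$. Its range is an open interval $I$, and $g=\ell^{-1}$ is the genuine inverse on $I$; it is increasing and concave. For concave $g$ and $q\le u,v\le p$ with $p+q=u+v$, the standard majorization (Karamata) inequality gives $g(p)+g(q)\le g(u)+g(v)$. Concretely, write $u=\lambda p+(1-\lambda)q$ and $v=(1-\lambda)p+\lambda q$, apply concavity to each, and add. The case $p=q$ is trivial. The expectations lie in $I$ because bounded $X$ takes values in a compact set on which $\ell$ is continuous.

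\emph{Necessity.} Assume $\mathrm{CE}_\ell$ is submodular; I will show $g$ is concave on the range, hence $\ell$ convex. I test on two-point constructions on an atomless space. Take $a<b$ with $A,B$ disjoint events of probability $1/2$ each. Set $X=a\id_A+b\id_B$ and $Y=b\id_A+a\id_B$. Then $X\vee Y=b$ and $X\wedge Y=a$ are constants, while $X,Y$ both have $\mathrm{CE}=g\bigl((\ell(a)+\ell(b))/2\bigr)$. Submodularity gives $a+b\le 2g\bigl((\ell(a)+\ell(b))/2\bigr)$, i.e. $\ell\bigl((a+b)/2\bigr)\le(\ell(a)+\ell(b))/2$; here I use that $\ell$ is strictly increasing and that $g(\ell(x))=x$. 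This is midpoint convexity of $\ell$.

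The main obstacle is upgrading midpoint convexity to convexity without assuming continuity of $\ell$. Two routes are available. First, a midpoint-convex function that is bounded above on an interval is convex (Bernstein--Doetsch), and a monotone function is locally bounded. Alternatively, I would repeat the construction with probabilities $t$ and $1-t$: take $X=a\id_A+b\id_{A^c}$ with $\p(A)=t$, and $Y=b$ on a set $B\subseteq A$ and $a$ elsewhere, suitably arranged. The cleanest version is $Y=b\id_A+a\id_{A^c}$ with $t$ arbitrary. Then $X\vee Y=b$ and $X\wedge Y=a$, and submodularity yields
\[
a+b\le g\bigl(t\ell(a)+(1-t)\ell(b)\bigr)+g\bigl((1-t)\ell(a)+t\ell(b)\bigr).
\]
This alone does not give convexity directly, so I would rely on the Bernstein--Doetsch route. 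Some care is needed with the generalized inverse when $\ell$ has jumps. Strict monotonicity still gives $g(\ell(x))=x$. Moreover, $g(y)\ge x$ whenever $y>\ell(x^-)$, which suffices to turn $2g(m)\ge a+b$ into $m\ge\ell\bigl(((a+b)/2)^-\bigr)$. A midpoint-convex monotone $\ell$ cannot jump, since midpoint convexity forces local boundedness and hence continuity, so this inequality closes the argument.
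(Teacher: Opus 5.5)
Your proof follows the paper's argument. Sufficiency uses concavity of $\ell^{-1}$ together with the two-point majorization $q\le u,v\le p$, $p+q=u+v$. Necessity uses the half--half construction, which makes $X\vee Y$ and $X\wedge Y$ constant, followed by Bernstein--Doetsch with local boundedness coming from monotonicity. The only difference is that you state the necessary condition as midpoint convexity of $\ell$, where the paper uses midpoint concavity of $\ell^{-1}$ on the range of $\ell$. You also try to handle possible jumps of $\ell$ explicitly. The paper passes over this point: it applies Bernstein--Doetsch on the range of $\ell$, which is an interval only when $\ell$ is continuous.

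Your last step is loose as written. The fact you cite, that $g(y)\ge x$ whenever $y>\ell(x^-)$, is the converse of what you need. The correct implication is this: if $y<\ell(x^-)$, then some $z<x$ has $\ell(z)>y$, so $g(y)\le z<x$. With $c=(a+b)/2$, this yields only
\begin{equation*}
\ell(c^-)\le \tfrac12\bigl(\ell(a)+\ell(b)\bigr),
\end{equation*}
not midpoint convexity of $\ell$ itself. So appealing to ``a midpoint-convex monotone $\ell$ cannot jump'' assumes what remains to be shown. Also, local boundedness comes from monotonicity, not from midpoint convexity.

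The repair is short:
\begin{itemize}
\item Apply the displayed inequality at $a'\uparrow a$, $b'\uparrow b$. Using left-continuity of $x\mapsto\ell(x^-)$, you get that the left-continuous version $\ell_-(x)=\ell(x^-)$ is midpoint convex.
\item $\ell_-$ is increasing, hence locally bounded, so Bernstein--Doetsch makes $\ell_-$ convex and therefore continuous.
\item Continuity of $\ell_-$ gives $\ell(x^-)=\ell_-(x)=\ell_-(x^+)=\ell(x^+)$ for every $x$. Hence $\ell=\ell_-$ is convex.
\end{itemize}
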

\begin{proof}
For the ``if'' direction, assume that $\ell$ is convex. Since $\ell$ is strictly increasing, its generalized inverse $\ell^{-1}$ is concave. The submodularity of $\mathrm{CE}_\ell $ is equivalent to
\begin{equation}
\label{eq:CE1}
\ell^{-1}(\underbrace{\E[\ell(X \vee Y)]}_{=a})
+\ell^{-1}(\underbrace{\E[\ell(X \wedge Y)]}_{=b})
\le 
\ell^{-1}(\underbrace{\E[\ell(X)]}_{=c})
+\ell^{-1}(\underbrace{\E[\ell(Y)]}_{=d})
\end{equation}
for all $X,Y\in L^\infty$. Because $a+b=c+d$, and $\ell$ is increasing with
$X\vee Y\ge X$, $X\vee Y\ge Y$, $X\wedge Y\le X$, and $X\wedge Y\le Y$, we have
$a\ge c$, $a\ge d$, $b\le c$, and $b\le d$. Hence $c,d\in [b,a]$, and
in dimension two this is equivalent to $(a,b)$ majorizing $(c,d)$, since
$a+b=c+d$ and $\max\{a,b\}=a\ge \max\{c,d\}$,
the concavity of $\ell^{-1}$ guarantees \eqref{eq:CE1}. For the ``only if'' direction, suppose that \eqref{eq:CE1} holds. 
For any $a\ge b$ in the 
range of $\ell$,
we can write $a=\ell(x)$ and $b=\ell(y)$ for some $x,y\in \R$ with $x\ge y$. 
Let $X=x\id_{A}+y\id_{A^c}$
and $Y=y\id_{A}+x\id_{A^c}$
where $A$ is an event with $\p(A)=1/2$.
Then we have $X\vee Y=x$, $X\wedge Y=y$, and $\E[\ell(X)]=\E[\ell(Y)]=(a+b)/2$.
Hence,
\eqref{eq:CE1} becomes
$$
\ell^{-1}( a  )
+\ell^{-1}( b )
\le 
2\ell^{-1}\left( \frac{a+b}{2}\right),
$$
and thus $\ell^{-1}$ is midpoint concave on the range of $\ell$.
Since $\ell^{-1}$ is increasing on $\R$, it is bounded on every closed interval and hence locally bounded. A midpoint-concave and locally bounded function on an interval is concave. Therefore, $\ell^{-1}$ is concave, and $\ell$ is convex.
\end{proof}
For this family, submodularity is characterized by convexity of the underlying loss function $\ell$. Specifically, $\mathrm{CE}_\ell$ is submodular if and only if $\ell^{-1}$ is concave, which is equivalent to $\ell$ being convex.

 
 \section{Distortion risk measures and deviation measures}
 \label{sec:4prime}
\subsection{Distortion risk measures}


We now study submodularity for distortion risk measures and coherent risk measures. 
For 
a distortion risk measure  $\rho_\phi$ in \eqref{eq:disRM}, coherence holds  if and only if $\phi$ is concave, and it is also equivalent to several other properties, such as convex-order consistency, subadditivity, quasi-convexity, and concavity on mixtures; see, for example, Theorem~3 of \citet{WWW20}. 
 
Using the main result of \citet{CC18} and known characterizations of distortion risk measures due to \citet{Y87}, we arrive at the following characterization. The proof of the result is simple, but it offers several new characterizations of coherent distortion risk measures.
\begin{theorem}\label{th:distortion}
 For a law-invariant risk measure $\rho:L^\infty\to\R$, 
the following statements are equivalent:
\begin{enumerate}[(i)]
\item  $\rho$ is  coherent and submodular; 
\item  $\rho$ is  positively homogeneous, monetary, and submodular; 
\item $\rho$ is  submodular, comonotonic-additive, and monetary; 
\item $\rho$ is  convex (or quasi-convex), comonotonic-additive, and monetary; 
\item $\rho$ is a distortion risk measure with a concave distortion function; 
\item $\rho$ is a submodular distortion risk measure. 
\end{enumerate} 
\end{theorem}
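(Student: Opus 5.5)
We prove the equivalences through the cycle (i)$\Rightarrow$(ii)$\Rightarrow$(iii)$\Rightarrow$(iv)$\Rightarrow$(v)$\Rightarrow$(vi)$\Rightarrow$(i). Some of these steps are trivial, and the remaining ones come from \citet{CC18}, Theorem~\ref{th:MM08}, and the representation of law-invariant Choquet functionals in \eqref{eq:disRM}.

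The step (i)$\Rightarrow$(ii) is immediate, since coherence contains monotonicity, cash invariance and positive homogeneity. For (ii)$\Rightarrow$(iii), we invoke the result of \citet{CC18} recalled in Section~\ref{sec:2}: a submodular, positively homogeneous, monetary $\rho$ is a Choquet coherent risk measure, and Choquet mappings are comonotonic-additive. For (iii)$\Rightarrow$(iv), Theorem~\ref{th:MM08} shows that every submodular monetary risk measure is convex, and convexity implies quasi-convexity.

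The step (iv)$\Rightarrow$(v) is the one with real content. By \citet{S86}, a monotone, comonotonic-additive functional with $\rho(1)=1$ is a Choquet integral with respect to a capacity $v$ satisfying $v(\Omega)=1$. Law invariance then forces $v=\phi\circ\p$ for an increasing $\phi$ with $\phi(0)=0$ and $\phi(1)=1$. Here the atomless assumption is what makes $v(A)$ depend only on $\p(A)$: take $X=\id_A$ and $X=\id_B$ with $\p(A)=\p(B)$. This gives the form \eqref{eq:disRM}, so $\rho$ is a distortion risk measure. We then use Theorem~3 of \citet{WWW20}, quoted in Section~\ref{sec:2}: for distortion risk measures, quasi-convexity and convexity are each equivalent to concavity of $\phi$. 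This yields (v). The quasi-convex reading of (iv) is handled in the same way, because the \citet{WWW20} equivalence already covers quasi-convexity.

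For (v)$\Rightarrow$(vi), we must show that $\rho_\phi$ is submodular when $\phi$ is concave. A concave increasing $\phi$ with $\phi(0)=0$ is subadditive, that is, submodular as a function of probabilities on nested sets. The cleanest route, however, is to note that $\rho_\phi$ is coherent (by \citet{WWW20}) and Choquet. The converse direction of \citet{CC18} then says that every Choquet coherent risk measure is submodular.

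Alternatively, we can verify submodularity directly. Using the identities $\{X\vee Y>x\}=\{X>x\}\cup\{Y>x\}$ and $\{X\wedge Y>x\}=\{X>x\}\cap\{Y>x\}$, the Choquet integral reduces submodularity to submodularity of the capacity $\phi\circ\p$ on events. That property, $\phi(\p(A\cup B))+\phi(\p(A\cap B))\le\phi(\p(A))+\phi(\p(B))$, follows from concavity of $\phi$ because $\p(A\cup B)+\p(A\cap B)=\p(A)+\p(B)$ and both $\p(A),\p(B)$ lie between the two extremes. The argument is the same majorization argument used in Proposition~\ref{prop:CE}. We then integrate over $x$, noting that the constant $\phi(1)$ terms cancel on the negative half-line.

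Finally, for (vi)$\Rightarrow$(i), a distortion risk measure is monetary and positively homogeneous. Together with submodularity this is (ii), and (ii) gives Choquet coherence, so $\rho$ is coherent and (i) holds.

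The only delicate point is (iv)$\Rightarrow$(v), namely passing from law invariance of the Choquet functional to law invariance of its capacity. That passage needs the atomless space and \citet{S86}; everything else is citation and bookkeeping.
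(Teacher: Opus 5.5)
Your proof is correct and follows the same cycle (i)$\Rightarrow$(ii)$\Rightarrow\cdots\Rightarrow$(vi)$\Rightarrow$(i) as the paper, with essentially the same ingredients: \citet{CC18} for (ii)$\Rightarrow$(iii), (v)$\Rightarrow$(vi) and (vi)$\Rightarrow$(i), and \citet{S86} plus Theorem~3 of \citet{WWW20} for (iv)$\Rightarrow$(v). The one real deviation is (iii)$\Rightarrow$(iv), where you get convexity directly from Theorem~\ref{th:MM08} instead of the paper's route through the Choquet representation, Corollary~2.3 of \citet{CC18}, and positive homogeneity; this is slightly cleaner because it avoids invoking Schmeidler at that step (a minor aside: law invariance alone already makes $v(A)$ depend only on $\p(A)$, and atomlessness is what makes $\phi$ defined on all of $[0,1]$ and increasing).
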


\begin{proof}
In each implication below, we use only the properties explicitly assumed in that implication (plus law invariance from the theorem statement).
 
(i)$\Rightarrow$(ii):
Immediate by definition.
 
(ii)$\Rightarrow$(iii):
A monetary and positively homogeneous functional on $L^\infty$ is Lipschitz, hence  $L^\infty$-continuous. Together with submodularity, Theorem~2.2 of \citet{CC18} gives a Choquet integral representation, which satisfies comonotonic additivity. 

(iii)$\Rightarrow$(iv): Corollary~2.3 of \citet{CC18} shows that submodularity and subadditivity coincide for Choquet integrals.
Subadditivity is equivalent to convexity for positively homogeneous functionals. Convexity is equivalent to quasi-convexity for monetary risk measures \citep[Proposition 2.1]{CMMM11}.

(iv)$\Rightarrow$(v): This follows directly from the standard characterization of distortion risk measures; see \citet{S86} and \citet{Y87}.

(v)$\Rightarrow$(vi): It is well known that convexity of a distortion risk measure is equivalent to concavity of the distortion, and submodularity follows from Corollary~2.3 of \citet{CC18}.

(vi)$\Rightarrow$(i):  This follows from Corollary~2.3 of \citet{CC18}.
\end{proof}

Theorem~\ref{th:distortion} concludes the case of law invariance and coherence: submodularity is equivalent to   being a distortion risk measure in this setting.  
In Section \ref{sec:5} we will focus on risk measures without positive homogeneity, where the situation becomes more delicate.

\subsection{Deviation measures}

We next analyze various deviation measures introduced by \citet{RUZ06}.
Four commonly used examples,  the variance, the mean-median deviation, the range, and the Gini deviation  are defined by, respectively, for $X\in L^\infty$,
\begin{equation}
\label{eq:devms}
\begin{aligned}
\var(X)&=\E\bigl[(X-\E[X])^2\bigr],\\
\MMD(X)&=\min_{m\in\mathbb{R}}\E\bigl[|X-m|\bigr]
\;=\E\bigl[|X-\VaR_{1/2}(X)|\bigr],\\
\mathrm{range}(X)&=\esssup X-\essinf X,\\
\mathrm{Gini}(X)&=\frac{1}{2}\,\E\bigl[|X-X'|\bigr].
\end{aligned}
\end{equation}

where $\esssup X$ and $\essinf X$ 
are the essential supremum and essential infimum of $X$, respectively,  and $X'$ is an independent copy of $X$. 
These examples belong to two general classes. 
The first class contains the mappings  $    \nu_\psi$ defined by 
\begin{align}
    \label{eq:deviation1}
    \nu_\psi(X)= \E [\psi(X-X')],~~~X\in L^\infty,
\end{align} 
for convex functions $\psi:\R\to \R$, where $X'$ is an independent copy of $X$. The class  \eqref{eq:deviation1} includes  the variance and the Gini deviation as special cases (with
 $\psi(u)=u^2/2$ and $\psi(u)=|u|/2$, respectively). 
The second class contains the Choquet mappings $\rho_\phi$  given by 
\begin{equation}
\label{eq:deviation2}
\rho_{\phi}(X)=\int_\R \phi(\p(X>x))\d x,~~~X\in L^\infty,
\end{equation}
where $\phi$ is a concave function that satisfies $\phi(0)=\phi(1)=0$.
The class \eqref{eq:deviation2} includes the mean-median deviation ($\phi(t)=t\wedge (1-t)$), the range ($\phi(t)=\id_{\{0<t<1\}}$), and the Gini deviation ($\phi(t)=t(1-t)$; in both classes); see Table 1 of \citet{WWW20a}.

\begin{proposition}
\label{prop:dev}
The mappings $\nu_\psi$ in \eqref{eq:deviation1} for convex $\psi$
and $\rho_\phi$  in \eqref{eq:deviation2} for concave $\phi$
 are submodular.  
 As a consequence, the mappings in \eqref{eq:devms} are submodular. 
\end{proposition}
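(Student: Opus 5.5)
For $\nu_\psi$ we will work with the pair. Let $(X',Y')$ be an independent copy of $(X,Y)$. Then
$$\nu_\psi(X)=\E[\psi(X-X')],\qquad \nu_\psi(X\vee Y)=\E[\psi(X\vee Y-X'\vee Y')],$$
and similarly for the other two terms. It therefore suffices to prove the deterministic inequality
$$\psi(a\vee b-a'\vee b')+\psi(a\wedge b-a'\wedge b')\le \psi(a-a')+\psi(b-b')$$
for all reals $a,b,a',b'$, and then take expectations.

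We split into cases according to the orderings of $(a,b)$ and $(a',b')$. If both orderings agree, say $a\ge b$ and $a'\ge b'$, the two sides coincide, and symmetrically for $a\le b$, $a'\le b'$. Otherwise, say $a\ge b$ and $a'\le b'$, the left side is $\psi(a-b')+\psi(b-a')$ and the right side is $\psi(a-a')+\psi(b-b')$. Both pairs have the same sum $a+b-a'-b'$. Moreover $a-a'$ and $b-b'$ both lie between $b-a'$ and $a-b'$, since $b-a'\le a-a'\le a-b'$ and $b-a'\le b-b'\le a-b'$. So $(a-b',b-a')$ majorizes $(a-a',b-b')$, and convexity of $\psi$ gives the inequality, exactly as in the proof of Proposition \ref{prop:CE}. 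This majorization check is the only genuine step.

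For $\rho_\phi$ with concave $\phi$ and $\phi(0)=\phi(1)=0$ we argue via events. For each $x$, put $A=\{X>x\}$ and $B=\{Y>x\}$. Then $\{X\vee Y>x\}=A\cup B$ and $\{X\wedge Y>x\}=A\cap B$. Hence it suffices to show that $v=\phi\circ\p$ is submodular on events:
$$\phi(\p(A\cup B))+\phi(\p(A\cap B))\le\phi(\p(A))+\phi(\p(B)).$$
Integrating this over $x\in\R$ gives the claim. The integrals are finite because the integrands vanish outside $[\essinf,\esssup]$, using $\phi(0)=\phi(1)=0$. The event inequality holds because $\p(A\cup B)+\p(A\cap B)=\p(A)+\p(B)$ and $\p(A),\p(B)\in[\p(A\cap B),\p(A\cup B)]$, so concavity of $\phi$ gives it by the same two-point majorization. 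One point needs care: $\phi$ is concave on $[0,1]$ but may be discontinuous at the endpoints, as for the range. The majorization inequality $\phi(p)+\phi(q)\le\phi(r)+\phi(s)$ for $p\le r,s\le q$ with $p+q=r+s$ still holds for any concave function on the closed interval, since only concavity along the segment is used. So no continuity is needed.

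Finally, the four mappings in \eqref{eq:devms} are special cases. The variance equals $\nu_\psi$ with $\psi(u)=u^2/2$, since $\E[(X-X')^2]/2=\var(X)$, and the Gini deviation equals $\nu_\psi$ with $\psi(u)=|u|/2$. The mean-median deviation and the range are $\rho_\phi$ with the concave distortions listed before the proposition ($\phi(t)=t\wedge(1-t)$ and $\phi(t)=\id_{\{0<t<1\}}$, respectively), using the identification from Table 1 of \citet{WWW20a}.
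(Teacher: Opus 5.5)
\textbf{The $\nu_\psi$ step is argued backwards.} Your treatment of $\nu_\psi$ follows the paper's route: pointwise submodularity of $(s,t)\mapsto\psi(s-t)$ on $\R^2$, applied to $(X,X')$ and $(Y,Y')$, then expectations. But the step you call the only genuine one is argued backwards.

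In the mixed case $a\ge b$, $a'\le b'$, your chain $b-a'\le a-a'\le a-b'$ and $b-a'\le b-b'\le a-b'$ is false. Both $a-a'\le a-b'$ and $b-a'\le b-b'$ require $b'\le a'$. For example, $a=b'=1$, $b=a'=0$ gives $a-b'=b-a'=0$ but $a-a'=1$, $b-b'=-1$.

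Moreover, if $(a-b',b-a')$ did majorize $(a-a',b-b')$, convexity of $\psi$ would give $\psi(a-b')+\psi(b-a')\ge\psi(a-a')+\psi(b-b')$. That is the reverse of what you need. The analogy with Proposition~\ref{prop:CE} misleads you, because there the function $\ell^{-1}$ is concave, not convex.

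The correct observation is that $a-a'$ is the largest and $b-b'$ the smallest of the four differences, so $a-b',\,b-a'\in[b-b',a-a']$. Hence the right-hand pair $(a-a',b-b')$ majorizes $(a-b',b-a')$. Since the sums are equal, convexity gives $\psi(a-b')+\psi(b-a')\le\psi(a-a')+\psi(b-b')$. Your two mistakes cancel, so the inequality you state is true, but the argument as written does not establish it. With the ordering corrected, this part is complete.

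\textbf{The $\rho_\phi$ argument is correct and differs from the paper's.} The paper observes that $\rho_\phi$ is a convex, norm-continuous Choquet functional (Theorems~1 and~3 of \citet{WWW20}). It then invokes Corollary~2.3 of \citet{CC18}, by which submodularity and subadditivity coincide for Choquet integrals.

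You instead check directly that $A\mapsto\phi(\p(A))$ is a submodular set function, and then integrate over $x$ using $\{X\vee Y>x\}=\{X>x\}\cup\{Y>x\}$ and $\{X\wedge Y>x\}=\{X>x\}\cap\{Y>x\}$. Here the majorization does go the right way: the spread pair $(\p(A\cap B),\p(A\cup B))$ sits on the side you want smaller, and $\phi$ is concave.

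Your route is self-contained and uses neither norm-continuity nor the Chateauneuf--Cornet theorem. Your remark that only concavity along segments is needed correctly covers the discontinuous distortion of the range. For finiteness of the integrals, you should also note that such a $\phi$ is bounded on $[0,1]$ (between $0$ and any affine majorant). It is not enough that the integrand vanishes off a bounded interval.

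The paper's route is shorter given the cited results. It also ties the claim to the equivalence of submodularity and subadditivity for Choquet integrals used elsewhere in the paper. Your identification of the four mappings in \eqref{eq:devms} matches the paper's.
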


\begin{proof}
We first show the submodularity of  the class $\nu_\psi$ in \eqref{eq:deviation1}. 
Define
$$
g_\psi(s,t)=\psi(s-t),\qquad s,t\in\mathbb{R}.
$$
It is straightforward to verify that $g_\psi$ is submodular on $\mathbb{R}^2$. 
Take $X,Y\in L^\infty$, and let $(X',Y')$ be an independent copy of $(X,Y)$. Applying the above inequality pointwise to $(X,X')$ and $(Y,Y')$ gives
$$
\psi\bigl((X\vee Y)-(X'\vee Y')\bigr)+\psi\bigl((X\wedge Y)-(X'\wedge Y')\bigr)
\le \psi(X-X')+\psi(Y-Y').
$$
Hence the functional $\nu_\psi$
is submodular. 

For the class $\rho_\phi$ in \eqref{eq:deviation2},
first note that it is a convex Choquet mapping and it is norm-continuous; see Theorems~1 and~3 of \citet{WWW20}. By Corollary~2.3 of \citet{CC18}, we get that $\rho_\phi$ is submodular.
\end{proof}

Although the two very general classes of deviation measures in \eqref{eq:deviation1}--\eqref{eq:deviation2} are submodular, we note that not all classical deviation measures are submodular. In particular, the standard deviation, denoted by $\mathrm{SD}=\sqrt{\var}$, is not submodular, as shown by the next example.
\begin{example}\label{ex:SD}
    Take \(Y= 0\) and $X$ with $\p(X=-2)=\p(X=2)=1/4$  and $\p(X=0)=1/2$. 
Then \(\E[X]=0\) and \(\E[X^2]=2\), so \(\mathrm{SD}(X)=\sqrt 2\).
It is straightforward to compute $\mathrm{SD}(X\vee 0)=\mathrm{SD}(X\wedge 0)=\sqrt 3/2$. 
Hence, $
\mathrm{SD}(X\vee 0)+\mathrm{SD}(X\wedge 0)=\sqrt3 >\sqrt2 =\mathrm{SD}(X)+\mathrm{SD}(0),
 $
and thus the standard deviation is not submodular.
\end{example}
As shown by Theorem~2.1 of \citet{CC18}, a mapping $\rho$ on a finite space satisfying submodularity, positive homogeneity, and translation invariance (that is, $\rho(X+c)=\rho(X)+c$ for all constants $c$) must be Choquet. Since $\mathrm{SD}$ does not admit a Choquet representation on finite spaces, it cannot be submodular by using that result; Example \ref{ex:SD} provides a direct verification.
 
 \section{Four classes of convex risk measures}
 \label{sec:5}
 
 As we have seen from   Theorem \ref{th:distortion}, coherent risk measures that are submodular 
 are well understood and fully characterized.
The picture for submodular monetary risk measures is much less complete, although we know from
 Theorem \ref{th:MM08}  that they are convex. 
 In this section, we consider four classes of convex risk measures that admit explicit formulas. 
These four classes are quite general and include most commonly used convex risk measures.

\subsection{Shortfall risk measures}
Let $\ell:\R\to(-\infty,\infty]$ be a strictly increasing convex function, which we call a loss function.
A \emph{shortfall risk measure} as in \citet{FS02} with loss function $\ell$ is defined by
$$
\rho_\ell(X)=\inf\{m\in\R:\E[\ell(X-m)]\le \ell(0)\},\qquad X\in L^\infty.
$$
Because $\ell$ is continuous and strictly increasing,
$\rho_\ell(X)$ satisfies
\begin{equation}\label{eq:rho}
\E[\ell(X-\rho_\ell(X))]=\ell(0).
\end{equation}
Now replace $\ell$ by
$$
\tilde\ell(x)=\ell(x)-\ell(0),\qquad x\in\R.
$$
Then $\tilde\ell$ is still strictly increasing and convex, and
$$
\E[\ell(X-m)]\le \ell(0)\iff \E[\tilde\ell(X-m)]\le 0.
$$
Hence, this normalization does not change the shortfall risk measure and we may assume $\ell(0)=0$ without loss of generality. Under this normalization, \eqref{eq:rho} reads as
$
\E[\ell(X-\rho_\ell(X))]=0.
$

In the next result, we consider the simpler case where $\ell\in C^2(\R)$, and let
$$
R(x)=\frac{\ell''(x)}{\ell'(x)}~ \mbox{for} ~x\in\R,~~~ \mbox{and}~~~
L=\inf_{y\in\R}R(y).
$$ 
Note that $R$ is nonnegative since $\ell$ is convex. 
The general case where $\ell$ is convex but not necessarily in $C^2(\R)$ will be treated in Section \ref{app:SR-proof} of Appendix \ref{app:proofs}, which requires more involved analysis. 
 
\begin{theorem}\label{thm:linear-iff}
Assume $\ell\in C^2(\R)$ is strictly increasing and convex with $\ell(0)=0$. Then the shortfall risk measure $\rho_\ell$ is submodular if and only if there exists $\lambda\in\R$ such that
\begin{equation}\label{eq:LD}
R(x)-2 L   \le \lambda\,\frac{\ell(x)}{\ell'(x)},\qquad \mbox{ for all }x\in\R.
\end{equation}
\end{theorem}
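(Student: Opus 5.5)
The plan is to reduce submodularity of $\rho_\ell$ to a sign condition on mixed second derivatives in finite dimensions, and then turn that condition into the pointwise inequality \eqref{eq:LD} with a one-dimensional separation argument.

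\emph{Reduction to finite dimensions.} Since $\rho_\ell$ is monetary, it is norm-continuous. By Lemma \ref{lem:bounded-extension}, it therefore suffices to prove submodularity on every simple subspace. Fix atoms $A_1,\dots,A_n$ with probabilities $p_k>0$, and identify $X$ with $x\in\R^n$. Then $m=\rho_\ell(x)$ solves $G(x,m)=\sum_k p_k\ell(x_k-m)=0$. Because $\partial_m G=-\sum_k p_k\ell'(x_k-m)<0$, the implicit function theorem shows that $\rho_\ell$ is $C^2$ on $\R^n$. For a $C^2$ function on the lattice $\R^n$, submodularity is equivalent to $\partial_i\partial_j\rho\le 0$ for all $i\ne j$ (Topkis). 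I would prove the needed direction by integrating the cross partial over the rectangle spanned by $x\wedge y$ and $x\vee y$, coordinate by coordinate.

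\emph{Computing the cross partials.} Write $y_k=x_k-m$, $S=\sum_k p_k\ell'(y_k)$, $w_k=p_k\ell'(y_k)/S$ and $R_k=R(y_k)$. Then $\partial_j\rho=w_j$. Differentiating $\partial_i\rho=p_i\ell'(y_i)/S$ once more, a direct computation gives, for $i\neq j$,
$$
\partial_i\partial_j\rho=-w_iw_j\Bigl(R_i+R_j-\sum_k w_kR_k\Bigr).
$$
Hence $\rho_\ell$ is submodular if and only if
$$
R(y_i)+R(y_j)\ge \frac{\E[\ell''(Y)]}{\E[\ell'(Y)]}
$$
for every simple $Y$ with $\E[\ell(Y)]=0$ and every two distinct atoms $y_i,y_j$ of $Y$ with positive mass.

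\emph{From the derivative condition to \eqref{eq:LD}.} For sufficiency, assume \eqref{eq:LD} holds, and multiply it by $\ell'>0$ to get $g:=\ell''-2L\ell'\le\lambda\ell$ pointwise. Then for any $Y$ with $\E[\ell(Y)]=0$ we have $\E[g(Y)]\le 0$, so $\sum_kw_kR_k\le 2L\le R_i+R_j$, and the cross partials are nonpositive.

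For necessity, start from an arbitrary simple $Y_0$ with $\E[\ell(Y_0)]=0$. Pick $z$ with $R(z)<L+\varepsilon$, and add two tiny atoms near $z$ with total mass $\delta$, placed at two distinct points. Then re-solve the constraint: by continuity of $\ell$ and by slightly moving one existing atom, the mean of $\ell$ stays $0$. Letting $\delta\to0$ and then $\varepsilon\to0$ yields $\E[\ell''(Y_0)]\le 2L\,\E[\ell'(Y_0)]$, that is, $\E[g(Y)]\le 0$ whenever $\E[\ell(Y)]=0$.

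It remains to extract $\lambda$. Taking $Y=0$ gives $g(0)\le0$. For $y,z$ with $\ell(y)>0>\ell(z)$, the two-point law satisfying $\E[\ell(Y)]=0$ gives $\ell(y)g(z)-\ell(z)g(y)\le0$, which rearranges to
$$
\frac{g(y)}{\ell(y)}\le \frac{g(z)}{\ell(z)}.
$$
Here $\ell$ takes both signs, since it is strictly increasing with $\ell(0)=0$. Therefore
$$
\sup_{\ell(y)>0}\frac{g(y)}{\ell(y)}\le\inf_{\ell(z)<0}\frac{g(z)}{\ell(z)}.
$$
Any $\lambda$ between these two bounds gives $g\le\lambda\ell$ everywhere, using $g(0)\le 0$ at $y=0$. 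I still need to check that both quantities are finite. Dividing by $\ell'$ then gives \eqref{eq:LD}.

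The main obstacle I expect is the perturbation in the necessity step. When adding small atoms, I must keep $\E[\ell(Y)]=0$ exactly, with distinct atoms of positive mass, and also handle the case where $L$ is not attained. I would handle this by choosing $z$ with $R(z)<L+\varepsilon$, using two distinct points near $z$, and compensating the constraint through the implicit function theorem on one existing atom's location. This correction is $O(\delta)$ and so does not affect the limit. A secondary point is to confirm that $\lambda$ is finite, which should follow from the two-point inequality applied to one fixed pair $y,z$.
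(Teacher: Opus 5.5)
Your proposal is correct. The sufficiency half coincides with the paper's: cross partials from the implicit function theorem on simple subspaces, the weighted sum of $g\le\lambda\ell$ giving $\sum_k w_kR(y_k)\le 2L\le R(y_i)+R(y_j)$, then Lemma~\ref{lem:bounded-extension}.

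For necessity you take a genuinely shorter route than the paper.

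\emph{The paper's route.} It works only on uniform $n$-atom spaces. To make the two test coordinates at a near-minimiser $v$ of $R$ asymptotically negligible, it replicates the block $(x_1,\dots,x_r)$ $N$ times. It balances $2\ell(v)$ with $M$ copies of an auxiliary point $p$ and one correction point $c$, and lets $N\to\infty$. This yields the balanced-sum inequality only for equally weighted multisets. A second limiting argument is then needed ($\lceil\theta N\rceil$ copies of $a$, the rest copies of $b$, plus a correction point $c_N$) to reach $\theta h(a)+(1-\theta)h(b)\le 0$ for irrational $\theta$.

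\emph{Your route.} You use atomlessness to realise simple subspaces with arbitrary atom probabilities. A single vanishing mass $\delta$ near $z$ then replaces the replication, and two-point laws with any weight $\theta$ are available directly.

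\emph{What each buys.} Your argument relies on continuity of $R$ to pass $\delta\to 0$. This is automatic here, since $\ell\in C^2$ and $\ell'>0$. The paper's scheme uses submodularity only on equiprobable finite spaces. It is also the template reused for the non-smooth extension in Appendix~\ref{app:SR-proof}, where $R$ is only a Dini derivative and $h$ is finite merely almost everywhere.

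Two simplifications of your own plan are available.
\begin{itemize}
\item The compensation step (moving an existing atom to keep $\E[\ell(Y)]=0$) is unnecessary. The cross-partial identity holds at every $\bx$ with $y_k=x_k-m(\bx)$, and $m(\bx_\delta)\to 0$ by continuity as $\delta\to0$.
\item The two tiny atoms may sit at the same point $z$. The condition concerns distinct coordinates, not distinct values.
\end{itemize}

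Finiteness of the two bounds is immediate, as you anticipated. Fixing any $z_0$ with $\ell(z_0)<0$ gives $\sup_{\ell(y)>0}g(y)/\ell(y)\le g(z_0)/\ell(z_0)<\infty$, and symmetrically for the infimum.
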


\begin{proof}
Set
$$
S(x)=\ell'(x),\qquad
h(x)=S(x)\big(R(x)-2L\big),\qquad x\in\R.
$$
Since $\ell$ is strictly increasing and convex, $S(x)>0$ for all $x$, so \eqref{eq:LD} is equivalent to
\begin{equation}\label{eq:LDg}
h(x)\le \lambda\,\ell(x),\qquad x\in\R.
\end{equation}

\textbf{Sufficiency.}
Assume \eqref{eq:LDg}. It is enough to show that $\rho_\ell$ is submodular on every simple subspace and then apply Lemma~\ref{lem:bounded-extension}.

Let $\mathcal F'=\sigma(A_1,\dots,A_n)$ be a simple sub-$\sigma$-algebra, where $A_1,\dots,A_n$ are its atoms and
$$
\pi_k=\mathbb P(A_k)>0,\qquad \sum_{k=1}^n \pi_k=1.
$$
For
$$
X=\sum_{k=1}^n x_k\mathbf \id_{A_k},
$$
set $m(\bx)=\rho_\ell(X)$. Then
\begin{equation}\label{eq:implicit-weighted}
\sum_{k=1}^n \pi_k\,\ell(x_k-m(\bx))=0.
\end{equation}
Write $y_k=x_k-m(\bx)$ and
$$
T(\bx)=\sum_{k=1}^n \pi_k S(y_k),\qquad
w_k(\bx)=\frac{\pi_k S(y_k)}{T(\bx)}.
$$
Since $T(\bx)>0$, the implicit function theorem gives $m\in C^2(\R^n)$, and
$$
\partial_j m(\bx)=w_j(\bx).
$$
For $i\ne j$, a direct calculation gives
\begin{equation}\label{eq:cross-weighted}
\partial_{ij}^2 m(\bx)
=-w_iw_j\Big(R(y_i)+R(y_j)-\sum_{k=1}^n w_k R(y_k)\Big).
\end{equation}

Evaluating \eqref{eq:LDg} at $y_k$ and summing with weights $\pi_k$ yields
$$
\sum_{k=1}^n \pi_k S(y_k)\big(R(y_k)-2L\big)
\le \lambda \sum_{k=1}^n \pi_k \ell(y_k)=0,
$$
where the last equality follows from \eqref{eq:implicit-weighted}. Dividing by $T(\bx)>0$, we obtain
$$
\sum_{k=1}^n w_k R(y_k)\le 2L.
$$
Since $R(y_i)+R(y_j)\ge 2L$, \eqref{eq:cross-weighted} implies $\partial_{ij}^2m(\bx)\le 0$ for all $i\ne j$. Therefore $m$ is submodular on $\R^n$, and hence $\rho_\ell$ is submodular on $L^\infty(\Omega,\mathcal F',\mathbb P)$.

Since $\mathcal F'$ was arbitrary, $\rho_\ell$ is submodular on every simple subspace. Lemma~\ref{lem:bounded-extension} now yields submodularity on all of $L^\infty$.

\textbf{Necessity.}
Assume $\rho_\ell$ is submodular on $L^\infty$. In particular, it is submodular on every uniform $n$-atom space. Fix $n\ge 3$ and work on an $n$-atom space with equal weights.
Identify $X$ with $\bx=(x_1,\dots,x_n)\in\R^n$ and define $m(\bx)=\rho_\ell(X)$. Then
\begin{equation}\label{eq:implicit}
\sum_{k=1}^n \ell(x_k-m(\bx))=0.
\end{equation}
Write $y_k=x_k-m(\bx)$ and
$$
T(\bx)=\sum_{k=1}^n S(y_k),\qquad
w_k(\bx)=\frac{S(y_k)}{T(\bx)}.
$$
Since $\sum_k S(x_k-m)>0$, the implicit function theorem gives $m\in C^2(\R^n)$ and
$$
\partial_jm(\bx)=w_j(\bx).
$$
A second differentiation, using $S'=SR$ and $\partial_j y_k = \mathbf{1}_{\{k=j\}} - w_j$, gives for $i\ne j$
\begin{equation}\label{eq:cross}
\partial_{ij}^2 m(\bx)
=-w_iw_j\Big(R(y_i)+R(y_j)-\sum_{k=1}^n w_k R(y_k)\Big).
\end{equation}
Since $\partial_{ij}^2 m\le 0$ and $w_i,w_j>0$, we obtain
\begin{equation}\label{eq:weighted-ineq}
\sum_{k=1}^n w_k(\bx)\,R(y_k)\le R(y_i)+R(y_j).
\end{equation}

We first establish a balanced-sum inequality. Fix $\varepsilon>0$ and pick $v$ with $R(v)\le L+\varepsilon$, which exists by definition of $L=\inf R$.
Fix any $x_1,\dots,x_r$ with $\sum_{k=1}^r\ell(x_k)=0$.
Choose $p\in\R$ with $\ell(p)\neq0$ and $\ell(p)\ell(v)<0$, and set
$$
M=\left\lfloor\frac{-2\ell(v)}{\ell(p)}\right\rfloor,\qquad
d=-2\ell(v)-M\ell(p),
$$
so that $d\,\ell(p)\ge0$ and $|d|<|\ell(p)|$.
By continuity and strict monotonicity of $\ell$, choose $c\in\R$ with $\ell(c)=d$.
For each $N\in\mathbb N$, form $\bz^{(N)}\in\R^n$ ($n=M+Nr+3$) consisting of two copies of $v$, $M$ copies of $p$, one copy of $c$, and $N$ copies of $(x_1,\dots,x_r)$. Then $\sum_k\ell(z_k^{(N)})=0$, so $m(\bz^{(N)})=0$.

Apply \eqref{eq:weighted-ineq} at $\bz^{(N)}$ with $(i,j)$ indexing the two $v$-coordinates. Multiplying both sides by $K=\sum_k S(z_k^{(N)})$ gives
$$
\sum_{k=1}^n S(z_k^{(N)})\,R(z_k^{(N)})\le 2R(v)\,K.
$$
Rearranging and isolating the $N$ repeated blocks,
$$
N\sum_{k=1}^r S(x_k)\big(R(x_k)-2R(v)\big)\le 2S(v)R(v) - A,
$$
where $A=M S(p)\big(R(p)-2R(v)\big)+S(c)\big(R(c)-2R(v)\big)$ is independent of $N$. Dividing by $N$ and letting $N\to\infty$ yields
$$
\sum_{k=1}^r S(x_k)\big(R(x_k)-2R(v)\big)\le 0.
$$
Since $R(v)\le L+\varepsilon$,
$$
\sum_{k=1}^r h(x_k)
=\sum_{k=1}^r S(x_k)\big(R(x_k)-2L\big)
\le 2\varepsilon\sum_{k=1}^r S(x_k).
$$
Sending $\varepsilon\downarrow0$:
\begin{equation}\label{eq:balanced}
\sum_{k=1}^r h(x_k)\le 0
\quad\text{whenever}\quad
\sum_{k=1}^r \ell(x_k)=0.
\end{equation}

It remains to deduce \eqref{eq:LDg} from \eqref{eq:balanced}. Fix $a,b$ with $\ell(a)<0<\ell(b)$ and set
$$
\theta=\frac{\ell(b)}{\ell(b)-\ell(a)}\in(0,1),\qquad
\theta\ell(a)+(1-\theta)\ell(b)=0.
$$
Let $r_N=\lceil\theta N\rceil$ and $s_N=N-r_N$. The residual
$$
\delta_N=-(r_N\ell(a)+s_N\ell(b))
$$
satisfies $|\delta_N|\le|\ell(a)-\ell(b)|$ since $0\le r_N-\theta N<1$, so $\delta_N/N\to 0$. Choose $c_N$ with $\ell(c_N)=\delta_N$; since $\delta_N$ is bounded, $c_N$ stays in a compact set, and by continuity of $h$, $h(c_N)/N\to0$. Now
$$
r_N\ell(a)+s_N\ell(b)+\ell(c_N)=0,
$$
so \eqref{eq:balanced} applied to $r_N$ copies of $a$, $s_N$ copies of $b$, and $c_N$ gives
$$
r_N h(a)+s_N h(b)+h(c_N)\le 0.
$$
Dividing by $N$ and letting $N\to\infty$:
$$
\theta\, h(a)+(1-\theta)\,h(b)\le 0
\;\Longleftrightarrow\;
\frac{h(b)}{\ell(b)}\le \frac{h(a)}{\ell(a)}.
$$
Hence
$$
\alpha^+=\sup_{\ell(x)>0}\frac{h(x)}{\ell(x)}
\le
\inf_{\ell(x)<0}\frac{h(x)}{\ell(x)}=\alpha^-.
$$
Both sides are finite: fixing any $a$ with $\ell(a)<0$ shows $\alpha^+\le h(a)/\ell(a)<\infty$, and similarly for $\alpha^-$. Any $\lambda\in[\alpha^+,\alpha^-]$ satisfies $h(x)\le \lambda\ell(x)$ for all $x$ with $\ell(x)\ne0$. Taking all $x_k=0$ in \eqref{eq:balanced} gives $h(0)\le 0=\lambda\ell(0)$, completing the proof of \eqref{eq:LDg}.
\end{proof}

Shortfall risk measures form a broad  family of convex risk measures, but submodularity does not hold in general in this family. 
Theorem~\ref{thm:linear-iff} shows that the curvature $R$ is important for submodularity. 
A special case of \eqref{eq:LD} is the case of $\lambda=0$, which yields the   condition  
\begin{equation}
    \label{eq:suff}
\sup_{x\in \R} R(x)  \le 2 \inf_{x\in \R} R(x).  
\end{equation}
This is sufficient but not necessary for submodularity, and it is very easy to check.

Since loss functions $\ell$ can be converted into utility functions $u$ via 
$\ell(x)=-u(-x)$, we can verify  $$R(x)=\frac{-u''(-x)}{u'(-x)}=A(-x),$$ where $A$ is the Arrow--Pratt coefficient of absolute risk aversion (AP coefficient).
Therefore, the condition in Theorem~\ref{thm:linear-iff} can be seen as a structural restriction on the AP coefficient compatible with submodularity. 
To interpret the sufficient condition \eqref{eq:suff}, 
it means that the  AP coefficient does not change much across different wealth levels. 
Recall that the exponential utility has a constant  AP coefficient, and it satisfies \eqref{eq:suff}. Indeed, the shortfall risk with an exponential utility, with the corresponding loss function given by
$$\ell(x)= e^{\gamma x}-1,  ~~x\in \R,~\gamma>0,$$
also belongs to the class of CE studied  in Section \ref{sec:CE}, and we know it is submodular from Proposition \ref{prop:CE}; it  also follows from Theorem \ref{thm:linear-iff}.
In the example below, we see that \eqref{eq:LD} allows for a larger class of shortfall risk measures.


\begin{example} 
Let $\ell(x)=e^{2x}+e^{x}-2$. Then
$$
R(x)=1+\frac{2e^x}{2e^x+1},\qquad x\in\R,
$$
and we can see that $R$ lies strictly between $1$ and $2$. Therefore, \eqref{eq:suff} holds, and the corresponding shortfall risk measure $\rho_\ell$ is submodular. 
\end{example}

 \begin{example}
Take $\ell(x)=x$. The corresponding shortfall risk is the mean. Note that $R(x)=0$ for all $x\in \R$ in this example, and thus condition \eqref{eq:suff} holds. Thus, it is submodular by Theorem \ref{thm:linear-iff}.  Indeed,  we know the mean  is modular by Theorem \ref{th:modular-char}. 
\end{example}
 \begin{example}
The loss function for a convex expectile is given by
$$
\ell(x)=x+a \max\{x,0\},\qquad x\in \R,~a\ge 0.
$$
By Theorem \ref{th:distortion}, a convex expectile cannot be submodular (except for the case $a=0$, corresponding to the mean), because it is coherent but not a distortion risk measure (Theorem \ref{th:distortion} states that submodular and coherent risk measures must be distortion risk measures).  
Note that for $a\ne 0$, $\ell$ is not differentiable at $0$, so we cannot directly apply Theorem \ref{thm:linear-iff}; we need the more general result stated in Appendix \ref{app:proofs}, which does not require differentiability. However, the intuition still applies: interpreting $R$ as the curvature (as a limit), then $R(0)=\infty$
and $R(x)=0$ for $x\ne 0$. Thus, with this limiting interpretation, \eqref{eq:LD} cannot hold, as its left-hand side is infinite, but the right-hand side is finite. 
\end{example}

\subsection{Optimized certainty equivalents}
 \citet{BT07} studied the optimized certainty equivalent (OCE) as risk measures.  For a convex function $\ell$, the OCE is defined as   
$$
\mathrm{OCE}_\ell(X)=\inf_{m\in\R}\big\{m+\E[\ell(X-m)]\big\},\qquad X\in L^\infty.
$$
It is known that OCEs are convex risk measures.

\begin{theorem}\label{thm:oce-submod}
Assume $\ell:\R\to\R$ is increasing and convex, and $\mathrm{OCE}_\ell(X)\in\R$ for all $X\in L^\infty$. Then the loss-based optimized certainty equivalent $\mathrm{OCE}_\ell$ is submodular.
\end{theorem}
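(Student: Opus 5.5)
The plan is to exploit the fact that $\mathrm{OCE}_\ell$ is an infimum over $m$ of functionals $X\mapsto m+\E[\ell(X-m)]$, each of which is modular by Theorem~\ref{th:modular-char}. An infimum of submodular functions is not submodular in general. Here, however, the inner objective is jointly submodular in $(X,m)$ on the product lattice $L^\infty\times\R$. Partial minimization of a jointly submodular function over a lattice variable preserves submodularity, by the standard argument in \citet{T78}.

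Define $G(X,m)=m+\E[\ell(X-m)]$ on $L^\infty\times\R$, with the order taken componentwise. The first step is to show that $G$ is submodular there, that is,
\[
G(X\vee Y,m\vee n)+G(X\wedge Y,m\wedge n)\le G(X,m)+G(Y,n).
\]
The linear terms satisfy $m\vee n+m\wedge n=m+n$, so they cancel. It then suffices to prove the pointwise inequality
\[
\ell\bigl((x\vee y)-(m\vee n)\bigr)+\ell\bigl((x\wedge y)-(m\wedge n)\bigr)\le \ell(x-m)+\ell(y-n)
\]
for reals $x,y,m,n$. This says that $g(s,t)=\ell(s-t)$ is submodular on $\R^2$, which is the same fact already used for $g_\psi$ in Proposition~\ref{prop:dev}. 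To verify it directly, assume by symmetry that $x\ge y$. If $m\ge n$, both sides coincide. If $m<n$, set $a=x-n$, $b=y-m$, $c=x-m$, $d=y-n$. Then $a+b=c+d$, and $c\ge a,b\ge d$. Convexity of $\ell$ gives $\ell(a)+\ell(b)\le\ell(c)+\ell(d)$, which is the required inequality. Taking expectations yields the submodularity of $G$. No monotonicity of $\ell$ is needed here; monotonicity only enters to guarantee finiteness.

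The second step is the partial-minimization argument. Fix $X,Y$ and $\varepsilon>0$, and pick $m,n$ with $G(X,m)\le \mathrm{OCE}_\ell(X)+\varepsilon$ and $G(Y,n)\le \mathrm{OCE}_\ell(Y)+\varepsilon$. Then
\[
\mathrm{OCE}_\ell(X\vee Y)+\mathrm{OCE}_\ell(X\wedge Y)\le G(X\vee Y,m\vee n)+G(X\wedge Y,m\wedge n)\le G(X,m)+G(Y,n).
\]
The right-hand side is at most $\mathrm{OCE}_\ell(X)+\mathrm{OCE}_\ell(Y)+2\varepsilon$, and letting $\varepsilon\downarrow0$ finishes the proof. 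Near-minimizers always exist, so attainment of the infimum is not required. The assumption $\mathrm{OCE}_\ell\in\R$ is used only so that all quantities are finite and the arithmetic is legitimate. $G(X,m)$ itself is finite because $X$ is bounded and $\ell$ is real-valued and continuous.

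I expect the main (mild) obstacle to be the case check in the pointwise inequality, in particular keeping the orderings straight so that the majorization argument for convex $\ell$ applies. Everything else is routine. No approximation via Lemma~\ref{lem:bounded-extension} is needed, because the argument works directly on $L^\infty$.
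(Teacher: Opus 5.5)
Your proposal is correct and is essentially the paper's proof: the paper also takes $\varepsilon$-optimal $m_X,m_Y$ and assumes $m_X\ge m_Y$, so that $X\vee Y$ is paired with $m_X\vee m_Y$ and $X\wedge Y$ with $m_X\wedge m_Y$. It then verifies the same pointwise inequality using only that the increments of the convex $\ell$ are increasing, and your framing as joint submodularity of $G$ on $L^\infty\times\R$ followed by Topkis-style partial minimization is a repackaging of that argument.
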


\begin{proof}
Fix $X,Y\in L^\infty$ and $\varepsilon>0$. Choose $m_X,m_Y\in\R$ with
$$
m_X+\E[\ell(X-m_X)]\le \mathrm{OCE}_\ell(X)+\varepsilon,\qquad
m_Y+\E[\ell(Y-m_Y)]\le \mathrm{OCE}_\ell(Y)+\varepsilon.
$$
Without loss of generality, assume $m_X\ge m_Y$. Since $m_X$ and $m_Y$ are not necessarily optimal for $X\vee Y$ and $X\wedge Y$, the definition of $\mathrm{OCE}_\ell$ gives
$$
\mathrm{OCE}_\ell(X\vee Y)+\mathrm{OCE}_\ell(X\wedge Y)
\le m_X+m_Y+\E\!\big[\ell\big((X\vee Y)-m_X\big)+\ell\big((X\wedge Y)-m_Y\big)\big].
$$
It therefore suffices to show that, pointwise,
\begin{equation}\label{eq:pw}
\ell\big(a\vee b -m_X\big)+\ell\big(a\wedge b -m_Y\big)
\le \ell(a-m_X)+\ell(b-m_Y),
\end{equation}
where $a=X(\omega)$ and $b=Y(\omega)$. If $a\ge b$, both sides are equal. If $a<b$, set $s=b-a>0$. Then $(a\vee b)-m_X=(a-m_X)+s$ and $(a\wedge b)-m_Y=(b-m_Y)-s$. Since $m_X\ge m_Y$, we have $a-m_X\le b-m_Y-s$, and because the increment $x\mapsto\ell(x+s)-\ell(x)$ is increasing for convex $\ell$,
$$
\ell(a-m_X+s)-\ell(a-m_X)\le \ell(b-m_Y)-\ell(b-m_Y-s),
$$
which rearranges to \eqref{eq:pw}. Integrating over $\omega$ gives
$$
\mathrm{OCE}_\ell(X\vee Y)+\mathrm{OCE}_\ell(X\wedge Y)\le \mathrm{OCE}_\ell(X)+\mathrm{OCE}_\ell(Y)+2\varepsilon.
$$
Since $\varepsilon>0$ is arbitrary, the result follows.
\end{proof}

In contrast to the other families studied here, the OCE family requires no extra structural restriction for submodularity.

\subsection{Adjusted ES}
 \citet{BMW22} defined the AES as
$$
\ES^g(X)=\sup_{p\in[0,1]}\{\ES_p(X)-g(p)\},\qquad X\in L^\infty.
$$
where $g:[0,1]\to[0,\infty]$ is increasing with $g(0)<\infty$. 
Since submodularity is invariant under constant shift, we can without loss of generality assume $\ES^g(0)=0$, that is, $g(0)=0$.

\begin{theorem}\label{thm:AES-submodular-characterization}
Let $g:[0,1]\to[0,\infty]$ be increasing with $g(0)=0$. Then  $\ES^g$ is submodular if and only if there exists
$p_0\in[0,1]$ such that
$
\ES^g =\ES_{p_0} . 
 $
\end{theorem}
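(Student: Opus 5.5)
The ``if'' direction is immediate from Theorem~\ref{th:distortion}. For $p_0\in[0,1)$, $\ES_{p_0}$ is the distortion risk measure with concave distortion $\phi(t)=\min\{t/(1-p_0),1\}$. For $p_0=1$, $\ES_1=\esssup$ has concave distortion $\phi(t)=\id_{\{t>0\}}$. In both cases Theorem~\ref{th:distortion} gives submodularity.

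For the ``only if'' direction, I first identify which $g$ give $\ES^g=\ES_{p_0}$. Let $p_0=\sup\{p: g(p)=0\}$. If $g=0$ on $[0,p_0)$ and $g=\infty$ on $(p_0,1]$, then the supremum equals $\sup_{p<p_0}\ES_p$ or $\ES_{p_0}$. In either case it equals $\ES_{p_0}$, by continuity of $p\mapsto\ES_p(X)$ on $[0,1)$ and its monotone convergence to $\esssup$ as $p\uparrow 1$. So it suffices to show that submodularity rules out any $q$ with $0<g(q)<\infty$.

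To do this I test submodularity on scaled indicators. Fix $t>0$ and put $F_t(a)=\ES^g(t\id_A)$ with $\p(A)=a$; this is well defined by law invariance. Since $\ES_p(t\id_A)=t\min\{a/(1-p),1\}$,
\[
F_t(a)=\sup_{p\in[0,1]}\Bigl\{t\min\Bigl\{\tfrac{a}{1-p},1\Bigr\}-g(p)\Bigr\}.
\]
Taking $X=t\id_A$ and $Y=t\id_B$ with arbitrary overlap gives $X\vee Y=t\id_{A\cup B}$ and $X\wedge Y=t\id_{A\cap B}$. Since $\p(A\cup B)+\p(A\cap B)=\p(A)+\p(B)$ and the four probabilities can be chosen freely subject to this constraint (the space is atomless), submodularity forces $F_t$ to be midpoint concave. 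Being monetary, $F_t$ is monotone and bounded, hence concave on $[0,1]$. On the other hand, for $a\le 1-p$ each piece $a\mapsto ta/(1-p)-g(p)$ is affine. Thus on a small interval $a\in[0,\delta]$ (or more generally below $1-p$ for the relevant $p$), $F_t$ is a supremum of affine functions and hence convex.

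Concave and convex together force $F_t$ to be affine there. So on that interval a single line $a\mapsto ta/(1-p^*)-g(p^*)$, or a family of coinciding lines, dominates all the others. I then vary $t$. With one level $q$ satisfying $0<g(q)<\infty$, compare the line for $p=0$, which is $ta$ with intercept $0$, against the line for $p=q$, which has slope $t/(1-q)$ and intercept $-g(q)$. They cross at $a^*=g(q)(1-q)/(tq)$, and by choosing $t$ large this $a^*$ lies inside the region $a<1-q$ where both lines are genuinely affine. At $a^*$ the slope of $F_t$ jumps upward by at least $t/(1-q)-t>0$, unless some other $p$ dominates both lines near $a^*$. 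That would contradict concavity.

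The main obstacle is exactly this last caveat: for general $g$, other levels $p$ could fill in the kink. My plan is to work with the upper envelope directly. Let $q^*=\sup\{p:g(p)<\infty\}$ and consider $p\mapsto g(p)$ on the finite region. For fixed $t$, the affine lines have slopes $t/(1-p)$ increasing in $p$ and intercepts $-g(p)$ decreasing. Their envelope is affine on an interval only if a single slope is active on it. As $t$ ranges over $(0,\infty)$ the active slope must change: for small $t$ the level $p=0$ wins near $a=0$, and for large $t$ levels near $q$ win at moderate $a$. Any change of active slope in the region $a<1-p$ creates a convex kink. This is compatible with concavity only if the transition is degenerate, meaning all finite-$g$ levels with $g>0$ never become active. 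That in turn forces $g\in\{0,\infty\}$, up to values that do not affect $\ES^g$, which returns us to the case $\ES^g=\ES_{p_0}$.
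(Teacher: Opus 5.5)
Your ``if'' direction and the concavity of $F_t$ are fine. The ``only if'' argument breaks at the claim that on $[0,\delta]$ the function $F_t$ is a supremum of affine functions and hence convex. This ignores the levels $p>1-\delta$. Once $a\ge 1-p$, the piece $t\min\{a/(1-p),1\}-g(p)$ is the constant $t-g(p)$, and taking $p=1-a$ gives $F_t(a)\ge t-g(1-a)\ge t-g(1-)$ for all $a\in(0,1]$. If $g(1-)<\infty$ and $t$ is large, these levels lie far above your two lines near their crossing point $a^*$, where both lines equal $g(q)(1-q)/q$ independently of $t$. So the kink you identify is not a kink of $F_t$.

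In fact no indicator-based argument can work in this case. Take $g(p)=cp$ with $c>0$. Then $F_t(a)=\max\{ta,\,t-c+ca\}$ for $a\in(0,1]$ and $F_t(0)=0$. This equals $ta$ when $t\le c$, and $t-c+ca$ on $(0,1]$ when $t>c$, so it is concave for every $t>0$ and every indicator pair passes the test. Yet $F_t(a)/t=1-c(1-a)/t$ for $t>c$ shows that $\ES^g$ is not positively homogeneous, hence not an ES, hence (by the theorem) not submodular. Your closing heuristic fails exactly here: the envelope is carried by the moving level $p=1-a$ at the kink of its own piece, with slope $c$, which is none of the slopes $t/(1-p)$.

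A smaller point: your reduction target is slightly off. If $g=0$ on $[0,p_0)$, $g(p_0)\in(0,\infty)$ and $g=\infty$ on $(p_0,1]$, then $\ES^g=\ES_{p_0}$, which is submodular. What must be excluded is $g(q)<\infty$ for some $q>p_0$.

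Indicators do suffice when $g(1-)=\infty$, since for fixed $t$ only levels with $g(p)<t$ matter, and these satisfy $p\le\bar p<1$. Even there, your local kink would need $a^*<1-\bar p(t)$, which is not guaranteed as $t\to\infty$. The paper avoids this with a global argument. Concavity and $F_t(0)=0$ make $F_t(a)/a$ nonincreasing, its limit at $0$ is $t/(1-p_0)$, and $F_t(1-q)\ge t-g(q)>t(1-q)/(1-p_0)$ for large $t$ whenever $q>p_0$ and $g(q)<\infty$.

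For $g(1-)<\infty$ the missing idea is a non-binary test pair. The paper takes $X=2aU+b-a$ and $Y=2aV+b-a$, with $U$ uniform and $V$ the reflection of $U$ on $\{U<q\}$. Then $\ES_p(X)=\ES_p(Y)=ap+b$, $\ES_q(X\wedge Y)=aq+b$, and $\ES_p(X\vee Y)=ap+b+a(q-p)^2/(2(1-p))$ for $p<q$. Choosing $p\mapsto ap+b$ as an affine minorant of $g$ touching it at $q$ gives $\ES^g(X)=\ES^g(Y)=0\le \ES^g(X\wedge Y)$. The strict inequality $\ES^g(X\vee Y)>0$ comes from one of two sources:
\begin{enumerate}[(a)]
\item If $g$ is not convex, use a second contact point $p_1<q$, i.e.\ a strictly increasing linear piece of the convex minorant.
\item If $g$ is convex and nonconstant on $[0,1)$, use a point where $m=g'_+$ satisfies $m'(q)<m(q)/(1-q)$. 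Such a point exists, since otherwise $(1-x)m(x)$ would be increasing and $m$ would not be integrable near $1$.
\end{enumerate}
This forces $g$ to be constant on $[0,1)$, i.e.\ $\ES^g=\ES_1$.
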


\begin{proof}
The ``if'' statement is straightforward because  ES is
submodular.
For the ``only if'' statement, assume that $\rho$ is submodular, and write $\rho=\ES^g$ for simplicity.
We first treat the case $g(1-)=\infty$. The remaining case 
$g(1-)<\infty$ is quite involved and is treated in Section \ref{app:AES-proof} of Appendix \ref{app:proofs}. 

Let $
p_0=\sup\{p\in[0,1]:g(p)=0\}.
 $
Since $g$ is increasing and $g(1-)=\infty$, we have $p_0<1$, and
$g(p)>0$ for every $p>p_0$.

Fix $\lambda>0$. For $t\in[0,1]$, define
$$
\phi_\lambda(t)=\rho(\lambda\id_E)\qquad\text{whenever }\p(E)=t.
$$
The mapping
$\phi_\lambda$ is well defined by atomlessness and law invariance. Indeed,
submodularity of $\rho$ implies that the set function
$E\mapsto \rho(\lambda\id_E)$ is submodular on $(\mathcal F,\cup,\cap)$, because
$\id_{A\cup B}=\id_A\vee\id_B$ and $\id_{A\cap B}=\id_A\wedge\id_B$. Since
$E\mapsto \rho(\lambda\id_E)$ depends only on $\p(E)$, we have 
$$
\phi_\lambda(s+h)-\phi_\lambda(s)\ge \phi_\lambda(t+h)-\phi_\lambda(t),
\qquad 0\le s\le t\le t+h\le 1.
$$
Hence $\phi_\lambda$ is concave on $[0,1]$. Moreover,
\begin{equation}\label{eq:AES-phi-infinite}
\phi_\lambda(t)
=
\sup_{p\in[0,1]}
\left\{
\lambda\min\!\left(1,\frac{t}{1-p}\right)-g(p)
\right\},
\qquad t\in[0,1].
\end{equation} 
We claim 
\begin{equation}\label{eq:AES-origin-slope-infinite}
\lim_{t\downarrow0}\frac{\phi_\lambda(t)}{t}
=
\frac{\lambda}{1-p_0}.
\end{equation}

\underline{Lower bound.}
Let $r\in[0,1)$ satisfy $g(r)=0$. Then, for $0<t<1-r$, using \eqref{eq:AES-phi-infinite}, 
$$
\phi_\lambda(t)\ge \frac{\lambda t}{1-r}.
$$
Hence
$$
\liminf_{t\downarrow0}\frac{\phi_\lambda(t)}{t}
\ge
\sup\left\{\frac{\lambda}{1-r}:r\in[0,1),\ g(r)=0\right\}
=
\frac{\lambda}{1-p_0}.
$$

\underline{Upper bound.}
Suppose to the contrary that there exist $\varepsilon>0$ and $t_n\downarrow0$ such that
$$
\frac{\phi_\lambda(t_n)}{t_n}\ge \frac{\lambda}{1-p_0}+2\varepsilon,
\qquad n\in\mathbb N.
$$
For each $n$,    using \eqref{eq:AES-phi-infinite} and $g(1-)=\infty$, we can find  $p_n\in[0,1)$ such that
\begin{equation}
    \label{eq:AES-pf-1}
\lambda\min\!\left(1,\frac{t_n}{1-p_n}\right)-g(p_n)
\ge \phi_\lambda(t_n)-\varepsilon t_n
\ge \left(\frac{\lambda}{1-p_0}+\varepsilon\right)t_n>0,
\end{equation} 
so $g(p_n)<\lambda$. Since $g(1-)=\infty$, there exists $\bar p<1$ such that
$g(p)>\lambda$ for all $p>\bar p$. Thus $p_n\le \bar p$ for all $n$.
For    $n$ large enough, $t_n<1-\bar p$. Hence $t_n<1-p_n$, and \eqref{eq:AES-pf-1} yields
$$
\frac{\lambda}{1-p_n}-\frac{g(p_n)}{t_n}
\ge
\frac{\lambda}{1-p_0}+\varepsilon.
$$
In particular,
$$
\frac{\lambda}{1-p_n}\ge \frac{\lambda}{1-p_0}+\varepsilon.
$$
Since $p\mapsto \lambda/(1-p)$ is continuous and increasing on $[0,\bar p]$, there exists
$\delta>0$ such that $p_n\ge p_0+\delta$ for all large $n$. Hence
$g(p_n)\ge g(p_0+\delta)>0$, which implies
$$
\frac{\lambda}{1-p_n}-\frac{g(p_n)}{t_n}\to -\infty,
$$
a contradiction. The two bounds together prove \eqref{eq:AES-origin-slope-infinite}.

Now suppose there exists $p_1>p_0$ such that $g(p_1)<\infty$. Choose
$\lambda$   large enough such that
\begin{equation}\label{eq:AES-lambda-infinite}
\lambda>\frac{g(p_1)(1-p_0)}{p_1-p_0}.
\end{equation}
Set $t_1=1-p_1$. Using $p=p_1$ in \eqref{eq:AES-phi-infinite}, we get
$\phi_\lambda(t_1)\ge \lambda-g(p_1)$. Therefore,
$$
\frac{\phi_\lambda(t_1)}{t_1}
\ge
\frac{\lambda-g(p_1)}{1-p_1}
>
\frac{\lambda}{1-p_0}.
$$
But $\phi_\lambda$ is concave and $\phi_\lambda(0)=0$, so the map
$t\mapsto \phi_\lambda(t)/t$ is nonincreasing on $(0,1]$, contradicting
\eqref{eq:AES-origin-slope-infinite}. Hence $g(p)=\infty$ for every $p>p_0$. 
Consequently, for every $X\in L^\infty$,
$$
\rho(X)=\sup\{\ES_p(X):p\in[0,1],\ g(p)=0\}=\ES_{p_0}(X).
$$
This proves the claim.
\end{proof}

The intuition is that $\ES^g$ is the upper envelope of the family
$\{\ES_p-g(p):p\in[0,1]\}$. Although each $\ES_p$ is submodular, taking the pointwise supremum over many levels typically introduces switching between different active ES levels. Such switching is incompatible with the concavity structure imposed by submodularity on indicator-based profiles. Hence submodularity can persist only when the envelope does not switch across levels, that is, when $\ES^g$ collapses to a single $\ES_{p_0}$.

\subsection{Monotone mean-deviation risk measures}
A monotone mean-deviation risk measure, studied by \citet{HWW26}, is defined as  
\begin{equation}\label{eq:MD-1}
\rho(X) = g(\rho_\phi(X)-\E[X]) + \E[X] ,~~~X\in L^\infty,
\end{equation}
where $g:[0,\infty)\to [0,\infty)$ is an increasing, convex, and non-constant function satisfying $g(0)=0$,
and $\rho_\phi:L^\infty\to\R$ is a distortion risk measure with a concave distortion function $\phi$.
In \citet{HWW26}, $\rho_\phi$ in \eqref{eq:MD-1} can be replaced by other law-invariant coherent risk measures, but for explicit formulas, we stick to the setting of using distortion risk measures in \eqref{eq:MD-1}.

\begin{theorem}\label{th:md}
The risk measure $\rho$ in \eqref{eq:MD-1} is submodular if and only if it is a coherent distortion risk measure, that is, $g$ is linear or $\rho_\phi(X)=\E[X]$ for all $X\in L^\infty$.
\end{theorem}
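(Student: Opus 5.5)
The ``if'' direction is immediate from earlier results. If $\rho_\phi=\E$, then $\rho=\E$, which is modular by Theorem~\ref{th:modular-char}. If $g(x)=cx$, then $\rho=c\rho_\phi+(1-c)\E=\rho_{\tilde\phi}$ with $\tilde\phi(t)=c\phi(t)+(1-c)t$. This $\tilde\phi$ is concave, and it is a distortion provided $\rho$ is monotone, as in the setting of \citet{HWW26}. Hence $\rho$ is a coherent distortion risk measure and is submodular by Theorem~\ref{th:distortion}.

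For the ``only if'' direction I will mimic the indicator argument in the proof of Theorem~\ref{thm:AES-submodular-characterization}. Fix $\lambda>0$ and set $\psi_\lambda(t)=\rho(\lambda\id_E)$ for $\p(E)=t$. Submodularity of $\rho$ makes $E\mapsto\rho(\lambda\id_E)$ a submodular set function depending only on $\p(E)$, so $\psi_\lambda$ is concave on $[0,1]$. Since $\rho_\phi(\lambda\id_E)=\lambda\phi(t)$ and $\E[\lambda \id_E]=\lambda t$, we get
$$\psi_\lambda(t)=g\bigl(\lambda D(t)\bigr)+\lambda t,\qquad D(t)=\phi(t)-t .$$
Here $D$ is concave with $D(0)=D(1)=0$ and $D\ge0$. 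Assume $D\not\equiv0$ (otherwise $\rho_\phi=\E$ on indicators, hence on all of $L^\infty$ by comonotonic additivity and continuity). It remains to show that concavity of $t\mapsto g(\lambda D(t))$ for every $\lambda>0$ forces $g$ to be linear.

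The plan is to work on a subinterval $[0,t^*]$ on which $D$ is continuous and increasing onto $[0,M]$, with $M=\max D>0$. By Alexandrov's theorem, I pick an interior point $t_0$ where $D$ is twice differentiable with $D'(t_0)=\kappa>0$. Suppose $g$ is not linear. Then there are $0\le x_1<x_2$ with $g'_+(x_2)-g'_+(x_1)=\eta>0$. Concavity of $\psi_\lambda$ gives that $t\mapsto\lambda g'_+(\lambda D(t))D'_+(t)$ is nonincreasing. Now choose $\lambda$ and a short interval $[t_0,t_0+\delta]$ so that $\lambda D$ maps it across $[x_1,x_2]$. Along this interval $g'$ rises by at least $\eta$, while $D'$ drops by only $O(\delta)$, because $D$ has a second-order expansion at $t_0$. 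Since $\lambda\kappa\delta\approx x_2-x_1$, the drop in $D'$ is $O((x_2-x_1)/(\lambda\kappa))$, whereas the gain in the product is of order $\eta\kappa$. A contradiction then follows once $\lambda$ is large and the interval is positioned correctly.

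The main obstacle is exactly this positioning. With a single indicator profile, fixing $\lambda D(t_0)=x_1$ pins down $\lambda$, so I cannot freely send $\lambda\to\infty$. My first attempt will be to enrich the test profiles to $X=a\id_\Omega+\lambda\id_E$ with $a$ arbitrary and to two-level profiles $\lambda\id_E+\mu\id_F$ with $E\subseteq F$. This gives a concave function of $t$ in which the deviation argument $\lambda D(t)+\mu(\cdot)$ has an independently adjustable offset, so the strict-convexity window of $g$ can be hit at a point $t_0$ where $D$ is nearly linear. If that fails, the fallback is to use the global chord inequality $\psi_\lambda(t)/t\ge\psi_\lambda(t^*)/t^*$ together with the limits $g(x)/x\to g'(0+)$ as $x\to 0$ and $g(y)/y\to g'(\infty)$ as $y\to\infty$. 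That route handles the cases where $g'(\infty)/g'(0+)$ exceeds $D'(0+)t^*/M$, and the local argument covers the rest.
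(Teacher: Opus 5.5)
Your ``if'' direction is fine. Monotonicity is not even needed: for concave $\tilde\phi$, the functional $\rho_{\tilde\phi}$ is convex and Choquet, hence submodular by Corollary~2.3 of \citet{CC18}.

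The gap is in the ``only if'' direction. Your plan rests on the claim that concavity of $t\mapsto g(\lambda D(t))$ for all $\lambda>0$, on an interval where $D$ increases, forces $g$ to be linear. That claim is false, and it is exactly the ``positioning'' step you leave open. In smooth terms, with $u=\lambda D(t)$, concavity at $t$ reads $u g''(u)/g'(u)\le D(t)|D''(t)|/D'(t)^2$, and changing $\lambda$ only changes $u$. So there is no freedom to send $\lambda\to\infty$ at a fixed $t_0$. For a concrete counterexample, take $\phi(t)=\sqrt t$, $g(u)=u^{3/2}$, and write $s=\sqrt t$. Then $D=s-s^2$ increases on $[0,1/4]$, and $\frac{\d^2}{\d t^2}g(\lambda D(t))$ has the sign of $D'^2+2DD''=(4s^2-2s-1)/(4s^2)$. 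This is negative for all $t<(3+\sqrt5)/8\approx 0.65$, whatever $\lambda$ is. Hence $\psi_\lambda$ is concave on $[0,0.65]$ for every $\lambda>0$. Neither your local argument nor the chord fallback can produce a contradiction on $[0,t^*]$; the fallback is void here anyway, since $D'(0+)=\infty$. Your enrichments do not change this either:
\begin{itemize}
\item adding $a\id_\Omega$ leaves $\rho_\phi-\E$ unchanged;
\item a comonotonic profile $\lambda\id_E+\mu\id_F$ with $E\subseteq F$ and $\mu\ge0$ only shifts the argument of $g$ by $c=\mu D(\p(F))\ge 0$. Since $g''(u)/g'(u)=1/(2u)$ decreases in $u$, this only reinforces concavity.
\end{itemize}
Moreover, an interval $[0,t^*]$ on which $D$ is continuous and increasing need not exist: for $\phi=\id_{\{t>0\}}$ one has $D(t)=1-t$ on $(0,1]$.

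The missing idea is to test comonotonic pairs. Write $\Delta(X)=\rho_\phi(X)-\E[X]$. For comonotonic $X,Y$, the four variables $X,Y,X\vee Y,X\wedge Y$ are comonotonic and $X\vee Y+X\wedge Y=X+Y$. Hence $\Delta(X\vee Y)+\Delta(X\wedge Y)=\Delta(X)+\Delta(Y)$, and $\E$ is modular. Submodularity then reduces to $g(y)+g(z)\le g(\Delta(X))+g(\Delta(Y))$ with $y+z=\Delta(X)+\Delta(Y)$. This fails as soon as $\Delta(X)=\Delta(Y)=x$, $y\neq x$, and $g$ is not affine on any neighbourhood of $x$.

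Your indicator test is weaker for a clear reason. Nested indicators give trivial pairs, and non-nested ones bring in the concavity of $D$, which counteracts the convexity of $g$.

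The paper takes $A\subseteq B\subseteq C$ with probabilities $p<q<r$, chosen so that $\psi=\phi-\mathrm{id}$ satisfies $\psi(p)>\psi(q)>\psi(r)$ and $\psi(p)+\psi(r)=2\psi(q)$. It sets $X=m(\id_A+\id_C)/2$, $Y=m\id_B$, $m=x/\psi(q)$. This gives $\Delta(X)=\Delta(Y)=x<\Delta(X\wedge Y)=m(\psi(p)+\psi(q))/2$ directly, with no regularity needed from $g$ or $\phi$.

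If you want to keep indicators, you must work at the other end, $t=1$. Since $\phi$ is increasing and $D\not\equiv 0$, one has $\kappa_1=-D'(1-)\in(0,1]$. The concave functions $\tau\mapsto\psi_\lambda(1-\tau/\lambda)-\lambda=g(\lambda D(1-\tau/\lambda))-\tau$ on $[0,\lambda]$ converge pointwise to $g(\kappa_1\tau)-\tau$ as $\lambda\to\infty$. So $g$ is concave, hence linear.
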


\begin{proof} The ``if'' statement is straightforward from Theorem \ref{th:distortion}. Below we show the ``only if'' statement.
Suppose that the convex function $g$ is not linear and $\rho_\phi(X)\neq \E[X]$ for some $X\in L^\infty$, and we will show that $\rho$ is not submodular. There exists $x>0$ such that $g$ is 
locally nonlinear at $x$, and by convexity it implies $g(y)+g(z)>2g(x)$ for all $y,z\ge 0$ with $y+z=2x$ and $y\ne x$. 

Since $\phi$ is not the identity and it is concave, we know $\phi(t)>t$ for all $t\in (0,1)$. 
Let $\psi(t)=\phi(t)-t$ for $t\in (0,1)$.
The set $\{\psi(t) : t\in (0,1)\}$ is a nonempty interval because of the continuity of $\psi$ on $(0,1)$.
Moreover, $\psi(t) \to 0$ as $t\uparrow 1$. Therefore, 
we can find $p,q,r\in (0,1)$ with $p<q<r$ such that 
$\psi(p) > \psi(q)  >  \psi(r) $ and  $\psi(p)+\psi(r) = 2 \psi(q)$. 
 Let $m= x/\psi(q)$, 
$X=m (\id_A+\id_C)/2$, and $Y=  m \id_B $, where the events $A,B,C$ satisfy $A\subseteq B\subseteq C$,  $\p(A)=p$, $\p(B)=q$, and $\p(C)=r$.
We can calculate 
\begin{align*}  \rho_\phi(X)-\E[X] &= m (\psi(p) + \psi(r) )/2= x  ;\\
\rho_\phi(Y)-\E[Y]&= m \psi(q) =x;  \\
 y= \rho_\phi( X \wedge Y)  -\E[X\wedge Y] &=m ( \psi(p)+\psi(q) )/2 >x ;\\
  z= \rho_\phi( X \vee Y)  -\E[X\vee  Y] &  = m (  \psi(q)+\psi(r) )/2<x.
\end{align*}
Since $y+z=2x$ and $y>x$, we get
$$
\rho(X \wedge Y) +  \rho(X \vee Y) 
= g(y)+ g(z) +\E[X+Y] > 2g(x)+\E[X+Y] = \rho(X)+\rho(Y),
$$
showing that $\rho$ is not submodular.
\end{proof}

Theorem \ref{th:md} shows that submodularity is highly restrictive in this class. Submodularity excludes any active nonlinearity in the deviation weighting: either $g$ is linear, or $\rho_\phi(X)=\E[X]$ for all $X\in L^\infty$, so the deviation term vanishes.

\section{Submodularity in financial data}
\label{sec:7}

In this section, we illustrate empirical observations of submodularity and non-submodularity of risk measures in financial data.

\paragraph{Methodology.}
For each analysis, we compute daily log returns as
$r_t=\ln(P_t)-\ln(P_{t-1}),$ where $P_t$ denotes the adjusted closing price on day $t$.
Within each rolling window, returns are converted to losses $L_t=-r_t$. For each stock and each trading day, we compute rolling historical VaR and ES using two window lengths:  $n=250$ trading days (approximately one year) and $n=500$ trading days (approximately two years).

For a confidence level $p\in(0,1)$, let $k=\lceil n(1-p)\rceil,$ and let $L_{(1)}\ge L_{(2)}\ge \cdots \ge L_{(n)}$ denote the ordered losses in descending order (largest loss first). We compute:
$$
\VaR_{p}=L_{(k)},\qquad
\ES_{p}=\frac{1}{k}\sum_{i=1}^{k}L_{(i)}.
$$
That is, we use a rolling historical estimator based on upper-tail order statistics: $\VaR_p$ is the $k$-th largest loss in the window, and $\ES_p$ is the arithmetic mean of the $k$ largest losses. This convention is immaterial under continuous loss distributions and is conservative in finite samples when $n(1-p)$ is an integer.
Throughout this section, $p$ denotes the confidence level, so the corresponding tail probability is $1-p$.
For the sector-based analysis, we use $p\in\{0.90,\,0.95\}$.
The pair-based analysis explores 14 levels:
$p\in\{0.99,\,0.98,\,\ldots,\,0.90\}$ and $p\in\{0.88,\,0.85,\,0.82,\,0.80\}$,
to examine non-monotonic patterns across confidence levels.
Following the theoretical framework, we consider the ES-based two-level specification
$$
\AES_{p,q,c}(X)=\max\bigl\{\ES_q(X),\ \ES_p(X)-c\bigr\},
$$
where $p=0.98$ and $c$ is a fixed constant. This specification is motivated by the general AES form
$$
\sup_{u\in[0,1]}\{\ES_u(X)-g(u)\},
$$
and is used here as a tractable empirical specification.
For the sector-based analysis, $q\in\{0.90,0.95\}$.
For the pair-based analysis, $q$ uses the same 14 levels as $p$.
We take $c\in\{0.01,0.015,0.02\}$.
The choice of $c$ is calibrated to the empirical gap between $\ES_p$ and $\ES_q$. In our sample, when $q=0.90$, the mean gap $\ES_{0.98}-\ES_{0.90}$ ranges from 0.023 to 0.040 across stocks and window lengths. Thus, these values of $c$ are of the same order of magnitude and have a nontrivial effect on $\AES_{p,q,c}$.

\paragraph{Submodularity test.}
For each pair of stocks $(X,Y)$ and each trading day, we test the submodularity condition, where $\rho$ denotes the risk measure (VaR, ES, or $\AES_{p,q,c}$).
We record a violation when
$$
\rho(X)+\rho(Y)-\rho(X\wedge Y)-\rho(X\vee Y)<-\epsilon,
\qquad \epsilon=10^{-8},
$$
where $\epsilon=10^{-8}$ is a conservative tolerance used to guard against floating-point error. Here $X\wedge Y$ and $X\vee Y$ are implemented at the loss-series level: within each rolling window of length $n$, we form $L_t^{X\wedge Y}=\min\{L_t^X,L_t^Y\}$ and $L_t^{X\vee Y}=\max\{L_t^X,L_t^Y\},$ then apply the same rolling historical estimator to these constructed series.
The daily violation rate is computed as the proportion of stock pairs exhibiting a violation on each day.

\paragraph{Sample construction and analysis designs.}
We conduct two complementary empirical analyses that differ in sample construction and aggregation. To examine how submodularity violations vary across confidence levels, we perform an analysis on the three selected stock pairs (META--NFLX, DIS--GOOGL, and DIS--META), chosen to represent high-volatility technology and media firms with strong tail-risk interactions. For the pair-based analysis, daily adjusted closing prices are obtained from Stooq. For the sector-based panel and the VIX series, data are downloaded from Yahoo Finance. The download window begins approximately 750 trading days before January 1, 2018 in order to provide a warm-up period for the rolling-window estimators. For each pair, returns are aligned on common trading dates and rows with any missing observations are dropped (complete-case deletion); no forward-filling or interpolation is applied. The reported pair-based results are restricted to January 1, 2018 onward, yielding 1{,}781 valid trading days for each of the 250-day and 500-day windows. The analysis focuses on how submodularity violations vary with confidence levels, window lengths, and (for $\AES_{p,q,c}$) the parameters $(q,c)$.

The second part of the analysis is a sector-based design intended to provide broad market coverage while remaining computationally tractable. Specifically, we use the latest S\&P~500 constituent snapshot before January 2, 2018 and select up to five constituents by market capitalization from each historical GICS sector, resulting in 52 unique stocks (see Table~\ref{tab:stock_selection} in the Appendix for the full list with market capitalizations and data sources). The historical sector classification predates the later Communication Services reclassification, so the Telecommunication Services sector contributes only two stocks in the snapshot. Using beginning-of-sample market capitalizations avoids look-ahead bias in stock selection. The sector-based results, figures, and exported statistics cover the period from January 2, 2018 to January 30, 2025 (1{,}780 trading days). For the submodularity test, we evaluate all pairwise combinations of these 52 stocks, yielding $\binom{52}{2}=1{,}326$ pairs per trading day.

\paragraph{Pair-based results.}
For the pair-based analysis, Figure~\ref{fig:es_var_conf} displays VaR and ES violation rates as a function of confidence level for each pair. The VaR violation rate shows a non-monotonic pattern that varies across window lengths. Averaged across the three pairs, the 250-day window gives violation rates of 15.3\% at 80\% confidence, 8.7\% at 95\% confidence, and 36.1\% at 99\% confidence. For the 500-day window, the rates are 0.9\% at 95\% confidence, 26.3\% at 98\% confidence, and 22.5\% at 99\% confidence. Thus the longer window greatly reduces violations at moderate confidence levels, but the extreme-tail behavior remains irregular. As expected from the exact ES structure of the estimator, ES exhibits no submodularity violations across all confidence levels and both rolling-window lengths in our sample (Table~\ref{tab:pair_violation}).

Figure~\ref{fig:esg_conf} illustrates how $\AES_{p,q,c}$ violation rates vary with the choice of $q$ and $c$. Two patterns emerge. First, the effect of $q$ is non-monotone: violation rates are highest for intermediate values of $q$ and are close to zero at both extremes. When $q$ is very close to $p$, the adjustment term dominates and $\AES_{p,q,c} = \ES_p - c$, which preserves submodularity; when $q$ is far from $p$, $\AES_{p,q,c} = \ES_q$, which is also submodular. Second, larger values of $c$ generally produce higher violation rates. For example, at $q=0.94$, the mean violation rate rises from 0.00\% for $c=0.01$ to 1.28\% for $c=0.015$ and 6.01\% for $c=0.02$. Overall, $\AES_{p,q,c}$ exhibits an average violation rate of 1.12\%, placing it between ES (no observed violations) and VaR (frequent violations) (Table~\ref{tab:pair_violation}).

\begin{table}[htbp]
\centering
\caption{Pair-based submodularity violation rates}
\label{tab:pair_violation}
\begin{tabular}{lrrr}
\hline
Risk measure & Total tests & Violations & Rate (\%) \\
\hline
VaR          & 149{,}604 & 15{,}398 & 10.29 \\
ES           & 149{,}604 & 0        & 0.00 \\
$\AES_{p,q,c}$ & 448{,}812 & 5{,}019  & 1.12 \\
\hline
\multicolumn{4}{@{}l@{}}{\scriptsize 3 pairs; both windows have $N=1{,}781$; 14 confidence levels; $\AES_{p,q,c}$ uses 14 $q$-levels and 3 $c$-values.} \\
\end{tabular}
\end{table}

\begin{figure}[p]
  \centering
  \includegraphics[width=\textwidth,height=0.24\textheight,keepaspectratio]{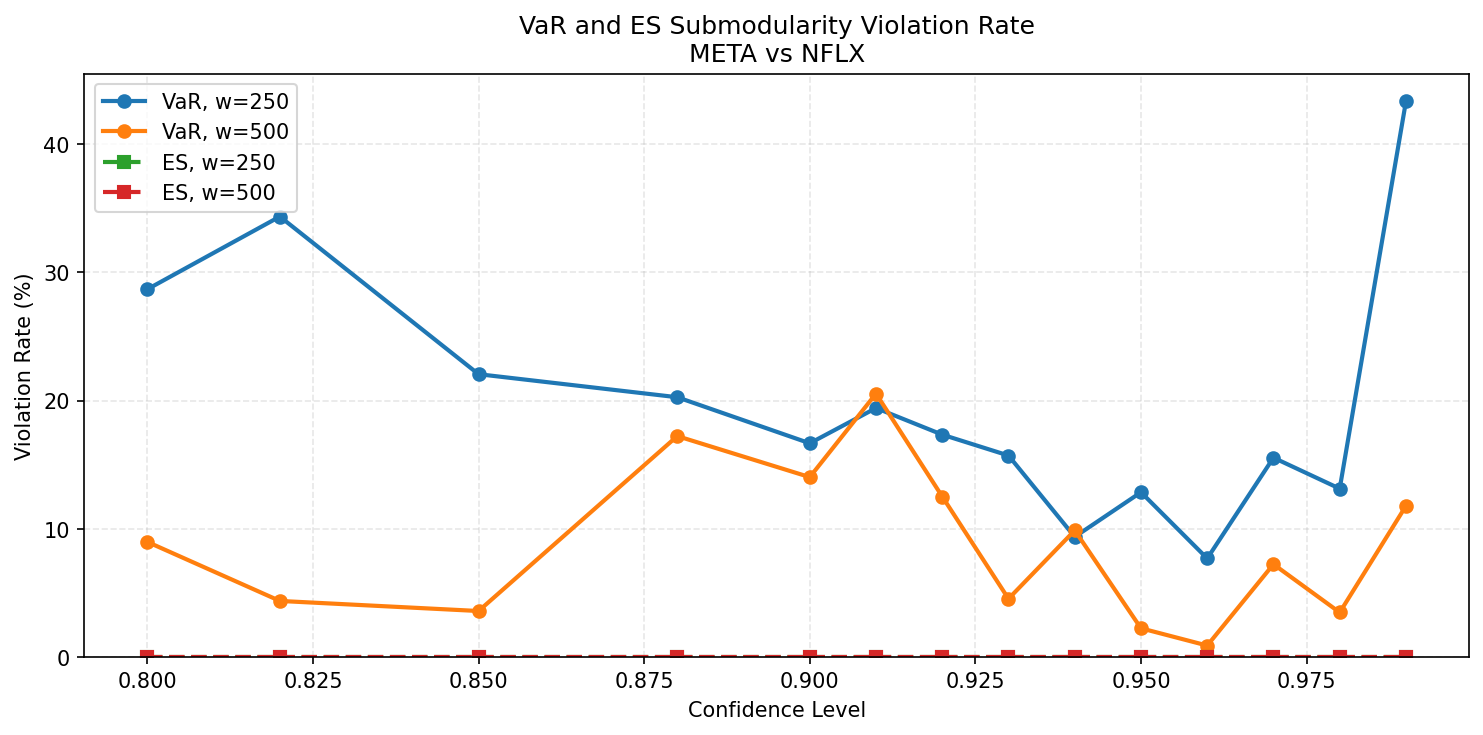}
  \par\vspace{0.3em}
  \includegraphics[width=\textwidth,height=0.24\textheight,keepaspectratio]{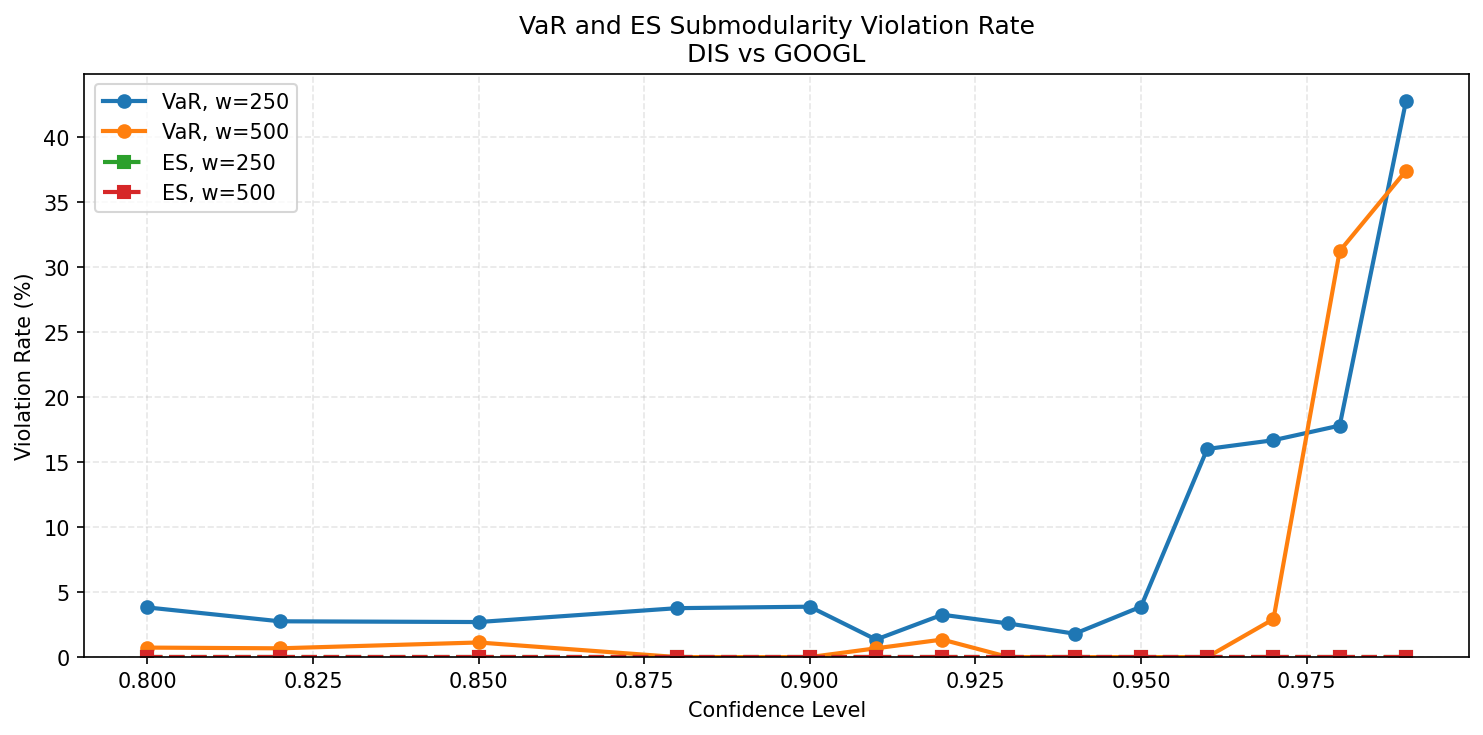}
  \par\vspace{0.3em}
  \includegraphics[width=\textwidth,height=0.24\textheight,keepaspectratio]{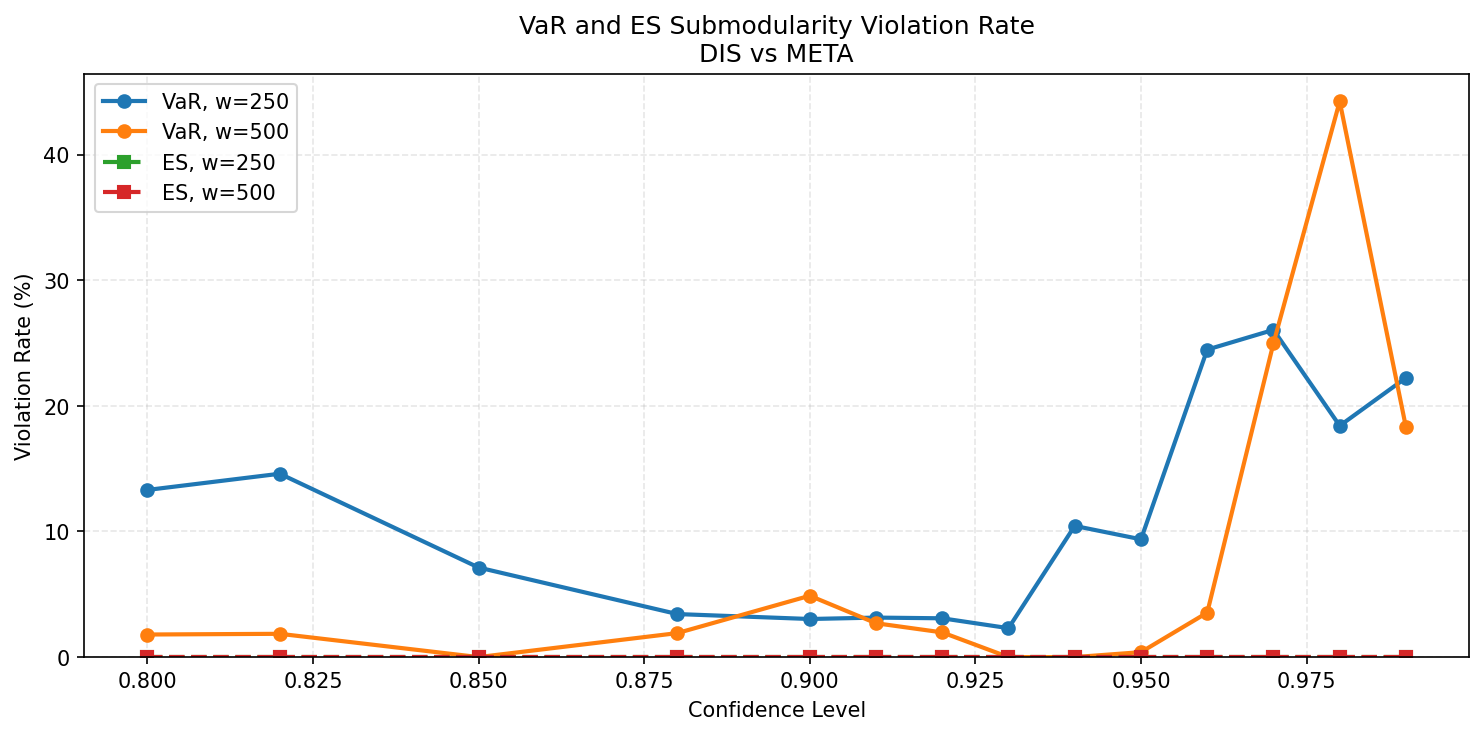}
  \caption{VaR and ES: violation rate by confidence level for the three pairs, ordered top to bottom as META--NFLX, DIS--GOOGL, and DIS--META.}
  \label{fig:es_var_conf}
\end{figure}

\begin{figure}[p]
  \centering
  \includegraphics[width=\textwidth,height=0.24\textheight,keepaspectratio]{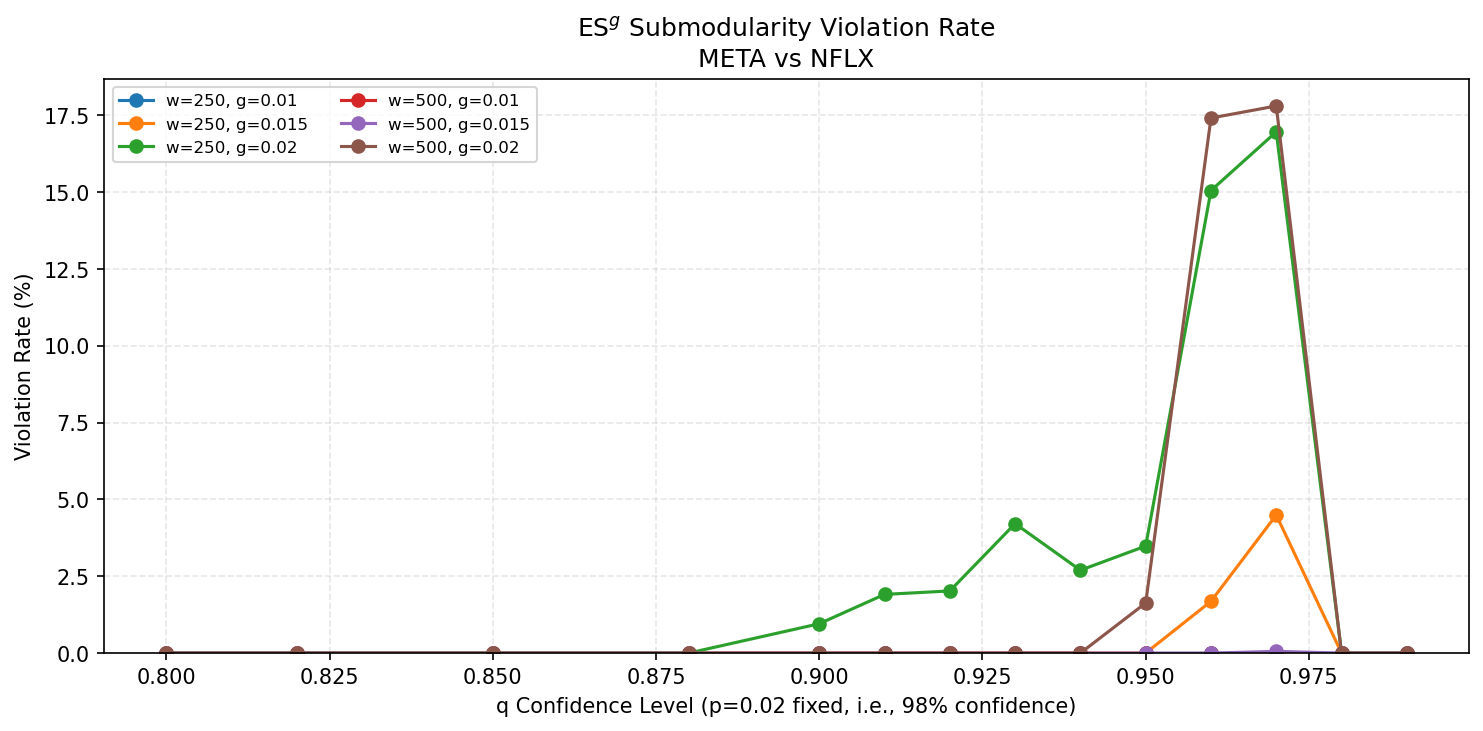}
  \par\vspace{0.3em}
  \includegraphics[width=\textwidth,height=0.24\textheight,keepaspectratio]{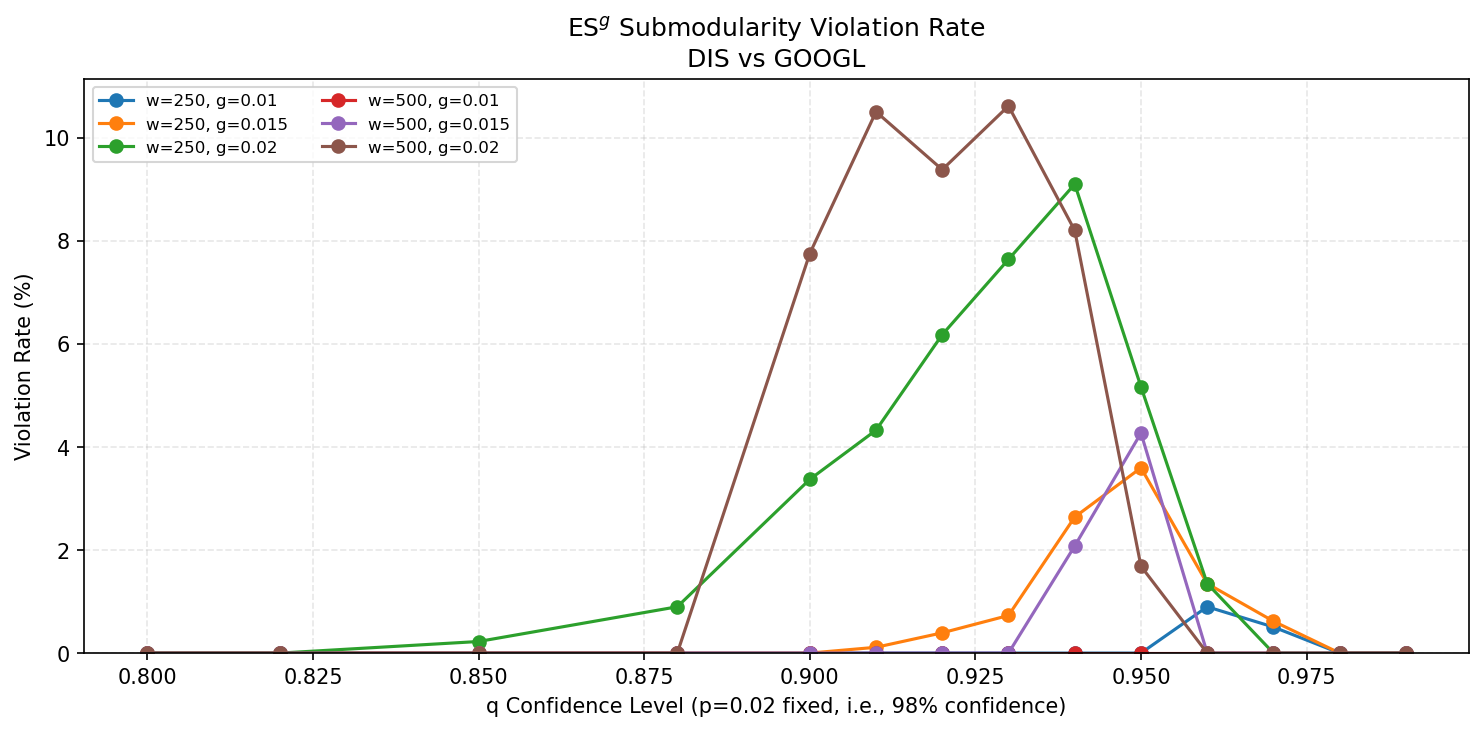}
  \par\vspace{0.3em}
  \includegraphics[width=\textwidth,height=0.24\textheight,keepaspectratio]{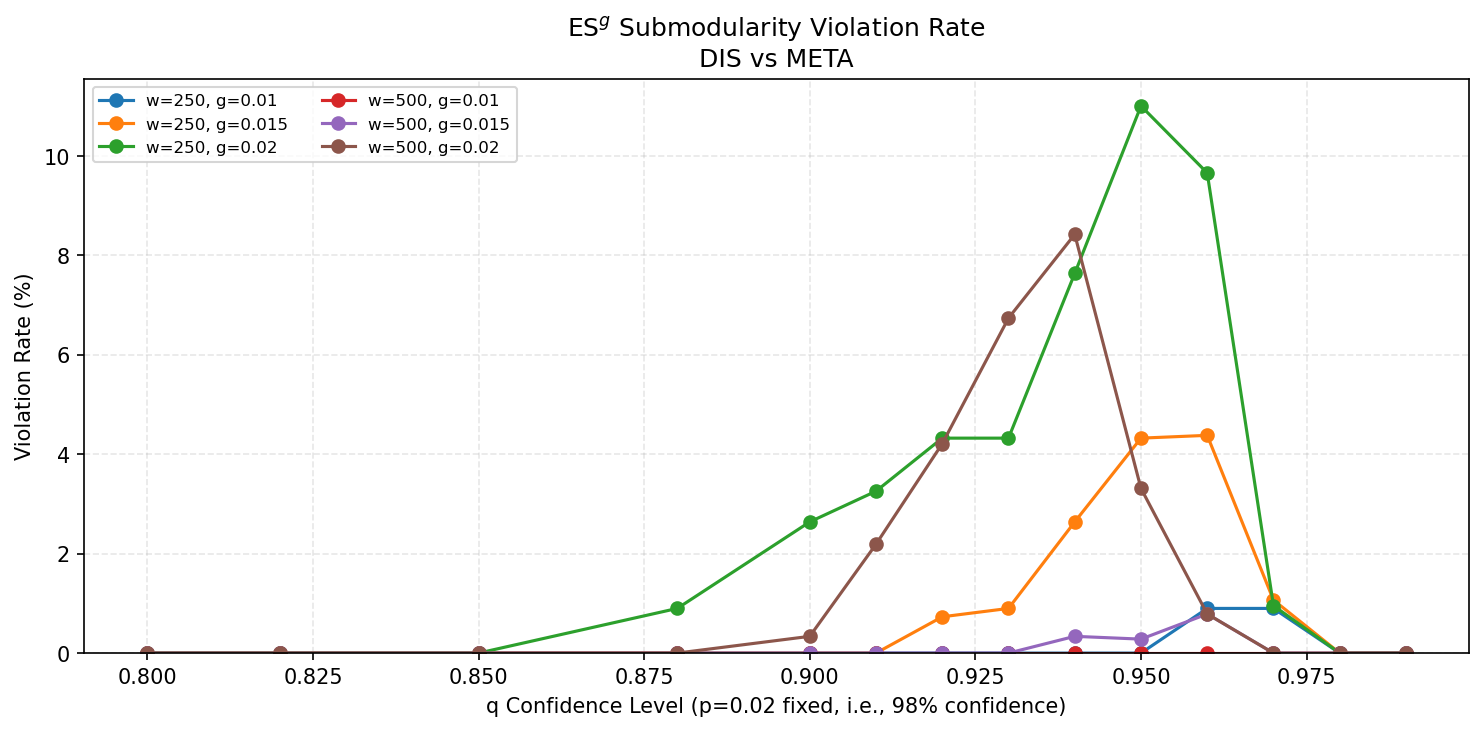}
  \caption{$\AES_{p,q,c}$: violation rate by $q$ for $c=0.01,0.015,0.02$, with pairs ordered top to bottom as META--NFLX, DIS--GOOGL, and DIS--META.}
  \label{fig:esg_conf}
\end{figure}

\paragraph{Sector-based results.}
For the sector analysis, Table~\ref{tab:sector_violation} reports summary statistics of the daily submodularity violation rates, aggregated across all stock pairs and trading days, for each risk measure. Overall, the empirical evidence is consistent with the theoretical predictions. As expected from the exact ES structure of the estimator, no ES violations are observed over the entire sample period. In contrast, VaR frequently violates submodularity: the average daily violation rate is about 4.5\%, with standard deviation 3.8\% and a peak of 24.7\%, which tends to occur during periods of elevated market volatility. Longer estimation windows substantially reduce mean VaR violation rates, from 6.4\% at $n=250$ to 2.7\% at $n=500$, suggesting that sampling variability in the quantile estimate is a key driver of submodularity failures. $\AES_{p,q,c}$ displays smaller but still visible violations, with mean 0.40\% and maximum 5.05\%, again placing it between ES (no observed violations) and VaR (frequent violations). The adjustment parameter $c$ remains important: for the 250-day window with $q=0.90$, the mean violation rate increases from 0.10\% for $c=0.01$ to 0.64\% for $c=0.015$ and 1.56\% for $c=0.02$. To illustrate the time-series behavior, Figures~\ref{fig:var_es_violations} and~\ref{fig:esg_violations} plot the daily violation rates for VaR/ES and $\AES_{p,q,c}$, respectively.

\begin{table}[htbp]
\centering
\caption{Sector-based daily submodularity violation rates}
\label{tab:sector_violation}
\small
\begin{tabular}{lcccc}
\hline
Risk measure & Mean (\%) & Std (\%) & Min (\%) & Max (\%) \\
\hline
VaR          & 4.52 & 3.79 & 0.30 & 24.74 \\
ES           & 0.00 & 0.00 & 0.00 & 0.00 \\
$\AES_{p,q,c}$ & 0.40 & 0.70 & 0.00 & 5.05 \\
\hline
\multicolumn{5}{@{}l@{}}{\scriptsize 2018-01-02 to 2025-01-30; VaR/ES at 90\%, 95\%; $\AES_{p,q,c}$ with $q=0.90,0.95$ and $c=0.01,0.015,0.02$.} \\
\end{tabular}
\end{table}

\begin{figure}[htbp]
  \centering
  \includegraphics[width=\textwidth]{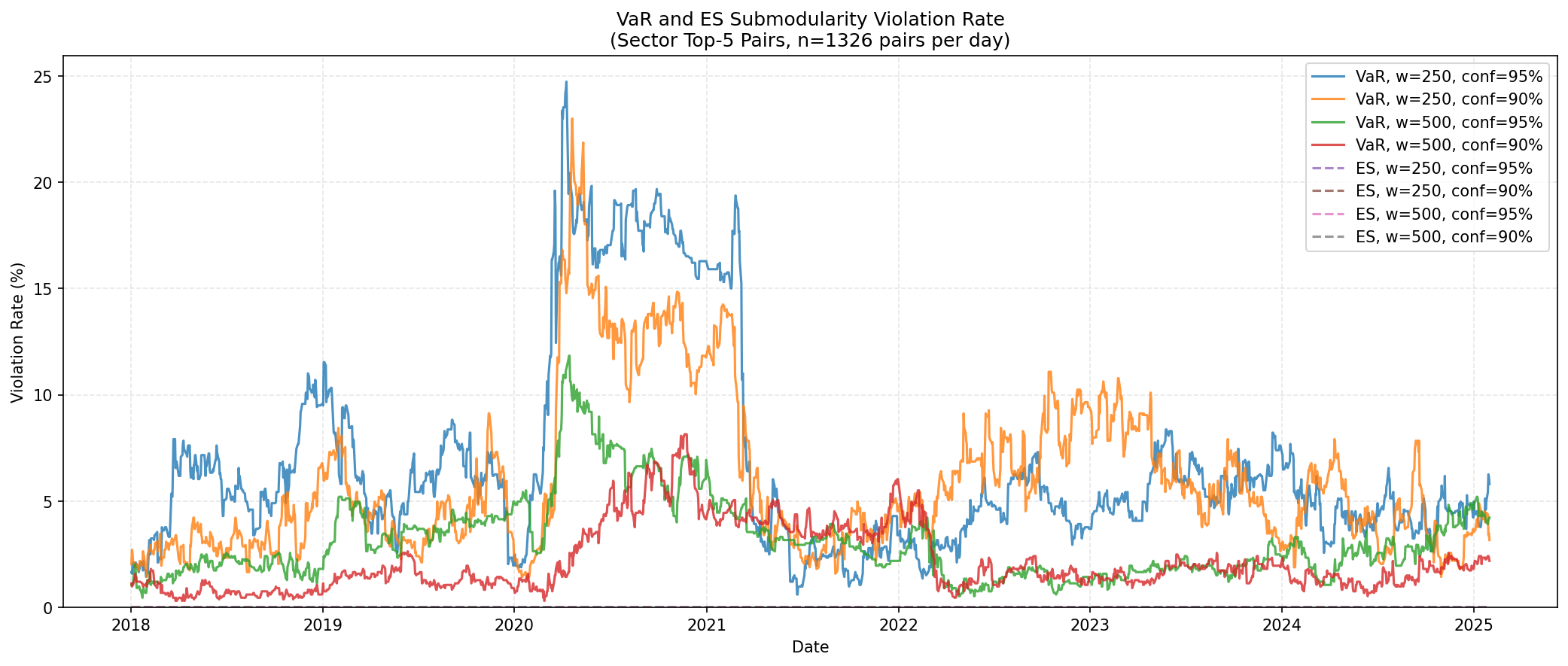}
  \caption{Daily VaR and ES submodularity violation rate.}
  \label{fig:var_es_violations}
\end{figure}

\begin{figure}[htbp]
  \centering
  \includegraphics[width=\textwidth]{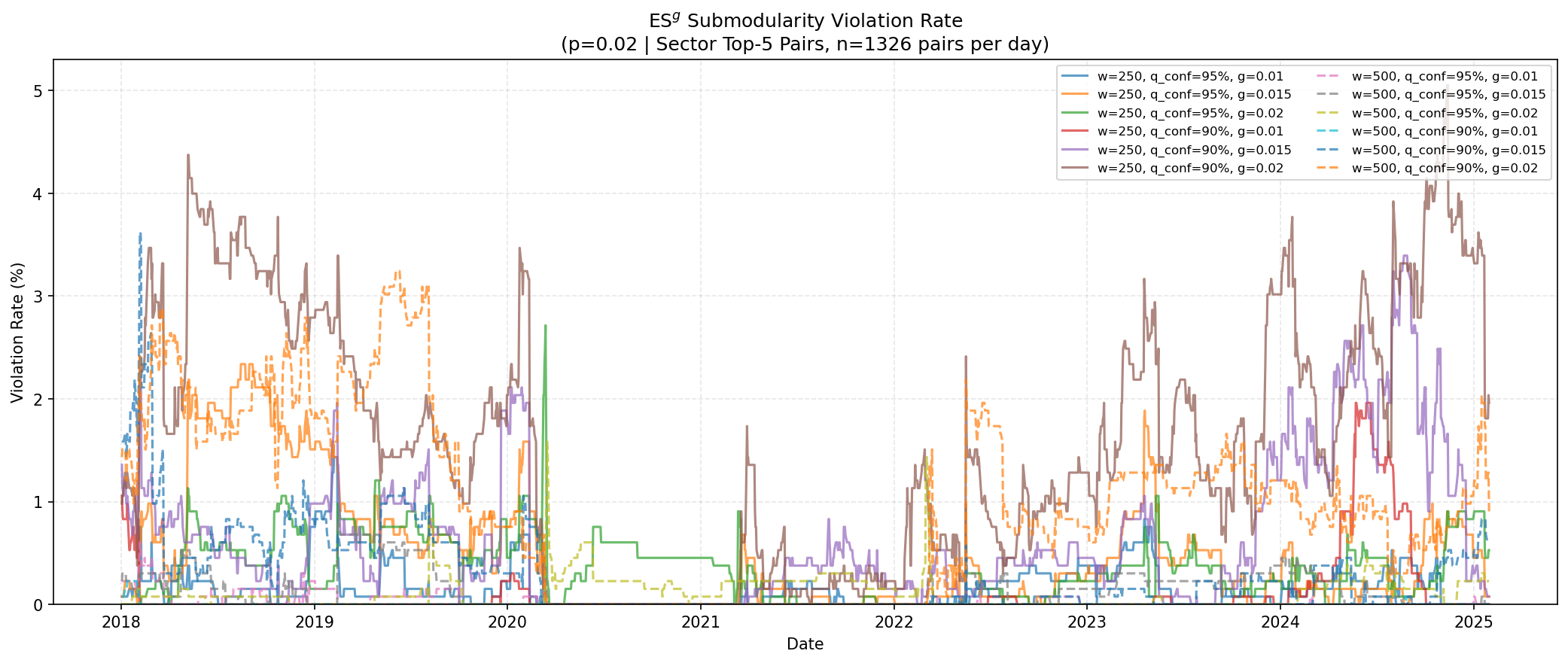}
  \caption{Daily $\AES_{p,q,c}$ submodularity violation rate.}
  \label{fig:esg_violations}
\end{figure}

\paragraph{Subadditivity test and VIX correlation analysis.}
As a complementary analysis, we test VaR for subadditivity violations using the same sector-based sample of stock pairs. For each pair $(X,Y)$, we form the portfolio loss series $L_t^{X+Y}=-\bigl(r_t^X+r_t^Y\bigr)$ and compute the rolling VaR of the sum directly. A subadditivity violation is recorded when
$$
\VaR_p(X)+\VaR_p(Y)-\VaR_p(X+Y)<-\epsilon,
\qquad \epsilon=10^{-8},
$$
that is, when $\VaR_p(X+Y)>\VaR_p(X)+\VaR_p(Y)$, so that VaR penalizes diversification. We test at confidence levels $p\in\{0.95,\,0.975,\,0.99\}$ using both window lengths.

To examine the relationship between violation dynamics and market stress, we compute correlations between three time series: VaR submodularity violation rates, VaR subadditivity violation rates, and the CBOE Volatility Index (VIX). Table~\ref{tab:vix_corr} reports the Pearson correlation, the Spearman rank correlation, and the distance correlation of \citet{SRB07} for each window length and confidence level. The correlation between submodularity and subadditivity violation rates remains high across all configurations (Pearson 0.50--0.96, Spearman 0.46--0.90, distance 0.52--0.97), with the weakest link at the 250-day window and 99\% confidence (Pearson 0.50). This suggests that the two notions often fail during the same periods: when VaR violates submodularity, it also tends to violate subadditivity. Both violation rates also show positive correlation with the VIX. Pearson correlations between VIX and submodularity violation rates range from 0.22 to 0.50, while those between VIX and subadditivity violation rates range from 0.32 to 0.53, indicating that VaR diversification failures tend to coincide with periods of higher market volatility.

\begin{table}[htbp]
\centering
\caption{Correlations of VIX with VaR submodularity and subadditivity violation rates}
\label{tab:vix_corr}
\small
\begin{tabular}{cclccc}
\hline
Window & VaR Level & Correlation & VIX--Submod & VIX--Subadd & Submod--Subadd \\
\hline
250 & 95\% & Pearson  & 0.505 & 0.509 & 0.952 \\
    &        & Spearman & 0.230 & 0.195 & 0.854 \\
    &        & Distance & 0.511 & 0.507 & 0.941 \\[3pt]
250 & 97.5\% & Pearson  & 0.488 & 0.535 & 0.943 \\
    &        & Spearman & 0.406 & 0.314 & 0.845 \\
    &        & Distance & 0.521 & 0.529 & 0.941 \\[3pt]
250 & 99\% & Pearson  & 0.222 & 0.522 & 0.500 \\
    &        & Spearman & 0.148 & 0.243 & 0.463 \\
    &        & Distance & 0.283 & 0.512 & 0.521 \\[3pt]
500 & 95\% & Pearson  & 0.382 & 0.448 & 0.934 \\
    &        & Spearman & 0.078 & 0.159 & 0.885 \\
    &        & Distance & 0.389 & 0.430 & 0.902 \\[3pt]
500 & 97.5\% & Pearson  & 0.290 & 0.325 & 0.900 \\
    &        & Spearman & 0.411 & 0.375 & 0.901 \\
    &        & Distance & 0.373 & 0.337 & 0.896 \\[3pt]
500 & 99\% & Pearson  & 0.430 & 0.401 & 0.959 \\
    &        & Spearman & 0.321 & 0.233 & 0.702 \\
    &        & Distance & 0.426 & 0.418 & 0.971 \\
\hline
\multicolumn{6}{l}{\footnotesize $N=1{,}780$ trading days for all configurations.} \\
\end{tabular}
\end{table}

\section{Conclusion}
\label{sec:8}

Our main contribution is a systematic analysis of submodularity for law-invariant  risk measures, beyond the coherent setting studied by \citet{CC18}, with a new risk management interpretation and empirical illustrations.
Submodularity reflects a principle of rewards for ``flattening the risk",  and it turns out to give rise to rich and nontrivial mathematical results. We are able to fully characterize submodularity of   expected losses, certainty equivalents, law-invariant coherent risk measures, and several classical deviation measures.  
Within several families of convex risk measures with explicit formulas, we obtain a complete characterization for shortfall risk measures,  optimized certainty equivalents, adjusted Expected Shortfall, and monotone mean-deviation risk measures. The main message there is that the submodularity restriction behaves quite differently in each class.
Nevertheless, a full characterization of submodular convex risk measures, without assuming any specific form, is not yet available and requires further study.


\FloatBarrier
\bibliographystyle{apalike}
\bibliography{references}

@article{SRB07,
  author  = {Sz{\'e}kely, G{\'a}bor J. and Rizzo, Maria L. and Bakirov, Nail K.},
  title   = {Measuring and Testing Dependence by Correlation of Distances},
  journal = {The Annals of Statistics},
  year    = {2007},
  volume  = {35},
  number  = {6},
  pages   = {2769--2794}
}

@article{B22,
  author = {Bilmes, Jeff},
  title = {Submodularity in machine learning and artificial intelligence},
  journal = {arXiv preprint},
  pages ={arXiv:2202.00132},
  year = {2022}
}

@article{BMW22,
  author = {Burzoni, Matteo and Munari, Cosimo and Wang, Ruodu},
  title = {Adjusted {E}xpected {S}hortfall},
  journal = {Journal of Banking and Finance},
  year = {2022},
  volume = {134},
  pages = {106297}
}

@article{BT07,
  author = {Ben-Tal, Aharon and Teboulle, Marc},
  title = {An old-new concept of convex risk measures: The optimized certainty equivalent},
  journal = {Mathematical Finance},
  year = {2007},
  volume = {17},
  number = {3},
  pages = {449--476}
}

@article{BW22,
  author = {Bellini, Fabio and Fadina, Tolulope and Wang, Ruodu and Wei, Yunran},
  title = {Parametric Measures of Variability Induced by Risk Measures},
  journal = {Insurance: Mathematics and Economics},
  volume = {106},
  year = {2022},
  pages = {270--284},
  doi = {10.1016/j.insmatheco.2022.07.009}
}

@article{CC18,
  author = {Chateauneuf, Alain and Cornet, Bernard},
  title = {Choquet representability of submodular functions},
  journal = {Mathematical Programming Series B},
  year = {2018},
  volume = {168},
  pages = {615--629}
}

@article{CMMM11,
  title={Risk measures: rationality and diversification},
  author={Cerreia-Vioglio, Simone and Maccheroni, Fabio and Marinacci, Massimo and Montrucchio, Luigi},
  journal={Mathematical Finance},
  volume={21},
  number={4},
  pages={743--774},
  year={2011},
  publisher={Wiley Online Library}
}

@article{D89,
  title={Asset demand without the independence axiom},
  author={Dekel, Eddie},
  journal={Econometrica},
  volume={57},
  number={1},
  pages={163--169},
  year={1989},
}

@article{EMWW21,
  title={Bayes risk, elicitability, and the Expected Shortfall},
  author={Embrechts, Paul and Mao, Tiantian and Wang, Qiuqi and Wang, Ruodu},
  journal={Mathematical Finance},
  volume={31},
  number={4},
  pages={1190--1217},
  year={2021},
  publisher={Wiley Online Library}
}

@article{FS02,
  author = {F\"{o}llmer, Hans and Schied, Alexander},
  title = {Convex measures of risk and trading constraints},
  journal = {Finance and Stochastics},
  year = {2002},
  volume = {6},
  number = {4},
  pages = {429--447}
}

@book{FS16,
  author = {F\"ollmer, Hans and Schied, Alexander},
  title = {Stochastic Finance. An Introduction in Discrete Time},
  edition = {4th},
  publisher = {Walter de Gruyter},
  address = {Berlin},
  year = {2016}
}

@article{GG17,
  author = {Ghamami, Samim and Glasserman, Paul},
  title = {Submodular risk allocation},
  journal = {Management Science},
  year = {2019},
  volume = {65},
  number = {10},
  pages = {4656--4675}
}

@article{HWW26,
  author = {Han, Xia and Wang, Ruodu and Wu, Qiji},
  title = {Monotonic mean-deviation risk measures},
  journal = {Finance and Stochastics},
  year = {2026},
  volume = {30},
  pages = {441--483}
}

@book{MFE15,
  title={Quantitative Risk Management: Concepts, Techniques and Tools},
  author={McNeil, Alexander J and Frey, R{\"u}diger and Embrechts, Paul},
  year={2015},
  publisher={Princeton University Press},
  note ={Revised edition}
}

@incollection{MM04,
  author = {Marinacci, Massimo and Montrucchio, Luigi},
  title = {Introduction to the mathematics of ambiguity},
  booktitle = {Uncertainty in Economic Theory},
  editor = {Gilboa, Itzhak},
  publisher = {Routledge},
  address = {New York, NY, USA},
  year = {2004},
  pages = {46--107}
}

@article{MM08,
  author = {Marinacci, Massimo and Montrucchio, Luigi},
  title = {On concavity and supermodularity},
  journal = {Journal of Mathematical Analysis and Applications},
  year = {2008},
  volume = {344},
  number = {2},
  pages = {642--654}
}

@book{MS02,
  author = {M\"{u}ller, Alfred and Stoyan, Dietrich},
  title = {Comparison Methods for Stochastic Models and Risks},
  publisher = {Wiley},
  year = {2002}
}

@book{R13,
  author = {R{\"u}schendorf, Ludger},
  title = {Mathematical Risk Analysis. Dependence, Risk Bounds, Optimal Allocations and Portfolios},
  publisher = {Springer},
  address = {Heidelberg},
  year = {2013}
}

@article{S86,
  author = {Schmeidler, David},
  title = {Integral representation without additivity},
  journal = {Proceedings of the American Mathematical Society},
  year = {1986},
  volume = {97},
  number = {2},
  pages = {255--261}
}

@article{T78,
  author  = {Topkis, Donald M.},
  title   = {Minimizing a Submodular Function on a Lattice},
  journal = {Operations Research},
  year    = {1978},
  volume  = {26},
  number  = {2},
  pages   = {305--321},
  doi     = {10.1287/opre.26.2.305}
}

@article{TV20,
  author = {Tradacete, Pedro and Villanueva, Ignacio},
  title = {Valuations on {B}anach lattices},
  journal = {International Mathematics Research Notices},
  year = {2020},
  volume = {2020},
  number = {8},
  pages = {2468--2500}
}

@article{WWW20,
  author = {Wang, Ruodu and Wei, Yuchen and Willmot, Gordon E.},
  title = {Characterization, robustness and aggregation of signed {C}hoquet integrals},
  journal = {Mathematics of Operations Research},
  year = {2020},
  volume = {45},
  number = {3},
  pages = {993--1015}
}

@article{WWW20a,
  title={Distortion riskmetrics on general spaces},
  author={Wang, Qiuqi and Wang, Ruodu and Wei, Yunran},
  journal={ASTIN Bulletin},
  volume={50},
  number={3},
  pages={827--851},
  year={2020},
  publisher={Cambridge University Press}
}

@article{RUZ06,
  title={Generalized deviations in risk analysis},
  author={Rockafellar, R Tyrrell and Uryasev, Stan and Zabarankin, Michael},
  journal={Finance and Stochastics},
  volume={10},
  number={1},
  pages={51--74},
  year={2006},
  publisher={Springer}
}

@article{Y87,
  author = {Yaari, Menahem E.},
  title = {The dual theory of choice under risk},
  journal = {Econometrica},
  year = {1987},
  volume = {55},
  number = {1},
  pages = {95--115}
}

\appendix
\renewcommand{\thesubsection}{\thesection\arabic{subsection}}
\section{Discussion on submodularity for sets of assets}
\label{sec:6}

We briefly discuss another form of submodularity, formulated on a different lattice, and its relation to our setting.
As noted by \citet{GG17} and \citet{B22}, submodularity is the property of diminishing marginal risk: the marginal change in risk from adding an asset to a portfolio decreases with the addition of another asset. Mathematically, one considers functions $v$ that take subsets of $[n]:=\{1,\dots,n\}$ as the input variable, which represent the set of assets included in a portfolio. 
Such a function $v$
is submodular if it satisfies
\begin{equation}
\label{eq:v-submo1}
v(S\cup T) + v(S\cap T) \le v(T) + v(S), \qquad S, T \subseteq [n].
\end{equation}
This is submodularity of $v$ formulated on the lattice $ (2^{[n]}, \cup,\cap)$. 
It is well known that \eqref{eq:v-submo1} is
equivalent to 
\begin{equation}
\label{eq:v-submo}
v(S\cup{\{z\}}) - v(S) \ge v(T\cup{\{z\}}) - v(T), \qquad S\subseteq T \subseteq [n],  \qquad \forall z \notin T.
\end{equation}
For a given risk measure $\rho$ and a random vector $(X_i)_{i\in [n]}$,
write $v(S) = \rho(\sum_{i \in S} X_i)$ for $S\subseteq [n]$.
The submodularity of $v$ as defined in \eqref{eq:v-submo} means that,  for $S \subseteq T \subseteq [n]$ and any $j \notin T$,  we have
$$
\;\rho\!\left(\sum_{i\in S} X_i + X_j\right)
\;-\;
\rho\!\left(\sum_{i\in S} X_i\right)
\;\ge\;
\rho\!\left(\sum_{i\in T} X_i + X_j\right)
\;-\;
\rho\!\left(\sum_{i\in T} X_i\right).\;
$$
Rearranging terms, it is
\begin{equation}
    \label{eq:rho-submo}
\rho \left(\sum_{i\in T} X_i + X_j\right)  +
\rho \left(\sum_{i\in S} X_i\right)
\le 
 \rho \left(\sum_{i\in S} X_i + X_j\right)
 +
\rho \left(\sum_{i\in T} X_i\right).
\end{equation} 
The property in \eqref{eq:rho-submo}
is different from our definition of submodularity, but there is a connection. 
Suppose that $(X_i)_{i\in [n]}$
has nonnegative components and
$\p(X_j>0,\sum_{i\in T}X_i>0)=0$.
Write $X=\sum_{i\in S}X_i+X_j$
and  $Y=\sum_{i\in T}X_i$,
then we have $X\vee Y=\sum_{i\in T}X_i +X_j $
and $X\wedge Y=\sum_{i\in S}X_i$. Therefore, \eqref{eq:rho-submo} becomes
$$
\rho(X\vee Y) + \rho(X\wedge Y) \le \rho(X) + \rho(Y),
$$
thus our definition of submodularity for this particular choice of $(X,Y)$. In general, the two notions of submodularity do not imply each other, and they carry different economic meanings.

\section{Some omitted proofs}
\label{app:proofs}

We present the proofs of the two omitted cases in Theorems \ref{thm:linear-iff} and \ref{thm:AES-submodular-characterization}.

\subsection{Theorem \ref{thm:linear-iff} without differentiability}
\label{app:SR-proof}

We now extend Theorem~\ref{thm:linear-iff} to loss functions $\ell$ that are not necessarily twice differentiable. Denote by $S=\ell_+'$ the right derivative of $\ell$,
$$
S(u)=\lim_{h\downarrow 0}\frac{\ell(u+h)-\ell(u)}{h},
$$
which exists and is strictly positive for every $u$ by convexity of $\ell$.

Since $S$ is positive and increasing, $\log S$ is well-defined and increasing. To handle points where $\log S$ is not differentiable, we work with one-sided Dini derivatives. For an extended-real function $f:\R\to[-\infty,\infty]$, define
\begin{align*}
D^+ f(x)&=\limsup_{h\downarrow 0}\frac{f(x+h)-f(x)}{h},\\
D_-^{\mathrm{up}} f(x)&=\limsup_{h\downarrow 0}\frac{f(x)-f(x-h)}{h},\\
D_-^{\mathrm{low}} f(x)&=\liminf_{h\downarrow 0}\frac{f(x)-f(x-h)}{h},
\end{align*}
and set
$$
R_+(x)=D^+(\log S)(x),\qquad R_-^{\mathrm{up}}(x)=D_-^{\mathrm{up}}(\log S)(x),\qquad R_-^{\mathrm{low}}(x)=D_-^{\mathrm{low}}(\log S)(x).
$$
Define
$$
R(x)=\min\{R_+(x),\,R_-^{\mathrm{low}}(x)\}\in[0,\infty].
$$
The choice of $R_-^{\mathrm{low}}$ (rather than $R_-^{\mathrm{up}}$) is dictated by the lower bound in Lemma~\ref{lem:dini-weight} below. Since $\log S$ is increasing, $R(x)\ge 0$ for all $x$; moreover, $R(x)=\infty$ at any jump of $\log S$. Set
$$
L=\inf_{x\in\R}R(x),\qquad h(x)=S(x)\big(R(x)-2L\big).
$$
\begin{theorem}\label{thm:linear-iff-app}
For a strictly increasing loss function $\ell$, the shortfall risk measure $\rho_\ell$ is submodular if and only if there exists $\lambda\in\R$ such that
\begin{equation}\label{eq:LD-app}
h(x)\le \lambda\,\ell(x),\qquad x\in\R.
\end{equation}
\end{theorem}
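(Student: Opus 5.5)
The plan is to follow the $C^2$ proof of Theorem~\ref{thm:linear-iff}, replacing second derivatives of the implicit function $m$ by finite differences. On a simple subspace with atoms of probabilities $\pi_k$, submodularity of $\rho_\ell$ is equivalent to submodularity of $m(\bx)$ on $\R^n$. That is, for all $i\ne j$ and all $\epsilon,\delta>0$,
$$
m(\bx+\epsilon e_i+\delta e_j)-m(\bx+\delta e_j)\le m(\bx+\epsilon e_i)-m(\bx).
$$
By Lemma~\ref{lem:bounded-extension} it suffices to check this on simple subspaces. The key tool is a first-order expansion of $m$ along a coordinate ray, which is the role of Lemma~\ref{lem:dini-weight}. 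Since $m$ is Lipschitz and increasing in each coordinate, $t\mapsto m(\bx+te_i)$ has right derivative $\pi_iS(y_i)/\sum_k\pi_k S(y_k)$, using right derivatives of $\ell$ and the implicit equation. Hence the cross difference equals the integral over $t\in[0,\epsilon]$ of the change in this weight when $x_j$ is raised by $\delta$. So it suffices to show that the weight $w_i$ is nonincreasing in $x_j$ for $j\ne i$. Writing $w_i=\pi_iS(y_i)/T$, we get
$$
\log w_i=\log\pi_i+\log S(x_i-m)-\log T,
$$
and raising $x_j$ raises $m$ and changes $T$.

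\textbf{Sufficiency.} Differentiating $\log w_i$ in $x_j$ with Dini derivatives reproduces formula~\eqref{eq:cross-weighted}, with $R(y_i)$, $R(y_j)$ and $\sum_k w_kR(y_k)$ replaced by appropriate one-sided Dini derivatives of $\log S$.
\begin{itemize}
\item $y_i$ decreases, so a lower left derivative of $\log S$ at $y_i$ enters.
\item $y_j$ increases, so a right derivative at $y_j$ enters.
\item The $y_k$ for $k\ne j$ decrease.
\end{itemize}
The lower bound of Lemma~\ref{lem:dini-weight} should give an upper estimate on the Dini derivative of $t\mapsto\log w_i$ of the form
$$
-w_j\Bigl(R(y_i)+R(y_j)-\sum_k w_kR^{\ast}(y_k)\Bigr),
$$
where $R^\ast$ is an upper Dini quantity. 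Then, as in the smooth proof, \eqref{eq:LD-app} summed with weights $\pi_k$ together with $\sum\pi_k\ell(y_k)=0$ gives $\sum_k w_k(R(y_k)-2L)\le 0$. Combined with $R(y_i)+R(y_j)\ge 2L$, this makes the Dini derivative nonpositive, so $w_i$ is monotone by the standard Dini monotonicity theorem. Infinite values, at jumps of $S$, need separate handling. If $R(y_k)=\infty$ then $h(y_k)=\infty$, and \eqref{eq:LD-app} forces this not to happen. So under \eqref{eq:LD-app}, $S$ is continuous and $R<\infty$ everywhere, which simplifies matters.

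\textbf{Main obstacle.} The hard step is the mismatch between the quantity $R=\min\{R_+,R_-^{\mathrm{low}}\}$ in the hypothesis and the upper Dini quantities that appear in $\sum_k w_kR^\ast(y_k)$. I would first try to avoid pointwise derivatives entirely. The plan is to integrate along the path and use that $\log S$ is increasing, so it is differentiable a.e. with $R_+=R_-=(\log S)'$ there, plus a nonnegative singular part. Perturbing $\bx$ generically, by continuity of $m$, makes the $y_k$ a.e.-good points for almost every $t$. The singular part would then be controlled by monotonicity of $\log S$ together with $L$ being an infimum over all points.

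\textbf{Necessity.} Here I would mimic the smooth argument with finite differences. On uniform atom spaces, submodularity gives that $w_i$ is nonincreasing in $x_j$. Taking one-sided Dini limits at the configuration $\bz^{(N)}$, which has two copies of $v$, gives
$$
\sum_k S(z_k)R^{\ast}(z_k)\le\bigl(R_+(v)+R_-^{\mathrm{low}}(v)\bigr)K.
$$
The direction of perturbation is chosen so that the lower left Dini derivative appears, which is exactly why $R$ uses $R_-^{\mathrm{low}}$. From here the balanced-sum inequality~\eqref{eq:balanced} and the averaging argument with $a,b,c_N$ go through verbatim. They need only continuity of $\ell$, not of $h$. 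For the $h(c_N)/N\to0$ step, I would replace the continuity of $h$ by local boundedness of $h$ on compacts, obtained from the already proven inequality with $N$ blocks, or choose $c_N$ from a fixed finite set.
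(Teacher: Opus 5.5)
The step that fails is your claim in the sufficiency part: ``if $R(y_k)=\infty$ then $h(y_k)=\infty$, so \eqref{eq:LD-app} excludes jumps of $S$.'' This does not hold with $R=\min\{R_+,R_-^{\mathrm{low}}\}$. Since $S=\ell'_+$ is right-continuous, a jump at $x_0$ forces only $R_-^{\mathrm{low}}(x_0)=\infty$. The quantity $R_+(x_0)$, and hence $R(x_0)$, is typically finite.

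Concretely, take the expectile loss $\ell(x)=x+a\max\{x,0\}$ with $a>0$. Then $R\equiv 0$, $L=0$ and $h\equiv 0$, so \eqref{eq:LD-app} holds with $\lambda=0$. On three atoms of mass $1/3$, take $\bx=(-2,-1,0)$ and $\mathbf y=(-1,-1,-1)$. One finds $m(\bx)=-3/(3+a)$, $m(\mathbf y)=-1$, $m(\bx\vee\mathbf y)=-2/(3+a)$ and $m(\bx\wedge\mathbf y)=-(4+2a)/(3+2a)$. Hence $m(\bx)+m(\mathbf y)-m(\bx\vee\mathbf y)-m(\bx\wedge\mathbf y)=-a/((3+a)(3+2a))<0$. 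This is exactly the paper's Step~1 computation, and it agrees with Theorem~\ref{th:distortion}.

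So the ``if'' direction is false as stated, and no argument can close this step. The paper's sufficiency proof makes the identical assertion (``the inequality forces $R(x)<\infty$ for every $x$, so $\log S$ has no jumps''). The defect is therefore in the statement. One must set $R:=+\infty$ at discontinuities of $S$, which is what the paper's remark ``$R(x)=\infty$ at any jump'' intends. With that convention your reduction to continuous $S$ is valid, and your overall architecture coincides with the paper's: simple subspaces, right gradient $w_j$, Dini lemma at two copies of $v$, balanced sums, averaging.

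Even after that repair, several steps of your plan do not work as written.

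\emph{Sufficiency.} A lower bound from Lemma~\ref{lem:dini-weight} cannot give the upper estimate you need. Also, the singular part of $\mathrm{d}\log S$ is not controlled by monotonicity. A singular drop of $S$ at $y_k$, $k\neq i,j$, lowers $T$ and so raises $w_i$ as $x_j$ increases, which is the wrong sign. The missing idea is absolute continuity:
\begin{itemize}
\item Once $S$ is continuous and $R<\infty$ everywhere, at each point some lower one-sided Dini derivative of $\log S$ is finite.
\item A rising-sun (growth lemma) argument then shows that $\mathrm{d}\log S$ charges no Lebesgue-null set, so $\log S$ is locally absolutely continuous.
\item Hence $t\mapsto w_i(\bx+te_j)$ is absolutely continuous with a.e.\ derivative given by \eqref{eq:phi-derivative}, and a.e.\ nonpositivity suffices.
\end{itemize}
The paper passes over this same point when it invokes M\"uller--Stoyan for an a.e.\ inequality.

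\emph{Necessity.} You have no analogue of the paper's Step~1, which proves from submodularity that $S$ is continuous. It uses the three-point configuration above with $s_\pm=S(x_0\pm)$. Without it, the right-gradient formula $\nabla_j^+m=w_j$ is false, because for $k\neq j$ the left derivative $S(y_k-)$ enters. Lemma~\ref{lem:dini-weight} then does not apply, and jump points are never treated.

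\emph{Smaller points.}
\begin{itemize}
\item The bound $R_+(v)+R_-^{\mathrm{low}}(v)\le 2L+2\varepsilon$ requires $v$ to be a differentiability point of $S$ with $(\log S)'(v)\le L+\varepsilon$. Such points exist because $\log S(x_0+h)-\log S(x_0)\ge\int_{x_0}^{x_0+h}(\log S)'$, and symmetrically on the left, but this must be said.
\item $c_N$ cannot be taken from a fixed finite set when $\theta$ is irrational.
\item Your other option does work: local boundedness of $h$ from the balanced-sum inequality itself, namely $h(x)\le -r\,h(y)\le 2rL\,S(y)$ when $\ell(x)+r\ell(y)=0$. It is simpler than the paper's Dirichlet-approximation choice of $(c_N,d_N)$, provided the balanced-sum inequality is available for arbitrary tuples.
\end{itemize}
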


\begin{proof}
We follow the same route as in the differentiable case: prove sufficiency on each simple subspace and then invoke Lemma \ref{lem:bounded-extension}.

\textbf{Sufficiency.}
Assume \eqref{eq:LD-app} holds. It is enough to show that $\rho_\ell$ is submodular on every simple subspace.

Let $\mathcal F'=\sigma(A_1,\dots,A_n)$ be a simple sub-$\sigma$-algebra, where $A_1,\dots,A_n$ are its atoms and
$$
\pi_k=\mathbb P(A_k)>0,\qquad \sum_{k=1}^n \pi_k=1.
$$
For
$$
X=\sum_{k=1}^n x_k\mathbf 1_{A_k},
$$
identify $X$ with $\bx=(x_1,\dots,x_n)\in\R^n$ and write $m(\bx)=\rho_\ell(X)$. Since $\rho_\ell$ is monotone and cash invariant, $m$ is $1$-Lipschitz in the sup norm. The defining equation $\E[\ell(X-\rho_\ell(X))]=0$ and $\ell(0)=0$ give
\begin{equation}\label{eq:m-satisfy-weighted}
\sum_{k=1}^{n} \pi_k\,\ell(x_k-m(\bx))=0,\qquad \bx\in \R^n.
\end{equation}
Since $\ell$ is finite and $S>0$, the inequality forces $R(x)<\infty$ for every $x$, so $\log S$ has no jumps and $S$ is continuous.

Differentiating \eqref{eq:m-satisfy-weighted} from the right in direction $e_j$ and solving for $\nabla^+_j m$ yields the right-gradient formula
\begin{equation}\label{eq:gradient}
\nabla^+_j m(\bx)=w_j(\bx),\qquad j\in[n],
\end{equation}
where
$$
T(\bx)=\sum_{k=1}^{n} \pi_k S(x_k-m(\bx)),\qquad
w_j(\bx)=\frac{\pi_j S(x_j-m(\bx))}{T(\bx)}.
$$

Fix $i\ne j$ and write $y_k=x_k-m(\bx)$. Since $S$ is monotone, it is differentiable almost everywhere, with $S'(y_k)=S(y_k)R(y_k)$ at differentiability points. Differentiating $w_i$ in direction $e_j$ via \eqref{eq:gradient} gives, for almost every $\bx$,
\begin{equation}\label{eq:phi-derivative}
\nabla^+_j\nabla^+_i m(\bx)
=-w_iw_j\Big(R(y_i)+R(y_j)-\sum_{k=1}^n w_k R(y_k)\Big).
\end{equation}
Evaluating \eqref{eq:LD-app} at the points $y_k$ and summing with weights $\pi_k$ yields
$$
\sum_{k=1}^n \pi_k h(y_k)\le \lambda\sum_{k=1}^n \pi_k \ell(y_k)=0,
$$
that is,
$$
\sum_{k=1}^n w_k R(y_k)\le 2L.
$$
Since $R(y_i)+R(y_j)\ge 2L$, the parenthesized expression in \eqref{eq:phi-derivative} is nonnegative. Therefore,
$$
\nabla_j^+\nabla_i^+m(\bx)\le 0 \qquad \text{for almost every }\bx.
$$
By Theorem~3.9.3(a) in \citet{MS02}, this implies that $m$ is submodular on $\R^n$, and hence $\rho_\ell$ is submodular on $L^\infty(\Omega,\mathcal F',\mathbb P)$.

Since $\mathcal F'$ was arbitrary, $\rho_\ell$ is submodular on every simple subspace. Lemma \ref{lem:bounded-extension} now yields submodularity on all of $L^\infty$.

The necessity proof relies on the following lemma, which provides a one-sided analogue of \eqref{eq:phi-derivative} without differentiability.

As in the differentiable case, for the necessity direction we work on a uniform $n$-atom space $([n],2^{[n]},\mathbb P)$ with equal atom weight $1/n$. We identify each random variable $X$ with the vector $\bx=(X(1),\dots,X(n))\in\R^n$ and write $m(\bx)=\rho_\ell(X)$.
 In this setting, the defining equation becomes
\begin{equation}\label{eq:m-satisfy}
\sum_{k=1}^{n} \ell(x_k-m(\bx))=0,\qquad \bx\in \R^n.
\end{equation}

\begin{lemma}\label{lem:dini-weight}
Assume $S:\R\to(0,\infty)$ is continuous and increasing. Let $\bz\in\R^n$ satisfy $m(\bz)=0$, fix $i\neq j$, and set
$$
K=\sum_{k=1}^n S(z_k),\qquad w_k=w_k(\bz)=\frac{S(z_k)}{K}.
$$
Assume $R_+(z_i)$ is finite, and that $R_-^{\mathrm{up}}(z_k)$ and $R_-^{\mathrm{low}}(z_k)$ are finite for every $k\neq i$. For $t\ge 0$, write $\bz(t)=\bz+t e_i$ and $w_k(t)=w_k(\bz(t))$. Then
\begin{equation}\label{eq:dini-weight}
\liminf_{t\downarrow0}\frac{w_j(t)-w_j}{t}
\ge -w_iw_j\Big(
R_+(z_i)+R_-^{\mathrm{up}}(z_j)-w_iR_+(z_i)
-\sum_{k\neq i} w_k\,R_-^{\mathrm{low}}(z_k)\Big).
\end{equation}
\end{lemma}

\begin{proof}
Since $m$ is $1$-Lipschitz and $m(\bz)=0$, we have $0\le m(t)=m(\bz(t))\le t$ along the path $\bz(t)$, and the right-gradient formula gives $m(t)/t\to w_i$ as $t\downarrow 0$.

Set $\delta_k(t)=\log S(z_k(t)-m(t))-\log S(z_k)$. Then
$$
\log w_j(t)-\log w_j
=\delta_j(t)-\log\!\Big(\sum_{k=1}^n w_k\,e^{\delta_k(t)}\Big).
$$
Using $e^x-1\ge x$ in $w_j(t)=w_j\exp(\log w_j(t)-\log w_j)$ gives
\begin{equation}\label{eq:stepA}
\frac{w_j(t)-w_j}{t}\ge w_j\bigg(
\frac{\delta_j(t)}{t}-\frac{1}{t}\log\!\Big(\sum_{k=1}^n w_k\,e^{\delta_k(t)}\Big)\bigg).
\end{equation}

For the first term: since $z_j(t)=z_j$ and $m(t)/t\to w_i$,
$$
\liminf_{t\downarrow0}\frac{\delta_j(t)}{t}
=\liminf_{t\downarrow0}\frac{\log S(z_j-m(t))-\log S(z_j)}{t}\ge -w_i R_-^{\mathrm{up}}(z_j).
$$

For the second term: the finiteness assumptions on the Dini derivatives imply the required local first-order control near $t=0$. Hence there exist constants $C>0$ and $t_0>0$ such that
$$
\log\!\Big(\sum_{k=1}^n w_k e^{\delta_k(t)}\Big)\le \sum_{k=1}^n w_k\delta_k(t)+Ct^2,\qquad 0<t<t_0,
$$
which yields
$$
\limsup_{t\downarrow0}\frac{1}{t}\log\!\Big(\sum_{k=1}^n w_k\,e^{\delta_k(t)}\Big)
\le \sum_{k=1}^n w_k\limsup_{t\downarrow0}\frac{\delta_k(t)}{t}.
$$
For $k=i$, since $z_i(t)=z_i+t$ and $m(t)/t\to w_i$, we have $\limsup_{t\downarrow 0}\delta_i(t)/t\le (1-w_i)R_+(z_i)$. For $k\neq i$, since $z_k(t)=z_k$, we have $\limsup_{t\downarrow 0}\delta_k(t)/t\le -w_i\,R_-^{\mathrm{low}}(z_k)$. Therefore
$$
\limsup_{t\downarrow0}\frac{1}{t}\log\!\Big(\sum_{k=1}^n w_k\,e^{\delta_k(t)}\Big)
\le w_i(1-w_i)R_+(z_i)-w_i\sum_{k\neq i}w_k\,R_-^{\mathrm{low}}(z_k).
$$

Substituting both estimates into \eqref{eq:stepA} and rearranging gives \eqref{eq:dini-weight}.
\end{proof}

\textbf{Necessity.}
Assume $\rho_\ell$ is submodular, so $m$ is submodular on $\R^n$ for every $n\ge 3$. The proof proceeds in three steps.

\emph{Step~1: $S$ is continuous.}
Suppose for contradiction that $S$ has a jump at some point, which by cash invariance we may take to be $0$:
$$
s_-=S(0^-)< S(0^+)=s_+.
$$
Consider $\bx=(-2h,-h,0)$ and $\mathbf{y}=(-h,-h,-h)=-h\mathbf{1}$ in $\R^3$, so that
$$
\bx\vee\mathbf{y}=(-h,-h,0),\qquad \bx\wedge\mathbf{y}=(-2h,-h,-h).
$$
Write $m(\bx)=-\alpha_h h$, $m(\bx\vee\mathbf{y})=-\beta_h h$, $m(\bx\wedge\mathbf{y})=-\gamma_h h$, and note $m(\mathbf{y})=-h$. Dividing the defining equation \eqref{eq:m-satisfy} by $h$ and sending $h\downarrow 0$, using the left and right derivatives $s_-$ and $s_+$ of $\ell$ at $0$, yields limiting values
$$
\alpha=\frac{3s_-}{2s_-+s_+},\qquad
\beta=\frac{2s_-}{2s_-+s_+},\qquad
\gamma=\frac{2(s_-+s_+)}{s_-+2s_+}.
$$
The submodularity deficit is
$$
\Delta_h= m(\bx)+m(\mathbf{y})-m(\bx\vee\mathbf{y})-m(\bx\wedge\mathbf{y})
= (\beta_h+\gamma_h-\alpha_h-1)\,h,
$$
and passing to the limit gives
$$
\frac{\Delta_h}{h}\to
\beta+\gamma-\alpha-1
=\frac{s_-(s_--s_+)}{(s_-+2s_+)(2s_-+s_+)}<0,
$$
contradicting submodularity. Hence $S$ is continuous on $\R$.

Since $S$ is continuous, positive, and increasing, $\log S$ is continuous and monotone. By Lebesgue's differentiation theorem, $R(x)<\infty$ for almost every $x$.

\emph{Step~2: A balanced-sum inequality.}
We show that
\begin{equation}\label{eq:balanced-app}
\sum_{k=1}^r h(x_k)\le 0 \qquad\text{whenever}\quad \sum_{k=1}^r \ell(x_k)=0.
\end{equation}

Fix $\varepsilon>0$ and choose $v$ such that $R(v)\le L+\varepsilon$ and $S$ is differentiable at $v$; this is possible since the set of differentiability points has full measure and $L=\inf R$.
Pick $p\in\R$ with $\ell(p)\neq 0$ and $\ell(p)\ell(v)<0$, and let
$M=\lfloor -2\ell(v)/\ell(p)\rfloor$,
so that the residual $d=-2\ell(v)-M\ell(p)$ satisfies $d\,\ell(p)\ge 0$ and $|d|<|\ell(p)|$. By continuity and strict monotonicity of $\ell$, there is a unique $c$ with $\ell(c)=d$.

Now fix $x_1,\dots,x_r$ with $\sum_{k=1}^r \ell(x_k)=0$. For each $N\in\mathbb N$, form the vector $\bz^{(N)}\in\R^n$ (with $n=M+Nr+3$) consisting of two copies of $v$, $M$ copies of $p$, one copy of $c$, and $N$ copies of $(x_1,\dots,x_r)$. By construction, $\sum_k \ell(z_k^{(N)})=0$, hence $m(\bz^{(N)})=0$.

Apply submodularity to the pair of $v$-entries: for $t,s>0$,
$$
m(\bz^{(N)})+m(\bz^{(N)}+te_i+se_j)\le m(\bz^{(N)}+te_i)+m(\bz^{(N)}+se_j).
$$
Dividing by $s$, sending $s\downarrow 0$ (using the right-gradient formula), then dividing by $t$ and sending $t\downarrow 0$ yields
$$
\liminf_{t\downarrow0}\frac{w_j(\bz^{(N)}+te_i)-w_j(\bz^{(N)})}{t}\le 0.
$$
Lemma~\ref{lem:dini-weight} provides a matching lower bound. Since $z_i^{(N)}=z_j^{(N)}=v$ and $S$ is differentiable at $v$, all one-sided Dini derivatives at $v$ coincide with $R(v)$. (For coordinates $k\neq i$, the finiteness assumptions of the lemma hold after an arbitrarily small perturbation, which we then let vanish.) Combining the upper and lower bounds gives
$$
\sum_{k=1}^n w_k(\bz^{(N)}) R(z_k^{(N)}) \le 2R(v).
$$
Multiplying by $K=\sum_k S(z_k^{(N)})$ and isolating the $N$ repeated blocks,
$$
N\sum_{k=1}^r S(x_k)\big(R(x_k)-2R(v)\big)\le 2S(v)R(v) - A,
$$
where $A=MS(p)(R(p)-2R(v))+S(c)(R(c)-2R(v))$ is independent of $N$. Dividing by $N$ and letting $N\to\infty$ yields $\sum_{k=1}^r S(x_k)(R(x_k)-2R(v))\le 0$. Since $R(v)\le L+\varepsilon$,
$$
\sum_{k=1}^r h(x_k)
=\sum_{k=1}^r S(x_k)\big(R(x_k)-2L\big)
\le 2\varepsilon\sum_{k=1}^r S(x_k),
$$
and sending $\varepsilon\downarrow 0$ gives \eqref{eq:balanced-app}.

\emph{Step~3: From the balanced-sum inequality to \eqref{eq:LD-app}.}
Define $\phi(x)=h(x)/\ell(x)$ for $\ell(x)\neq 0$ and set
$$
\alpha^+ = \sup_{\ell(x)>0}\phi(x),
\qquad
\alpha^- = \inf_{\ell(x)<0}\phi(x).
$$
We claim $\alpha^+\le \alpha^-$, both finite; any $\lambda\in[\alpha^+,\alpha^-]$ then satisfies \eqref{eq:LD-app} for all $x$ with $\ell(x)\neq 0$, and \eqref{eq:balanced-app} applied to the singleton $\{0\}$ gives $h(0)\le 0$, which is \eqref{eq:LD-app} at $x=0$.

Fix $a,b\in\R$ with $\ell(a)<0<\ell(b)$ and set $\theta=\ell(b)/(\ell(b)-\ell(a))\in(0,1)$, so that $\theta\ell(a)+(1-\theta)\ell(b)=0$.

By Dirichlet's approximation theorem (or directly, if $\theta$ is rational), there exist infinitely many pairs $(r_N,N)$ with $1\le r_N\le N-1$ and $|\theta-r_N/N|<1/N^2$. Set $s_N=N-r_N$ and
$$
\delta_N = -(r_N\ell(a)+s_N\ell(b))=(\ell(b)-\ell(a))(r_N-\theta N),
$$
so $|\delta_N|\le |\ell(b)-\ell(a)|/N\to 0$.

Fix $A>0$ with $[-2A,2A]\subset\ell((-1,1))$ and set $I=\ell^{-1}([-2A,2A])\subset(-1,1)$, $I_0=\ell^{-1}([-A,A])\subset I$. For $N$ large enough that $|\delta_N|\le A$, define
$$
T_N(x)=\ell^{-1}(\delta_N-\ell(x)),\qquad x\in I_0,
$$
so that $\ell(x)+\ell(T_N(x))=\delta_N$ and $T_N(I_0)\subset I$. Let $M_N=\sqrt{N}$ and
$$
H_N=\{x\in I:h(x)\text{ is finite and }|h(x)|\le M_N\}.
$$
Since $h$ is finite almost everywhere and $M_N\to\infty$, we have $|I\setminus H_N|\to 0$.

Since $S$ is bounded between positive constants $m\le M$ on the compact interval $I$, the maps $\ell$ and $\ell^{-1}$ are bi-Lipschitz on $I$ and $\ell(I)$, hence $T_N^{-1}$ is $K$-Lipschitz with $K=M/m$. Therefore $|T_N^{-1}(I\setminus H_N)|\le K|I\setminus H_N|$, and
$$
|I_0\setminus(H_N\cap T_N^{-1}(H_N))|
\le (1+K)|I\setminus H_N|\to 0.
$$
For large $N$, the set $H_N\cap T_N^{-1}(H_N)\cap I_0$ has positive measure. Pick $c_N$ in this set and set $d_N=T_N(c_N)$. Then $c_N,d_N\in H_N$, so $|h(c_N)|,|h(d_N)|\le\sqrt{N}$, and
$$
r_N\ell(a)+s_N\ell(b)+\ell(c_N)+\ell(d_N)=0.
$$
Applying \eqref{eq:balanced-app} to this multiset ($r_N$ copies of $a$, $s_N$ copies of $b$, and the pair $c_N,d_N$), dividing by $N$, and letting $N\to\infty$ (the correction $(h(c_N)+h(d_N))/N\to 0$) gives
$$
\theta\, h(a)+(1-\theta)\,h(b)\le 0,
$$
which rearranges to $\phi(b)\le \phi(a)$, that is, $h(b)/\ell(b)\le h(a)/\ell(a)$. Taking the supremum over $b$ and infimum over $a$ yields $\alpha^+\le\alpha^-$. Finiteness follows by fixing one of $a$ or $b$. This completes the proof.
\end{proof}

\subsection[Theorem \ref{thm:AES-submodular-characterization} with g(1-)<infty]{Theorem \ref{thm:AES-submodular-characterization} with $g(1-)<\infty$}
\label{app:AES-proof}

It remains to prove the ``if'' statement under the assumption $g(1-)<\infty$. 

\begin{proof}[Proof of Theorem \ref{thm:AES-submodular-characterization} with $g(1-)<\infty$]
Write 
$\rho=\ES^g$.   For Cases 1 and 2 below, we need to  construct $X$ and $Y$ as follows.
Let $a>0$, $q\in[0,1]$, and $b\in\R$ be three numbers, which will be
determined later. Let $U$ be uniformly distributed on $[0,1]$, and let
$V=U\id_{\{U\ge q\}}+(q-U)\id_{\{U<q\}}$, which is also uniformly distributed
on $[0,1]$. Define the random variables
$$
X=2aU+b-a,\qquad Y=2aV+b-a.
$$
It is easy to compute that $\ES_p(X)=\ES_p(Y)=ap+b$ for $p\in[0,1]$.
Moreover, for $p\in[q,1]$, we have
$$
\VaR_p(X\vee Y)=\VaR_p(X)=2ap+b-a,
$$
and for $p\in[0,q)$, we have
$$
\VaR_p(X\vee Y)=\VaR_{(p+q)/2}(X)=a(p+q)+b-a.
$$
Hence, for $p\in[q,1]$, we have $\ES_p(X\vee Y)=ap+b$, and for $p\in[0,q)$,
\begin{align*}
\ES_p(X\vee Y)
&= \frac{a}{1-p}\left(\int_p^q (r+q)\,\d r + \int_q^1 2r\,\d r \right) + (b-a)\\
&= ap+b+\frac{a(q-p)^2}{2(1-p)}.
\end{align*}
Hence, for $p\in[0,q)$,
 $$
\ES_p(X\vee Y)-\ES_p(X)
 = \frac{a (q-p)^2 }{2 (1-p)} >0.  $$
Also, $X\wedge Y=X=Y$ on $\{U\ge q\}$, while
$$
X\wedge Y=2a\min\{U,q-U\}+b-a\le aq+b-a
$$
on $\{U<q\}$. Since $X\wedge Y=2aU+b-a\ge aq+b-a$ on $\{U\ge q\}$, the
largest $1-q$ fraction of values of $X\wedge Y$ is carried by $\{U\ge q\}$,
and therefore
$$
\ES_q(X\wedge Y)=\ES_q(X)=aq+b.
$$

\paragraph{Case 1.} Suppose that $g$ is not convex on $[0,1)$.
Since $p\mapsto \ES_p(X)$ is continuous for every $X\in L^\infty$, we may
assume without loss of generality that $g$ is lower semicontinuous.

We now choose the parameters $a,b,q$ in the construction above and show
that $\rho$ is not submodular. Let
$$
p^*=\sup\{p\in [0,1]: g(p)<\infty\}
$$
and let $g^*:[0,1]\to (-\infty,\infty]$ be the largest
$(-\infty,\infty]$-valued convex function on $[0,1]$ dominated by $g$.
Clearly, $g^*(p)=\infty$ for $p\in (p^*,1]$.

Suppose that $g$ is not convex on $[0,p^*]$. There exist distinct points
$p_1,p_2\in [0,p^*]$ such that $g^*(p_1)=g(p_1)$, $g^*(p_2)=g(p_2)$, and
$g^*$ is linear and strictly increasing on $[p_1,p_2]$. Choose $q=p_2$
and let $a,b$ be such that the
affine function $p\mapsto ap+b$ coincides with $g^*$ on
$[p_1,p_2]$, and clearly $g^*\ge ap+b$ due to convexity. By using
$$
aq+b=\ES_q(X)=\ES_q(Y)=\ES_q(X\wedge Y)=g(q),
$$
we have
\begin{align*}
\rho(X) &= \rho(Y)=\sup_{p\in [0,1]} \left\{ ap+b-g(p) \right\}= 0,\\
\rho(X\wedge Y)&\ge \ES_q(X\wedge Y)-g(q)=0,\\
\rho(X \vee Y)&\ge \ES_{p_1}(X\vee Y)-g(p_1)=\frac{a (q-p_1)^2 }{2 (1-p_1)} >0.
\end{align*}
Therefore,
$$
\rho(X \vee Y)+ \rho(X \wedge Y) > 0= \rho(X) + \rho(Y),
$$
showing that $\rho$ is not submodular, a contradiction.

\paragraph{Case 2.} Assume that $g$ is convex on $[0,1)$. We show that $g$ is constant on $[0,1)$. Suppose for
contradiction that $g$ is not constant on $[0,1)$. Since $g$ is
increasing, there exists $q_0<1$ such that $g(q_0)>0$.

Let $m=g'_+$ be the right derivative of $g$ on $(0,1)$. Since $g$ is
convex on $[0,1)$, the function $m$ is increasing on $(0,1)$. Moreover,
$$
g(x)=\int_0^x m(t)\,\d t,\qquad x\in[0,1),
$$
and therefore $m$ is integrable on $(0,1)$ because $g(1-)<\infty$.
Since $g(q_0)>0$,
$$
0<g(q_0)=\int_0^{q_0} m(t)\,\d t.
$$
Hence there exists $x_0\in(0,q_0)$ such that $m(x_0)>0$. Since $m$ is
increasing, we have $m(x)\ge m(x_0)>0$ for every $x\in[x_0,1)$. Moreover,
since $m$ is increasing, it is differentiable almost everywhere on $(x_0,1)$.

We claim that there exists a point $q\in(x_0,1)$ at which $m$ is
differentiable and
$$
m'(q)<\dfrac{m(q)}{1-q}.
$$

Indeed, suppose for contradiction that
$$
m'(x)\ge \frac{m(x)}{1-x}
$$
for almost every $x\in(x_0,1)$. Since $m$ is increasing, its distributional
derivative $\d m$ is a nonnegative measure on $(x_0,1)$. The above
inequality implies
$$
\d m\ge \frac{m(x)}{1-x}\,\d x
$$
as measures on $(x_0,1)$. Now define
$$
H(x)=(1-x)m(x),\qquad x\in(x_0,1).
$$
By the product rule for distributional derivatives,
$$
\d H=(1-x)\,\d m-m(x)\,\d x\ge 0.
$$
Hence $H$ is increasing on $(x_0,1)$. Since $H(x_0)>0$, it follows that
$$
(1-x)m(x)\ge H(x_0)>0,\qquad x\in(x_0,1),
$$
that is,
$$
m(x)\ge \frac{H(x_0)}{1-x},\qquad x\in(x_0,1).
$$
Therefore,
$$
\int_{x_0}^1 m(x)\,\d x
\ge
\int_{x_0}^1 \frac{H(x_0)}{1-x}\,\d x
=\infty,
$$
contradicting the fact that $m$ is integrable on $(0,1)$. The claim follows.

Fix such a point $q$, and set $a=m(q)>0$ and $b=g(q)-aq$. Since $m$ is
differentiable at $q$, it is continuous at $q$, and thus $g$ is
differentiable at $q$ with $g'(q)=a$. Hence $\ell(p)=ap+b$ for
$p\in[0,1]$ is the tangent line to $g$ at $q$, so $\ell(p)\le g(p)$ for
$p\in[0,1)$. Letting $p\uparrow1$, we obtain $\ell(1)\le g(1-)\le g(1)$,
and hence $\ell(p)\le g(p)$ for all $p\in[0,1]$.
Applying the construction above with this choice of $a,b,q$, we obtain
$$
\rho(X)=\rho(Y)=\sup_{p\in[0,1]}\{\ell(p)-g(p)\}=0.
$$
Thus
$$
\rho(X\wedge Y)\ge \ES_q(X\wedge Y)-g(q)=0.
$$

Now let $p=q-h$ with $h\downarrow0$. Since $m$ is differentiable at $q$,
$m(q-h)=a-m'(q)h+o(h)$. Using
$$
g(q)-g(q-h)=\int_{q-h}^q m(t)\,\d t,
$$
we obtain
$$
g(q-h)=g(q)-ah+\frac{m'(q)}{2}h^2+o(h^2).
$$
Since $g(q)=aq+b$ and $\ell(p)=ap+b$,
$$
g(q-h)-\ell(q-h)=\frac{m'(q)}{2}h^2+o(h^2),
$$
and therefore
$$
\begin{aligned}
\ES_{q-h}(X\vee Y)-g(q-h)
&=\ES_{q-h}(X)-g(q-h)+\frac{ah^2}{2(1-q+h)}\\
&=\ell(q-h)-g(q-h)+\frac{ah^2}{2(1-q+h)}\\
&=\frac12\left(\frac{a}{1-q+h}-m'(q)\right)h^2+o(h^2).
\end{aligned}
$$
Since $m'(q)<a/(1-q)$, the right-hand side is strictly positive for all
sufficiently small $h>0$. Thus $\rho(X\vee Y)>0$, and hence
$$
\rho(X\vee Y)+\rho(X\wedge Y) > 0 = \rho(X)+\rho(Y),
$$
contradicting submodularity. Therefore $g$ is constant on $[0,1)$, and for
every $X\in L^\infty$,
$$
\rho(X)=\sup_{p\in[0,1)}\ES_p(X)=\ES_1(X).
$$
Thus the conclusion holds with $p_0=1$. 
\end{proof}

\FloatBarrier
\section{Stock selection}\label{app:stocks}
In the table below we present the list of  stocks used in Section \ref{sec:7}.
\begin{table}[H]
\centering
\caption{Historical S\&P~500 stock selection by sector and market capitalization}
\label{tab:stock_selection}
{\scriptsize
\setlength{\tabcolsep}{3.5pt}
\renewcommand{\arraystretch}{0.92}
\begin{tabular}{@{}lllr@{}}
\toprule
GICS Sector & Ticker & Company & Mkt Cap (\$B) \\
\midrule
\multirow[t]{5}{*}{Consumer Discretionary} & AMZN  & Amazon & 632.1 \\
                                           & DIS   & Walt Disney & 184.8 \\
                                           & HD    & Home Depot & 152.5 \\
                                           & CMCSA & Comcast & 108.5 \\
                                           & MCD   & McDonald's & 101.3 \\
\addlinespace[1pt]
\multirow[t]{5}{*}{Consumer Staples}       & WMT   & Walmart & 230.9 \\
                                           & PG    & Procter \& Gamble & 172.1 \\
                                           & KO    & Coca-Cola & 153.5 \\
                                           & PEP   & PepsiCo & 126.3 \\
                                           & PM    & Philip Morris & 106.1 \\
\addlinespace[1pt]
\multirow[t]{5}{*}{Energy}                 & XOM   & ExxonMobil & 237.9 \\
                                           & CVX   & Chevron & 169.5 \\
                                           & SLB   & Schlumberger & 77.4 \\
                                           & OXY   & Occidental Petroleum & 57.8 \\
                                           & COP   & ConocoPhillips & 50.8 \\
\addlinespace[1pt]
\multirow[t]{5}{*}{Financials}             & BRK-B & Berkshire Hathaway & 426.6 \\
                                           & JPM   & JPMorgan Chase & 229.0 \\
                                           & BAC   & Bank of America & 177.4 \\
                                           & WFC   & Wells Fargo & 150.4 \\
                                           & C     & Citigroup & 99.4 \\
\addlinespace[1pt]
\multirow[t]{5}{*}{Health Care}            & JNJ   & Johnson \& Johnson & 270.6 \\
                                           & UNH   & UnitedHealth & 176.2 \\
                                           & PFE   & Pfizer & 135.6 \\
                                           & ABBV  & AbbVie & 121.4 \\
                                           & MRK   & Merck & 102.6 \\
\addlinespace[1pt]
\multirow[t]{5}{*}{Industrials}            & BA    & Boeing & 223.0 \\
                                           & GE    & General Electric & 82.4 \\
                                           & MMM   & 3M & 78.5 \\
                                           & HON   & Honeywell & 74.9 \\
                                           & UPS   & United Parcel Service & 73.0 \\
\addlinespace[1pt]
\multirow[t]{5}{*}{Information Technology} & GOOGL & Alphabet (Class A) & 644.0 \\
                                           & AAPL  & Apple & 599.6 \\
                                           & MSFT  & Microsoft & 583.3 \\
                                           & V     & Visa & 204.1 \\
                                           & INTC  & Intel & 199.4 \\
\addlinespace[1pt]
\multirow[t]{5}{*}{Materials}              & ECL   & Ecolab & 34.8 \\
                                           & SHW   & Sherwin-Williams & 31.8 \\
                                           & NEM   & Newmont & 31.6 \\
                                           & IFF   & International Flavors \& Fragrances & 31.6 \\
                                           & APD   & Air Products and Chemicals & 29.7 \\
\addlinespace[1pt]
\multirow[t]{5}{*}{Real Estate}            & AMT   & American Tower & 53.2 \\
                                           & PLD   & Prologis & 46.6 \\
                                           & EQIX  & Equinix & 37.5 \\
                                           & SPG   & Simon Property Group & 34.2 \\
                                           & O     & Realty Income & 34.0 \\
\addlinespace[1pt]
\multirow[t]{2}{*}{Telecommunication Services} & VZ    & Verizon Communications & 142.3 \\
                                               & T     & AT\&T & 114.1 \\
\addlinespace[1pt]
\multirow[t]{5}{*}{Utilities}              & PCG   & PG\&E & 111.1 \\
                                           & NEE   & NextEra Energy & 65.1 \\
                                           & D     & Dominion Energy & 49.1 \\
                                           & DUK   & Duke Energy & 47.0 \\
                                           & SO    & Southern Company & 39.9 \\
\bottomrule
\end{tabular}
}
\begin{minipage}{0.95\textwidth}
\footnotesize Approximate market capitalizations on January 2, 2018 are computed from historical shares outstanding and adjusted closing prices via the \texttt{yfinance} API; the constituent snapshot is compiled from publicly available historical records of index membership.
\end{minipage}
\end{table}
\end{document}